\title{Similarity-based cooperative equilibrium}
\pgfplotsset{compat=1.15}
\tikzset{node distance=2.5cm, %
every state/.style={ %
semithick%
},
initial text={}, %
double distance=2pt, %
every edge/.style={ %
draw,
->,
auto,
semithick}}
\DeclareMathOperator*{\argmax}{arg\,max}
\newcolumntype{L}[1]{>{\raggedright\let\newline\\\arraybackslash\hspace{0pt}}m{#1}}
\newcolumntype{C}[1]{>{\centering\let\newline\\\arraybackslash\hspace{0pt}}m{#1}}
\newcolumntype{R}[1]{>{\raggedleft\let\newline\\\arraybackslash\hspace{0pt}}m{#1}}
\newcommand{\unblindedonly}[1]{\ifthenelse{\boolean{blinded}}{}{{#1 }}}
\newtheorem{theorem}{Theorem} %
\newtheorem{proposition}[theorem]{Proposition}
\newtheorem{lemma}[theorem]{Lemma}
\newtheorem{prop-example}[theorem]{Proposition (Example)}
\newtheorem{example}{Example}
\newcommand{\diff}{\mathrm{diff}}
\newcommand{\ag}{\pi}
\newcommand{\supp}{\mathrm{supp}}
\newcommand{\diffnoise}{Z}
\newtheorem{definition}{Definition}
\author{Caspar Oesterheld$^{1*}$ \quad Johannes Treutlein$^{2}$ \quad Roger Grosse$^{3,4, 5}$\\
\textbf{Vincent Conitzer}$^{1,6}$ \quad \textbf{Jakob Foerster}$^{7}$ \\
$^*$\texttt{oesterheld@cmu.edu}\quad 
$^1$FOCAL, Carnegie Mellon University\quad $^2$CHAI, UC Berkeley\\$^3$Anthropic \quad $^4$Vector Institute \quad $^5$University of Toronto \\ $^6$Institute for Ethics in AI, University of Oxford \quad $^7$FLAIR, University of Oxford\\
}
\titlespacing{\section}{0pt}{0pt}{0pt}
\titlespacing{\subsection}{0pt}{0pt}{0pt}
\begin{document}

\maketitle

\begin{abstract}
As machine learning agents act more autonomously in the world, they will increasingly interact with each other. Unfortunately, in many social dilemmas like the one-shot Prisoner’s Dilemma, standard game theory predicts that ML agents will fail to cooperate with each other. Prior work has shown that one way to enable cooperative outcomes in the one-shot Prisoner’s Dilemma is to make the agents mutually transparent to each other, i.e., to allow them to access one another’s source code \citep{Rubinstein1998,Tennenholtz2004} -- or weights in the case of ML agents. However, full transparency is often unrealistic, whereas partial transparency is commonplace. Moreover, it is challenging for agents to learn their way to cooperation in the full transparency setting. In this paper, we introduce a more realistic setting in which agents only observe a single number indicating how similar they are to each other. We prove that this allows for the same set of cooperative outcomes as the full transparency setting. We also demonstrate experimentally that cooperation can be learned using simple ML methods.
\end{abstract}

\section{Introduction}
\label{sec:introduction}

As AI systems start to autonomously interact with the world, they will also increasingly interact with each other. 
We already see this in contexts such as trading agents \citep{SEC2020}, but the number of domains where separate AI agents interact with each other in the world is sure to grow; for example, consider autonomous vehicles%
.
In the language of game theory, AI systems will play general-sum games 
with each other. For example, autonomous vehicles may find themselves in Game-of-Chicken-like dynamics with each other \citep[cf.][]{Fox2018}. %
In many of these interactions, cooperative or even peaceful outcomes are not a given. For example, standard game theory famously predicts and recommends defecting in the one-shot Prisoner's Dilemma. Even when cooperative equilibria exist, there are typically many equilibria, including uncooperative and asymmetric ones. For instance, in the infinitely repeated Prisoner's Dilemma, mutual cooperation is played in some equilibria, but so is mutual defection, and so is the strategy profile in which one player cooperates 70\% of the time while the other cooperates 100\% of the time. Moreover, the strategies from different equilibria typically do not cooperate with each other.
A recent line of work at the intersection of AI/(multi-agent) ML and game theory aims to increase AI/ML systems' ability to cooperate with each other \citep{stastny2021normative,Dafoe2020,ConitzerCoopAI}.

Prior work has proposed to make AI agents \textit{mutually transparent} to allow for cooperation in equilibrium  (\citealt{McAfee1984}; \citealt{Howard1988}; \citealt[][Section 10.4]{Rubinstein1998}; \citealt{Tennenholtz2004}; \citealt{Barasz2014}; \citealt{Critch2016}; \citealt{RobustProgramEquilibrium}).
Roughly, this literature considers for any given 2-player normal-form game $\Gamma$ the following \textit{program meta game}: Both players submit a computer program, e.g., some neural net, to choose actions in $\Gamma$ on their behalf. The computer program then receives as input the computer program submitted by the other player. The aforecited works have shown that the program meta game has cooperative equilibria in the Prisoner's Dilemma. %

Unfortunately, there are multiple obstacles to cooperation based on full mutual transparency. 1) Settings of \textit{full} transparency are rare in the real world. %
2) Games played with full transparency in general have many equilibria, including ones that are much worse for some or all players than the Nash equilibria of the underlying game (see the folk theorems given by \citealt{Rubinstein1998}, Section 10.4, and \citealt{Tennenholtz2004}). In particular, full mutual transparency can make the problem of equilibrium selection very difficult. 3) The full transparency setting poses challenges to modern ML methods. In particular, it requires at least one of the models to receive as input a model that has at least as many parameters as itself. Meanwhile, most modern successes of ML use models that are orders of magnitudes larger than the input. Consequently, we are not aware of successful projects on learning general-purpose models such as neural nets in the full transparency setting%
. %

\textbf{Contributions.} In this paper we introduce a novel variant of program meta games called \textit{difference (diff) meta games} that enables cooperation in equilibrium while also addressing obstacles 1--3.
As in the program meta game, we imagine that two players each submit a program or \textit{policy} to instruct an \textit{agent} to play a given game, such as the Prisoner's Dilemma. The main idea is that before choosing an action, the agents receive credible information about how \textit{similar} the two players' policies are to each w.r.t.\ how they make the present decision. In the real world, we might imagine that this information is provided by a mediator \citep[cf.][]{monderer2009strong,Ivanov2023,Christoffersen2023} who wants to enable cooperation. We may also imagine that this signal is obtained more organically. For example, we might imagine that the agents can see that their policies were generated using the same code base.
We formally introduce this setup in \Cref{sec:intro-diff-meta-games}. %
Because it requires a much lower degree of mutual transparency, we find the diff meta game setup more realistic than the full mutual transparency setting. Thus, it addresses Obstacle 1 to cooperation based on full mutual transparency. %

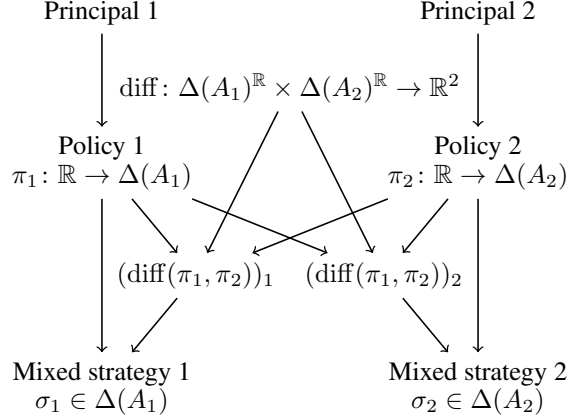
\begin{figure}
\centering
    \begin{tikzpicture}[scale=1,state/.style={draw=none},
every text node part/.style={align=center}]
    \node[state] at (0, 5) (princ1) {Principal 1};
    \node[state] at (5, 5) (princ2) {Principal 2};

    \node[state] at (0, 3) (pol1) {Policy 1\\$\ag_1\colon\mathbb{R} \rightarrow \Delta(A_1)$};
    \node[state] at (5, 3) (pol2) {Policy 2\\$\ag_2\colon\mathbb{R} \rightarrow \Delta(A_2)$};

    \node[state] at (2.5, 4) (diff) {$\diff\colon \Delta(A_1)^{\mathbb{R}} \times \Delta(A_2)^{\mathbb{R}} \rightarrow \mathbb{R}^2$};

    \node[state] at (1.25, 1.5) (d1) {$(\diff(\ag_1,\ag_2))_1$};
    \node[state] at (3.75, 1.5) (d2) {$(\diff(\ag_1,\ag_2))_2$};

    \node[state] at (0, 0) (strat1) {Mixed strategy 1\\$\sigma_1\in\Delta(A_1)$};
    \node[state] at (5, 0) (strat2) {Mixed strategy 2\\$\sigma_2\in \Delta(A_2)$};

    \draw (princ1) edge (pol1);
    \draw (princ2) edge (pol2);

    \draw (diff) edge (d1);
    \draw (diff) edge (d2);

    \draw (pol1) edge (d1);
    \draw (pol2) edge (d2);

    \draw (pol2) edge (d1);
    \draw (pol1) edge (d2);

    \draw (d1) edge (strat1);
    \draw (d2) edge (strat2);

    \draw (pol1) edge (strat1);
    \draw (pol2) edge (strat2);

\end{tikzpicture}%
    \caption{%
    A graphical representation of diff meta games (\Cref{def:diff-meta-game}). Nodes with two incoming nodes are determined by applying one of the parent nodes to the other.
    }
    \label{fig:diff-based-meta-game-graph}
\end{figure}

Diff meta games can still have cooperative equilibria when the underlying base game does not. Specifically, in Prisoner's Dilemma-like games, there are equilibria in which both players submit policies that cooperate with similar policies and thus with each other. We call this phenomenon \textit{similarity-based cooperation (SBC)}.
For example, consider the Prisoner's Dilemma as given in \Cref{table:prisoners-dilemma} for $G=3$. (We study such examples in more detail in \Cref{sec:intro-diff-meta-games}.) Imagine that the players can only submit \textit{threshold policies} that are parameterized only by a single real-valued threshold $\theta_i$ and cooperate if and only if the perceived difference to the opponent is at most $\theta_i$. As a measure of difference, the policies observe $\mathrm{diff}(\theta_1,\theta_2)=|\theta_1-\theta_2|+\diffnoise$, where $\diffnoise$ is sampled independently for each player according to the uniform distribution over $[0,1]$. For instance, if Player 1 submits a threshold of $\nicefrac{1}{2}$ and Player 2 submits a threshold of $\nicefrac{3}{4}$, then the perceived difference is $\nicefrac{1}{4}+\diffnoise$. Hence, Player 1 cooperates with probability $P(\nicefrac{1}{4}+\diffnoise\leq\nicefrac{1}{2})=\nicefrac{1}{4}$ and Player 2 cooperates with probability $P(\nicefrac{1}{4}+\diffnoise\leq \nicefrac{3}{4})=\nicefrac{1}{2}$. It turns out that $(\theta_1=1,\theta_2=1)$, which leads to mutual cooperation with probability $1$, is a Nash equilibrium of the meta game. Intuitively, the only way for either player to defect more is to lower their threshold. But then $|\theta_1-\theta_2|$ will increase, which will cause the opponent to defect more (at a rate of $\nicefrac{1}{2}$). This outweighs the benefit of defecting more oneself.%

In \Cref{sec:folk-theorem}, we prove a folk theorem for diff meta games. Roughly speaking, this result shows that observing a diff value is sufficient for enabling all the cooperative outcomes that full mutual transparency enables. Specifically, we show that for every \textit{individually rational} strategy profile $\boldsymbol{\sigma}$ (i.e., every strategy profile that is better for each player than their minimax payoff), there is a function $\diff$ such that $\boldsymbol{\sigma}$ is played in an equilibrium of the resulting diff meta game.

Next, we address Obstacle 2 to full mutual transparency -- the multiplicity of equilibria. First, note that any given measure of similarity will typically only enable a specific set of equilibria, much smaller than the set of individually rational strategy profiles. For instance, in the above example, all equilibria are symmetric.
In general, one would hope that similarity-based cooperation will result in symmetric outcomes in symmetric games. After all, the new equilibria of the diff game are based on submitting similar policies and if two policies play different strategies against each other, they cannot be similar.
In \Cref{sec:uniqueness-theorem}, we substantiate this intuition. Specifically, we prove, roughly speaking, that in symmetric, additively decomposable games, the Pareto-optimal equilibrium of the meta game is unique and gives both players the same utility, if the measure of difference between the agents satisfies a few intuitive requirements (\Cref{sec:uniqueness-theorem}). For example, in the Prisoner's Dilemma, the unique Pareto-optimal equilibrium of the meta game must be one in which both players cooperate with the same probability.

Finally we show that diff meta games address Obstacle 3: we demonstrate that in games with higher-dimensional action spaces, we can find cooperative equilibria of diff meta games with ML methods. In \Cref{sec:experiments-results-discussion}, we show that, if we initialize the two policies randomly and then let each of them learn to be a best response to the other, they generally converge to the Defect-Defect equilibrium. This is expected based on results in similar contexts, such as in the Iterated %
Prisoner's Dilemma. However, in \Cref{sec:CCDR-pretraining}, we introduce a novel, general pretraining method that trains policies to cooperate against copies and defect (i.e., best respond) against randomly generated policies. Our experiments show that policies pretrained in this way find partially cooperative equilibria of the diff game when trained against each other via alternating best response training. 

We discuss how the present paper relates to prior work in \Cref{sec:rel-work}. We conclude in \Cref{sec:conclusion} with some ideas for further work.

\section{Background}

\begin{table}
	\begin{center}
    \setlength{\extrarowheight}{2pt}
    \begin{tabular}{cc|C{1.6cm}|C{1.6cm}|C{1.6cm}|}
      & \multicolumn{1}{c}{} & \multicolumn{2}{c}{Player 2}\\
      & \multicolumn{1}{c}{}  & \multicolumn{1}{c}{Cooperate} & \multicolumn{1}{c}{Defect} \\\cline{3-4}
      \multirow{2}*{Player 1} & Cooperate & $G,G$ & $0,G+1$  \\\cline{3-4}
      & Defect & $G+1,0$ & $1,1$  \\\cline{3-4}
    \end{tabular}
    \end{center}
    \caption{The Prisoner's Dilemma, parameterized by some number $G> 1$.}
    \label{table:prisoners-dilemma}
\end{table}

\textbf{Elementary game theory definitions.} 
We assume familiarity with game theory. For an introduction, see \citet{Osborne2004}.
A \textit{(two-player, normal-form) game} $\Gamma=(A_1,A_2,\mathbf{u})$ consists of sets of actions or pure strategies $A_1$ and $A_2$ for the two players and a utility function $\mathbf{u}\colon A_1\times A_2\rightarrow \mathbb{R}^2$. \Cref{table:prisoners-dilemma} gives the Prisoner's Dilemma as a classic example of a game. A mixed strategy for Player $i$ is a distribution over $A_i$. We denote the set of such distributions by $\Delta(A_i)$. We can extend $\mathbf{u}$ to mixed strategies by taking expectations, i.e., $\mathbf{u}(\sigma_1,\sigma_2)\coloneqq\sum_{a_1\in A_1, a_2\in A_2} \sigma_1(a_1)\sigma_2(a_2)\mathbf{u}(a_1,a_2)$. For any player $i$, we use $-i$ to denote the other player. We call $\sigma_i$ a \textit{best response} to a strategy $\sigma_{-i}\in \Delta(A_{-i})$, if $\supp(\sigma_i)\subseteq \argmax_{a_i\in A_i} u_i(a_i,\sigma_{-i})$, where $\mathrm{supp}$ denotes the support. A strategy profile $\bm{\sigma}\in \Delta(A_1)\times \Delta(A_2)$ is a vector of strategies, one for each player. We call a strategy profile $(\sigma_1, \sigma_2)$ a \textit{(strict) Nash equilibrium} if $\sigma_1$ is a (unique) best response to $\sigma_2$ and \textit{vice versa}. As first noted by \citet{Nash1950}, each game has at least one Nash equilibrium. We say that a strategy profile $\bm{\sigma}$ is \textit{individually rational} if each player's payoff is at least her minimax payoff, i.e., if $u_i(\bm{\sigma}) \geq \min_{\sigma_{-i}\in \Delta(A_{-i})} \max_{a_i\in A_i} u_i(a_i,\sigma_{-i})$ for $i=1,2$. We say that $\bm{\sigma}$ is \textit{Pareto-optimal} if there exists no $\bm{\sigma}'$ s.t.\ $u_i(\bm{\sigma}')\geq u_i(\bm{\sigma})$ for $i=1,2$ and $u_i(\bm{\sigma}')> u_i(\bm{\sigma})$ for at least one $i$.

\textbf{Symmetric games and additively decomposable games.}  We say that a game is \textit{(player) symmetric} if $A_1=A_2$ and for all $a_1,a_2$ for $i=1,2$, we have that $u_i(a_1,a_2)=u_{-i}(a_2,a_1)$. The Prisoner's Dilemma in \Cref{table:prisoners-dilemma} is symmetric. We say that a game \textit{additively decomposes into $(u_{i,j}\colon A_j \rightarrow \mathbb{R})_{i,j\in\{1,2\}}$} if $u_i(a_1,a_2)=u_{i,1}(a_1) + u_{i,2}(a_2)$ for all $i=\{1,2\}$ and all $a_1\in A_1,a_2\in A_2$. Intuitively, this means that each action $a_j$ of Player $j$ generates some amount of utility $u_{i,j}(a_j)$ for Player $i$ \textit{independently} of what Player $-j$ plays. For example, the Prisoner's Dilemma in \Cref{table:prisoners-dilemma} is additively decomposable, where $u_{i,i}\colon \mathrm{Cooperate} \mapsto 0, \mathrm{Defect} \mapsto 1$ and $u_{i,-i}\colon \mathrm{Cooperate} \mapsto G, \mathrm{Defect} \mapsto 0$ for $i=1,2$. Intuitively, $\mathrm{Cooperate}$ generates $G$ for the opponent and $0$ for oneself, while $\mathrm{Defect}$ generates $1$ for oneself and $0$ for the opponent.%

\textbf{Alternating best response learning.} 
The orthodox approach to learning in games is to learn to best respond to the opponent, essentially ignoring that the opponent is also a learning agent. In this paper, we specifically consider alternating best response (ABR) learning%
. In ABR, the players take turns. In each turn, one of the two players updates the parameters $\bm{\theta}_i$ of her strategy to optimize $u_i(\bm{\theta}_i,\bm{\theta}_{-i})$, i.e., updates her model to be a best response to the opponent's current model (\citealt[cf.][]{Brown1951}; \citealt{Zhang2022}; \citealt{Heinrich2023}). Since learning an exact best response is generally intractable, we will specifically consider the use of gradient ascent in each turn to optimize $u_i(\bm{\theta}_i,\bm{\theta}_{-i})$ over $\bm{\theta}_i$. %
In continuous games if ABR with \textit{exact} (locally) best response updates converges to $(\bm{\theta}_1,\bm{\theta}_2)$, then $(\bm{\theta}_1,\bm{\theta}_2)$ is a (local) Nash equilibrium. %
Note, however, that ABR may fail to converge (e.g., in the face of Rock--Paper--Scissors dynamics). Moreover, if the best response updates of $\theta_i$ are only approximated, ABR may converge to non-equilibria \citep[][Proposition 6]{Mazumdar2020}.

\section{Diff Meta Games}
\label{sec:intro-diff-meta-games}

We now formally introduce diff meta games, the novel setup we consider throughout this paper. Given some base game $\Gamma$, we consider a new \textit{meta game} played by two players whom we will call \textit{principals}. Each principal $i$ submits a \textit{policy}. The two players' policies each observe a real-valued measure of how similar they are to each other. Based on this, the policies then output a (potentially mixed) strategy for the base game. Finally, the utility is realized as per the base game. Below we define this new game formally.  This model is illustrated in \Cref{fig:diff-based-meta-game-graph}.

\begin{definition}\label{def:diff-meta-game}
Let $\Gamma=(A_1, A_2,\mathbf{u})$ be a game. A \textit{(diff-based) policy for Player $i$ for $\Gamma$} is a function $\pi\colon\mathbb{R}\rightarrow \Delta(A_i)$ mapping the perceived real-valued difference between the diff-based policies to a mixed strategy of $\Gamma$. %
For $i=1,2$ let $\mathcal{A}_i\subseteq \Delta(A_i)^{\mathbb{R}}$ be a set of difference-based policies for Player $i$. Then a \textit{policy difference (diff) function} for $(\mathcal{A}_1,\mathcal{A}_2)$ is a stochastic function $\diff\colon \mathcal{A}_1 \times \mathcal{A}_2 \rightsquigarrow \mathbb{R}^2$. %
For any two policies $\ag_1,\ag_2$ and difference function $\diff$, we say that $(\ag_1,\ag_2)$ plays the strategy profile $\bm{\sigma}\in\Delta(A_1)\times\Delta(A_2)$ of $\Gamma$ if $\sigma_i=\mathbb{E}\left[\ag_i(\diff_i(\ag_1,\ag_2))\right]$ for $i=1,2$. For sets of policies $\mathcal{A}_1,\mathcal{A}_2$ and difference function $\diff$ we then define the diff meta game $(\Gamma,\mathcal{A}_1,\mathcal{A}_2,\diff)$ to be the normal-form game $(\mathcal{A}_1,\mathcal{A}_2,V)$, where
$V(\ag_1,\ag_2) \coloneqq \mathbb{E}\left[ \mathbf{u}((\ag_i(\diff_i(\ag_1,\ag_2)))_{i=1,2}) \right]$
for all $\ag_1\in \mathcal{A}_1$, $\ag_2\in \mathcal{A}_2$.
\end{definition}

Note that \Cref{def:diff-meta-game} does not put any restrictions on $\diff$. For example, the above definition allows $(\diff(\ag_i,\ag_{-i}))_i$ to be a real number whose binary representation uniquely specifies $\ag_{-i}$.
This paper is dedicated to situations in which $\diff$ specifically represents some intuitive notion of how different the policies are, thus excluding such $\diff$ functions. Unfortunately, there are many different ways in which one could formalize this constraint, especially in asymmetric games. In \Cref{sec:uniqueness-theorem} we will impose some restrictions along these lines, including symmetry. Our folk theorem (\Cref{theorem:folk-theorem} in \Cref{sec:folk-theorem}) will similarly impose constraints on $\diff$ to avoid $\diff$ functions like the above.

The rest of this section will study concrete examples of \Cref{def:diff-meta-game}.
First, we define a particularly simple type of diff-based policy. Almost all of our theoretical analysis will be based on this class of policies.

\begin{definition}
Let $\theta\in \mathbb{R}\cup \{-\infty,\infty\}$ and $\sigma_i^{\leqslant},\sigma_i^{>}\in\Delta(A_i)$ be strategies for Player $i$ for $i=1,2$. Then we define $(\sigma_i^{\leqslant},\theta,\sigma_i^>)$ to be the policy $\ag$ s.t.\ $\ag(d)=\sigma_i^{\leqslant}$ if $d\leq \theta$ and $\ag(d)=\sigma^>_i$ otherwise. We call policies of this form \textit{threshold policies}. Let $\bar{\mathcal{A}}_{i}$ denote the set of such threshold policies.
\end{definition}

Throughout the rest of this section, we analyze the Prisoner's Dilemma as a specific example.
We limit attention to threshold agents of the form $(C,\theta,D)$, i.e., policies that cooperate against similar opponents (diff below threshold $\theta$) and defect against dissimilar opponents. %
This is because such policies can be used to form cooperative equilibria, while policies that always cooperate (say, $(C,1,C)$) or policies that are more cooperative against \textit{less} similar opponent policies (e.g., $(D,1,C)$) cannot be used to form cooperative equilibria in the PD with a natural diff function.
Policies of the form $(C,\theta,D)$ are uniquely specified by a single real number $\theta$. A natural measure of the similarity between two policies $\theta_1,\theta_2$ is then the absolute difference $|\theta_1-\theta_2|$. We allow $\diff$ to be noisy, however. We summarize this in the following.

\begin{restatable}{example}{examplePDthreshold}\label{example:PD-threshold}
Let $\Gamma$ be the Prisoner's Dilemma as per \Cref{table:prisoners-dilemma}. Then consider the $(\Gamma,\hat A_1,\hat A_2,\mathrm{diff})$ meta game where $\hat A_i=\{ (C,\theta_i,D) \mid \theta_i\in \mathbb{R} \}$ and $\mathrm{diff}_i((C,\theta_1,D),(C,\theta_2,D)))=|\theta_1-\theta_2|+\diffnoise_i$ for $i=1,2$ where $\diffnoise_i$ is some real-valued random variable.
\end{restatable}

The only open parameters of \Cref{example:PD-threshold} are $G$ (the parameter used in our definition of the Prisoner's Dilemma) and the noise distribution. Nevertheless, \Cref{example:PD-threshold} is a rich setting that allows for nontrivial results.
We leave a detailed analysis for \Cref{appendix:a-detailed-analysis-of-example:PD-threshold} and only give two specific results about equilibria here.

In the first result, we imagine that the noise $Z_i$ is distributed uniformly between $0$ and $\epsilon>0$ and that $G$ is at least $2$. Then, roughly, there are two kinds of equilibria. First, there are equilibria in which both players always defect, because their threshold for cooperation is at most $0$ (such that they defect with probability $1$ even against exact copies). Second, and more interestingly, there are equilibria in which both players submit \textit{the same} threshold strictly between $0$ and $\epsilon$. Note that this means that if both players submit a threshold of $\epsilon$, they both cooperate with probability $1$.

\begin{restatable}{proposition}{PDthresholduniformresult}
\label{prop-example:PD-threshold-uniform-NE}
Consider \Cref{example:PD-threshold} with $\diffnoise_i\sim \mathrm{Uniform}([0,\epsilon])$ i.i.d.\ for some $\epsilon> 0$ and with $G\geq 2$.
Then $((C,\theta_1,D),(C,\theta_2,D))$ is a Nash equilibrium if and only if $\theta_1,\theta_2\leq 0$ or $0<\theta_1=\theta_2\leq \epsilon$.
In case of the latter, the equilibrium is strict if $G>2$.
\end{restatable}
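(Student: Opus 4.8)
The plan is to collapse the meta game to a one-dimensional best-response problem and then compute the best-response correspondence explicitly. Write $F$ for the CDF of $\mathrm{Uniform}([0,\epsilon])$, so that $F(x)=0$ for $x\le 0$, $F(x)=x/\epsilon$ for $0\le x\le\epsilon$, and $F(x)=1$ for $x\ge\epsilon$. Against the profile $(\theta_1,\theta_2)$, player $i$ cooperates with probability $p_i=F(\theta_i-|\theta_1-\theta_2|)$. Using the additive decomposition of the Prisoner's Dilemma noted after \Cref{table:prisoners-dilemma}, the meta-game payoff simplifies to $V_i=(1-p_i)+G\,p_{-i}$, so each player wants her own cooperation probability small and the opponent's large. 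The single most useful observation is the identity for the ``effective threshold'': $\theta_i-|\theta_1-\theta_2|$ equals $\theta_{-i}$ when $\theta_i\ge\theta_{-i}$ and equals $2\theta_i-\theta_{-i}$ when $\theta_i\le\theta_{-i}$. Hence, once a player raises her threshold above the opponent's, her own cooperation probability is frozen at $F(\theta_{-i})$ while she only depresses $p_{-i}$, and when she is the lower of the two she alone controls $p_{-i}=F(\theta_i)$. This monotone structure is what makes the correspondence tractable.

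First I would establish sufficiency. For $\theta_1,\theta_2\le 0$, the deviating player $i$'s opponent has effective threshold at most $\theta_{-i}\le 0$ regardless of $\theta_i$, so $p_{-i}\equiv 0$ and $V_i=1-p_i\le 1$, with equality at the status quo (where $p_i=0$); hence this is a (non-strict) equilibrium. For $0<\theta_1=\theta_2=\theta\le\epsilon$, I would fix a unilateral deviation to $\theta'$ and track $V_i$ piecewise: deviating upward leaves $p_i$ fixed at $\theta/\epsilon$ but strictly lowers $p_{-i}$, and deviating downward moves the payoff at marginal rate $(G-2)/\epsilon\ge 0$ across the relevant regimes. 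Thus no deviation is profitable when $G\ge 2$, and every deviation causes a strict loss when $G>2$.

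Then for necessity I would compute the full best-response set $\mathrm{BR}(v)$ to a fixed opponent threshold $v$. The clean facts are: if $v\le 0$ then the responder's payoff is identically $1$ and $\mathrm{BR}(v)=\mathbb{R}$; if $0<v\le\epsilon$ then $\mathrm{BR}(v)\subseteq(0,\epsilon]$, equal to $\{v\}$ for $G>2$ and to $[v/2,v]$ for $G=2$; if $\epsilon<v<2\epsilon$ then $\mathrm{BR}(v)\subseteq(0,\epsilon]$ with the optimum at $\theta'=\epsilon$; and if $v\ge 2\epsilon$ then $\mathrm{BR}(v)=[\epsilon,v/2]$, every point of which yields $(p_{\text{resp}},p_{\text{opp}})=(0,1)$ and hence the opponent payoff $0$. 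Intersecting these (using symmetry to order the two thresholds) gives the claim: both thresholds $\le 0$ is an equilibrium; a positive threshold forces the opponent's best response to be positive, ruling out mixed-sign profiles; two thresholds in $(0,\epsilon]$ must coincide because mutual best response squeezes $\theta_1\le\theta_2\le\theta_1$; and any profile with a threshold exceeding $\epsilon$ is impossible, either because the larger threshold lies in $(\epsilon,2\epsilon)$ and forces the other down to $(0,\epsilon]$ (which then cannot best-respond back above $\epsilon$), or because a threshold $\ge 2\epsilon$ would hand its opponent a payoff of $0$ although that opponent can secure at least $1$ by defecting. Strictness for $G>2$ is then immediate, since $\mathrm{BR}(\theta)=\{\theta\}$ is a singleton on $(0,\epsilon]$.

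The main obstacle will be the careful piecewise computation of $\mathrm{BR}(v)$ for $v>\epsilon$, where the responder's payoff must be tracked across several regimes of the two effective thresholds (own threshold crossing $0$ and $\epsilon$, opponent's crossing $0$ and $\epsilon$), and where the split between $\epsilon<v<2\epsilon$ and $v\ge 2\epsilon$ genuinely matters. Everything else---the payoff reduction, sufficiency, and the squeeze argument forcing $\theta_1=\theta_2$---is routine once the effective-threshold identity and the decomposition $V_i=(1-p_i)+G\,p_{-i}$ are in hand.
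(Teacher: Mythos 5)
Your proposal is correct, and while its core computation necessarily overlaps with the paper's, the proof architecture is genuinely different. The paper first establishes structural lemmas that hold for \emph{arbitrary} noise distributions: \Cref{lemma:high-decrease-threshold-no-change} (the higher player's cooperation probability is frozen at the opponent's threshold, which is exactly your effective-threshold identity) and \Cref{thm:better-to-copy-opponent-threshold} (exceeding the opponent's threshold is never better than copying it). These let it verify sufficiency by checking only \emph{downward} deviations and comparing marginal rates ($2/\epsilon$ for one's own cooperation probability versus $G\cdot 1/\epsilon$ for the opponent's, giving $G\geq 2$), and its necessity direction is then very terse: it asserts that against any $\theta_{-i}>\epsilon$ the unique best response is $\epsilon$, concludes that no equilibrium has both thresholds above $\epsilon$, and invokes the strictness part of sufficiency for the rest. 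You instead compute the complete best-response correspondence $\mathrm{BR}(v)$ in all four regimes of $v$ and intersect, which is self-contained and more exhaustive. Notably, your treatment of the regime $v\geq 2\epsilon$ is more precise than the paper's: against $\theta_{-i}\geq 2\epsilon$ the best response is \emph{not} unique (every $\theta_i\in[\epsilon,\theta_{-i}/2]$ attains the maximal payoff $1+G$), so ruling out such profiles really does require your extra step that the high-threshold player then receives payoff $0$ and can profitably deviate to defection --- your phrase \enquote{hand its opponent a payoff of $0$} has a garbled pronoun, but the case computation preceding it makes the correct argument unambiguous. What the paper's route buys is brevity and reusability (its lemmas also serve \Cref{prop-example:PD-threshold-unimodal-NE}); what yours buys is an airtight necessity direction and the explicit $G=2$ best-response sets $[v/2,v]$, from which both the squeeze $\theta_1=\theta_2$ and the strictness claim for $G>2$ fall out immediately.
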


What happens if, instead of the uniform distribution, we let the $\diffnoise_i$ be, say, normally distributed? It turns out that for all unimodal distributions (which includes the normal distribution) and $G=2$, we get an especially simple result: in equilibrium, both players submit the same threshold and that threshold must be left of the mode.

\begin{restatable}{proposition}{PDthresholdunimodal}
\label{prop-example:PD-threshold-unimodal-NE}
Consider \Cref{example:PD-threshold} with $G=2$. Assume $\diffnoise_i$ is i.i.d.\ for $i=1,2$ according some unimodal distribution with mode $\nu$ with positive measure on every interval. Then $((C,\theta_1,D),(C,\theta_2,D))$ is a Nash equilibrium if and only if $\theta_1=\theta_2\leq \nu$.
\end{restatable}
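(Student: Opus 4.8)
The plan is to first collapse the meta-game payoffs into a one-dimensional expression using the additive decomposition of the Prisoner's Dilemma. Writing $F$ for the common CDF of the $\diffnoise_i$ and setting $\delta=|\theta_1-\theta_2|$, a threshold policy $(C,\theta_i,D)$ cooperates with probability $p_i=\Pr[\,\delta+\diffnoise_i\le\theta_i\,]=F(\theta_i-\delta)$. Because the decomposition gives $u_{i,i}$ an expected contribution of $1-p_i$ (own defection) and $u_{i,-i}$ a contribution of $G\,p_{-i}$ (opponent's cooperation), the meta-game utility is simply $V_i(\theta_1,\theta_2)=1-F(\theta_i-\delta)+G\,F(\theta_{-i}-\delta)$, where the independence of the two actions means only the marginal cooperation probabilities matter. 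Note also that ``positive measure on every interval'' is exactly the statement that $F$ is strictly increasing. I would specialize to $G=2$ from the outset, since that value is what later makes the deviation terms telescope.

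Next I would prove that every Nash equilibrium is symmetric. Suppose $\theta_1<\theta_2$ and consider the higher-threshold player $2$ deviating within the region $\theta_2'\ge\theta_1$. There $\delta=\theta_2'-\theta_1$, so player $2$'s own cooperation probability is pinned at $F(\theta_1)$ while the opponent cooperates with probability $F(2\theta_1-\theta_2')$, giving $V_2=1-F(\theta_1)+2F(2\theta_1-\theta_2')$. Since $F$ is strictly increasing and $2\theta_1-\theta_2'$ strictly decreases in $\theta_2'$, this is strictly decreasing on $(\theta_1,\infty)$; hence $\theta_2$ is strictly worse for player $2$ than matching $\theta_1$, so the profile cannot be an equilibrium. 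By the symmetry of the game the same argument rules out $\theta_1>\theta_2$, leaving $\theta_1=\theta_2=:\theta$.

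It then remains to characterize which symmetric profiles are equilibria. An upward deviation to $x>\theta$ again pins own cooperation at $F(\theta)$ and drives the opponent's down to $F(2\theta-x)$, so $V_1$ strictly decreases in $x$ and upward deviations are never profitable. For a downward deviation to $x<\theta$ we have $\delta=\theta-x$ and $V_1(x)=1-F(2x-\theta)+2F(x)$, whence the profit from deviating is $V_1(\theta)-V_1(x)=F(\theta)+F(2x-\theta)-2F(x)$. The key observation is that $x$ is the midpoint of $2x-\theta$ and $\theta$, so this expression is precisely the Jensen gap of $F$ on that symmetric pair, and this clean midpoint form is exactly where $G=2$ is used. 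I would then invoke the standard fact that unimodality with mode $\nu$ means $F$ is convex on $(-\infty,\nu]$ and concave on $[\nu,\infty)$. If $\theta\le\nu$, then $2x-\theta<x<\theta\le\nu$ all lie in the convex region, so the gap is nonnegative and no downward deviation helps, making $\theta$ an equilibrium. If $\theta>\nu$, I would take a small downward deviation keeping $2x-\theta>\nu$, so all three points lie in the concave region and the gap becomes (strictly) negative, exhibiting a profitable deviation.

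The main obstacle is the strictness in the $\theta>\nu$ case: concavity of $F$ past the mode only yields a weak inequality, and I would need the ``positive measure on every interval'' hypothesis to upgrade it, ruling out the degenerate possibility that $F$ is locally affine (constant density) just to the right of $\nu$. Concretely, I expect to argue that the density is strictly decreasing on some subinterval of $(\nu,\infty)$ -- a flat stretch at the maximal density value would contradict uniqueness of the mode, while a flat stretch at a lower value still leaves a strictly decreasing portion to deviate into -- and then place the deviation point $x$ there to force the strict inequality. Everything else is routine monotonicity bookkeeping.
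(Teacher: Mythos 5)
Your overall route is the same as the paper's, just written in CDF language: your formula $V_i=1-F(\theta_i-\delta)+G\,F(\theta_{-i}-\delta)$, the ``own cooperation probability is pinned'' observation for the higher-threshold player (this is \Cref{lemma:high-decrease-threshold-no-change} and \Cref{thm:better-to-copy-opponent-threshold}, with strictness supplied by positive measure on every interval), and your midpoint/Jensen-gap criterion for downward deviations, which is exactly \Cref{lemma:unimodal-prep-lemma}: the condition $P(\theta-2\Delta<\diffnoise<\theta-\Delta)\leq P(\theta-\Delta<\diffnoise<\theta)$ for all $\Delta>0$ is your inequality $2F(x)\leq F(\theta)+F(2x-\theta)$ with $x=\theta-\Delta$. (Minor slip: you label $V_1(\theta)-V_1(x)$ the ``profit'' from deviating; it is the loss, though you then use it consistently.)

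The genuine gap is in your last step, the strict profitability of some downward deviation when $\theta>\nu$. Your plan -- find a subinterval of $(\nu,\infty)$ on which the density strictly decreases and ``place the deviation point $x$ there'' -- does not work, and the supporting claim that ``a flat stretch at a lower value still leaves a strictly decreasing portion to deviate into'' is false in the relevant sense. Consider a density that increases strictly to its peak as $z\uparrow\nu$, drops by a jump discontinuity at $\nu$ to a constant $c$ on $[\nu,\nu+L]$, and decays strictly after $\nu+L$, and take $\theta\in(\nu,\nu+L)$. This distribution has positive mass on every interval and a unique mode $\nu$ (no other point has nondecreasing density to its left and nonincreasing density to its right), so it satisfies all hypotheses. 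Yet the only strictly decreasing stretch of the density to the right of $\nu$ lies beyond $\nu+L>\theta$, hence beyond every point $2x-\theta<x<\theta$ involved in a downward deviation, and every deviation keeping $2x-\theta\geq\nu$ has Jensen gap exactly zero. So your prescription exhibits no strictly profitable deviation, while the proposition requires one to exist.

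The repair needs a case split. If the density is not (a.e.) constant on $(\nu,\theta)$, deviating to the midpoint $x=(\nu+\theta)/2$ already gives a strictly negative gap, since then $\int_{\nu}^{x}f>\int_{x}^{\theta}f$ by monotonicity plus non-constancy. If the density \emph{is} constant, say equal to $c$, on $(\nu,\theta)$, the profitable deviation must straddle the mode: take $x$ slightly below $(\nu+\theta)/2$, so that $2x-\theta$ falls just to the left of $\nu$, where the density strictly exceeds $c$; then the interval $(2x-\theta,x)$ carries strictly more mass than $(x,\theta)$. Uniqueness of the mode enters precisely here: if the density just left of $\nu$ were also at most $c$, then every point of $[\nu,\theta]$ would be a mode. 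To be fair, the paper's own write-up is silent on this case split (it says the conclusion ``follows directly'' from \Cref{lemma:unimodal-prep-lemma} and unimodality), so you engaged with the right subtlety -- but the resolution you sketch would fail on the example above.
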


\section{A folk theorem for diff meta games}
\label{sec:folk-theorem}

What are the Nash equilibria of a diff meta game on $\Gamma$?
A first answer is that Nash equilibria of $\Gamma$ carry over to the diff meta game regardless of what $\diff$ function is used (assuming that at least all constant policies are available); see \Cref{prop:Nash-equilibrium-survives} in \Cref{appendix:Nash-equilibrium-survives}.
Any other equilibria of the diff meta game hinge on the use of the right $\diff$ function. In fact, if $\diff$ is constant and thus uninformative, the Nash equilibria of the diff meta game are exactly the Nash equilibria of $\Gamma$; see \Cref{prop:only-NEs-of-Gamma-survive-constant-diff} in \Cref{appendix:Nash-equilibrium-survives}. So the next question to ask is for what strategy profiles $\bm{\sigma}$ \textit{there exists} some $\diff$ function s.t.\ $\bm{\sigma}$ is played in an equilibrium of the resulting diff meta game. The following result answers this question. In particular, a folk theorem similar to the folk theorems for infinitely repeated games (e.g., \citealt{Osborne2004}, Ch.\ 15) and for program equilibrium (see \Cref{sec:rel-work}).

\begin{restatable}[folk theorem for diff meta games]{theorem}{folktheorem}\label{theorem:folk-theorem}
Let $\Gamma$ be a game and $\bm{\sigma}$ be a strategy profile for $\Gamma$. Let $\mathcal{A}_i\supseteq \bar{\mathcal{A}}_i$ for $i=1,2$. Then the following two statements are equivalent:
\begin{enumerate}[nolistsep]
    \item \label{folk-theorem-item-1} There is a $\mathrm{diff}$ function such that there is a Nash equilibrium $(\ag_1,\ag_2)$ of the diff meta game $(\Gamma,\mathrm{diff}, \mathcal{A}_1,\mathcal{A}_2)$ s.t.\ $(\ag_1,\ag_2)$ play $\bm{\sigma}$.
    \item \label{folk-theorem-item-2} The strategy profile $\bm{\sigma}$ is individually rational (i.e., better than everyone's minimax payoff).
\end{enumerate}
The result continues to hold true if we restrict attention to deterministic $\diff$ functions with $\diff_1=\diff_2$ and $\diff_i(\ag_1,\ag_2)\in \{ 0,1\}$ for $i=1,2$.
\end{restatable}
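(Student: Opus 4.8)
The plan is to prove the two implications separately and to establish the strengthened version (deterministic, symmetric, $\{0,1\}$-valued $\diff$) directly: since that version only shrinks the class of admissible $\diff$ functions, proving \ref{folk-theorem-item-2}$\Rightarrow$\ref{folk-theorem-item-1} with such a $\diff$ together with proving \ref{folk-theorem-item-1}$\Rightarrow$\ref{folk-theorem-item-2} for \emph{all} $\diff$ yields both the general and the restricted equivalence at once.

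For \ref{folk-theorem-item-1}$\Rightarrow$\ref{folk-theorem-item-2}, the key observation is that player $i$ possesses a single constant policy that is safe against every opponent response. Write $\underline v_i \coloneqq \min_{\sigma_{-i}\in\Delta(A_{-i})}\max_{a_i\in A_i} u_i(a_i,\sigma_{-i})$ for the minimax value. Applying the minimax theorem to the auxiliary zero-sum game with payoff $u_i$ (player $-i$'s true payoffs are irrelevant to this computation) gives $\underline v_i = \max_{\sigma_i\in\Delta(A_i)}\min_{\sigma_{-i}\in\Delta(A_{-i})} u_i(\sigma_i,\sigma_{-i})$, so there is a maximin strategy $\sigma_i^\star$ with $u_i(\sigma_i^\star,\sigma_{-i})\geq \underline v_i$ for \emph{all} $\sigma_{-i}$. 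Now let $(\pi_1,\pi_2)$ be a Nash equilibrium of the diff meta game playing $\bm\sigma$, and let player $i$ contemplate deviating to the constant policy $\pi_i^\star\in\bar{\mathcal A}_i\subseteq\mathcal A_i$ that ignores the diff and always plays $\sigma_i^\star$ (a threshold policy with identical branches). Whatever strategy $\sigma_{-i}$ this deviation induces player $-i$ to play, it yields $u_i(\sigma_i^\star,\sigma_{-i})\geq \underline v_i$. Since $(\pi_1,\pi_2)$ is an equilibrium, $u_i(\bm\sigma)=V_i(\pi_1,\pi_2)$ is at least this deviation payoff, so $u_i(\bm\sigma)\geq\underline v_i$; as $i$ was arbitrary, $\bm\sigma$ is individually rational.

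For \ref{folk-theorem-item-2}$\Rightarrow$\ref{folk-theorem-item-1}, I would use a grim-trigger construction in which the $\diff$ function itself detects deviations. For each $i$, fix a punishment strategy $p_i\in\Delta(A_i)$ that minimaxes the opponent, i.e.\ satisfies $\max_{a_{-i}\in A_{-i}} u_{-i}(a_{-i},p_i)=\underline v_{-i}$, and define the intended threshold policy $\pi_i^\star\coloneqq(\sigma_i,0,p_i)\in\bar{\mathcal A}_i$, which plays the target strategy $\sigma_i$ when the perceived diff is $\leq 0$ and the punishment $p_i$ when it is $>0$. Define the deterministic, symmetric, $\{0,1\}$-valued function by $\diff(\pi_1,\pi_2)\coloneqq(0,0)$ if $(\pi_1,\pi_2)=(\pi_1^\star,\pi_2^\star)$ and $\diff(\pi_1,\pi_2)\coloneqq(1,1)$ otherwise; \Cref{def:diff-meta-game} places no restriction on $\diff$ and hence permits this. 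When both principals submit $\pi_i^\star$ the diff is $0$, both agents play $\sigma_i$, so $(\pi_1^\star,\pi_2^\star)$ plays $\bm\sigma$. If instead player $i$ submits any $\pi_i\neq\pi_i^\star$, the diff becomes $1$, player $-i$ plays $p_{-i}$, and player $i$'s agent can do no better than $\max_{a_i} u_i(a_i,p_{-i})=\underline v_i\leq u_i(\bm\sigma)$, the last inequality being individual rationality. Hence neither player can profit by deviating and $(\pi_1^\star,\pi_2^\star)$ is a Nash equilibrium playing $\bm\sigma$.

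The only genuinely subtle point is the lower bound in the first implication: a priori one might fear that a maliciously chosen $\diff$ lets player $-i$ tailor its response to player $i$'s deviation (the definition even allows $\diff$ to encode $\pi_i$ entirely), so that $i$ cannot simply best respond to a fixed $\sigma_{-i}$. The minimax theorem resolves this by supplying a \emph{single} strategy $\sigma_i^\star$ that is simultaneously optimal against all responses, so that the coupling between $i$'s policy and $-i$'s induced play becomes harmless. The remaining steps are routine: verifying that constant and trigger policies lie in $\bar{\mathcal A}_i$, that the constructed $\diff$ meets the stated restrictions, and that the punishment indeed caps any deviator at $\underline v_i$.
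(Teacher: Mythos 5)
Your proposal is correct and follows essentially the same route as the paper's proof: the forward direction uses a constant (equal-branch threshold) maximin policy to guarantee the minimax value against any diff function, and the reverse direction uses a grim-trigger construction where a deterministic, symmetric, $\{0,1\}$-valued $\diff$ detects any unilateral deviation and triggers the opponent's minimax punishment. Your explicit invocation of the minimax theorem to obtain a single strategy $\sigma_i^\star$ that is safe against all induced responses just makes precise a step the paper leaves implicit.
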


We leave the full proof to \Cref{appendix:proof-of-folk-theorem}, but give a short sketch of the construction for \ref{folk-theorem-item-2}$\Rightarrow$\ref{folk-theorem-item-1} here. For any $\bm{\sigma}$, we construct the desired equilibrium from policies $\pi_i^*=(\sigma_i,\nicefrac{1}{2},\hat{\sigma}_i)$ for $i=1,2$, where $\hat\sigma_i$ is Player $i$'s minimax strategy against Player $-i$. We then take any $\diff$ function s.t.\ $\diff(\pi_i^*, \ag_{-i})=(0,0)$ if $\ag_{-i}=\pi^*_{-i}$ and $\diff(\pi_i^*, \ag_{-i})=(1,1)$ otherwise. %

\section{A uniqueness theorem}
\label{sec:uniqueness-theorem}

\Cref{theorem:folk-theorem} allows for highly asymmetric similarity-based cooperation. 
For example, in the PD with, say, $G=2$, \Cref{theorem:folk-theorem} shows that with the right $\diff$ function, the strategy profile $(C, \nicefrac{2}{3} * C + \nicefrac{1}{3} * D)$ is played in an equilibrium of the diff meta game of the PD. 
This seems odd, as one would expect SBC to result in playing symmetric strategy profiles. Note that, for example, all equilibria of \Cref{prop-example:PD-threshold-uniform-NE,prop-example:PD-threshold-unimodal-NE} are symmetric.
In this section, we show that under some restrictions on $\diff$ and the base game $\Gamma$, we can recover the symmetry intuition. This is good because in symmetric games the symmetric outcomes are the fair and otherwise desirable ones \citep[][Sect.\ 3.4]{Harsanyi1988} and because SBC thus avoids equilibrium selection problems of other forms of cooperation (including cooperation based on full mutual transparency and cooperation in the iterated Prisoner's Dilemma).

We first need a few definitions of properties of $\diff$. Let $\Gamma$ be a symmetric game. We say that $\diff$ is \textit{minimized by copies} if for all policies $\ag,\ag'$, all $y$ and $i=1,2$, $P(\diff_i(\ag,\ag'){<}y)\leq P(\diff_i(\ag,\ag){<}y)$. For example, the $\diff$ function in \Cref{example:PD-threshold} is minimized by copies. The $\diff$ functions in the proof of \Cref{theorem:folk-theorem} are not in general minimized by copies when the given base game is symmetric. For example, to achieve $(C, \nicefrac{2}{3} * C + \nicefrac{1}{3} * D)$ in equilibrium, the proof of \Cref{theorem:folk-theorem} (as sketched above) uses the policies $\pi_1^*=(C,\nicefrac{1}{2},D)$ and $\pi_2^*=(\nicefrac{2}{3} * C + \nicefrac{1}{3} * D,\nicefrac{1}{2},D)$ and a diff function with $\diff(\pi_1^*,\pi_2^*)=(0,0)$ but $\diff(\pi_1^*,\pi_1^*)=(1,1)$.
If the base game is symmetric, we call $\diff$ \textit{symmetric} if for all $\ag_1,\ag_2$, $\diff(\ag_1,\ag_2)$ is distributed the same as $\diff(\ag_2,\ag_1)$ and $(\diff_1(\ag_1,\ag_2),\diff_2(\ag_1,\ag_2))$ is distributed the same as $(\diff_2(\ag_1,\ag_2),\diff_1(\ag_1,\ag_2))$.

Finally, we need a more complicated but nonetheless intuitive property of $\diff$ functions. In this paper, we generally imagine that \textit{low} values of $\diff$ are informative about the other player's policy. %
In contrast, we will her assume that \textit{high} values of $\diff$ are uninformative. That is, %
for any $\sigma_{i}$ and $\ag_{-i}$, we will assume that there is a policy $\ag_{i}$ that plays $\sigma_{i}$ against $\ag_{-i}$ and triggers the above-threshold policy of $\ag_{-i}$ with the highest-possible probability. %
Formally, let $\ag_{-i}=(\sigma_{-i}^{\leqslant},\theta_{-i},\sigma_{-i}^>)$ be any threshold policy. Let $p$ be the supremum of numbers $p'$ for which there is $\ag_i$ s.t.\ in $(\ag_i,\ag_{-i})$, Player $-i$ plays %
$(1-p')\sigma_{-i}^{\leqslant}+p'\sigma_{-i}^>$. Let $\sigma_{\ag_{-i}}^{\mathrm{max}}=(1-p)\sigma_{-i}^{\leqslant}+p\sigma_{-i}^>$. Intuitively, $\sigma_{\ag_{-i}}^{\mathrm{max}}$ is the strategy played by $\ag_{-i}$ against the most different opponent policies. For the examples of \Cref{sec:intro-diff-meta-games} we have $p=1$ and thus simply $\sigma_{\ag_{-i}}^{\mathrm{max}}=\sigma_{-i}^>$. But if $\diff$ is bounded, then we might even have $p=0$ or anything in between.

\begin{definition}
We call $\diff\colon \bar{\mathcal{A}}_1 \times \bar{\mathcal{A}}_2 \rightsquigarrow \mathbb{R}^2$ \textit{high value uninformative} if for each threshold policy $\ag_{-i}$, $\sigma_i$ and $\epsilon>0$ there is a threshold policy $\ag_i$ such that in $(\ag_i,\ag_{-i})$, a strategy profile within $\epsilon$ of $(\sigma_i,\sigma_{\ag_{-i}}^{\mathrm{max}})$ is played.
\end{definition}

We are now ready to state a uniqueness result for the Nash equilibria of diff meta games.

\begin{restatable}{theorem}{uniquenesstheorem}
\label{theorem:uniqueness}
Let $\Gamma$ be a player-symmetric%
, additively decomposable game. Let $\mathrm{diff}$ be symmetric, high-value uninformative, and minimized by copies. Then if $(\ag_1,\ag_2)$ is a Nash equilibrium that is not Pareto-dominated by another Nash equilibrium, we have that $V_1(\ag_1,\ag_2)=V_2(\ag_1,\ag_2)$. Hence, if there exists a Pareto-optimal Nash equilibrium, its payoffs are unique, Pareto-dominant among Nash equilibria and equal across the two players.
\end{restatable}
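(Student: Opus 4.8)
The plan is to reduce everything to the strategies actually played and then show that any Nash equilibrium with unequal payoffs is Pareto-dominated by a symmetric one. Writing the decomposition as $u_i(a_1,a_2)=s(a_i)+o(a_{-i})$, where $s(a)=u_{i,i}(a)$ is the utility a player derives from its own action and $o(a)=u_{i,-i}(a)$ the utility it confers on its opponent (player-symmetry makes $s,o$ common to both players), I extend $s,o$ linearly to mixed strategies and set $w(\sigma)=s(\sigma)+o(\sigma)$. If $(\pi_1,\pi_2)$ plays $(\sigma_1,\sigma_2)$, linearity gives $V_1(\pi_1,\pi_2)=s(\sigma_1)+o(\sigma_2)$ and $V_2(\pi_1,\pi_2)=s(\sigma_2)+o(\sigma_1)$, so $V_1=V_2$ iff $s(\sigma_1)-o(\sigma_1)=s(\sigma_2)-o(\sigma_2)$. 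Symmetry of $\diff$ supplies two standing facts: both players see identically distributed diffs against a fixed pair of policies, so every profile $(\pi,\pi)$ is played symmetrically and yields each player $w(\tau_\pi)$, where $\tau_\pi$ is the self-play strategy; and the equilibrium set is invariant under swapping the two players (with payoffs swapped).

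Next I would assemble a toolkit of deviations. The copy deviation: Player $1$ can submit $\pi_2$, reaching $(\pi_2,\pi_2)$, so a best response gives $V_1(\pi_1,\pi_2)\ge w(\tau_2)$, and symmetrically $V_2(\pi_1,\pi_2)\ge w(\tau_1)$. Because $\diff$ is minimized by copies, the self-play diff $\diff_i(\pi_i,\pi_i)$ is stochastically below the cross diff $\diff_i(\pi_1,\pi_2)$ (using symmetry to align the arguments); since a threshold policy plays its below-threshold branch more often the smaller the diff, $\tau_i$ puts at least as much weight on $\pi_i$'s below-threshold strategy as $\sigma_i$ does. I then want to upgrade this to $w(\tau_i)\ge w(\sigma_i)$, i.e.\ that moving from the dissimilar to the similar play does not lower welfare; here high-value-uninformativeness is used to pin the ``generosity'' ordering $o(\sigma_i^{\leqslant})\ge o(\sigma_i^{>})$ of the two branches (each player could otherwise secure $\max_a s(a)+o(\sigma^{\mathrm{max}})$ by being maximally dissimilar, which at equilibrium forces the below-threshold branch to be the more generous one). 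Combining $V_1\ge w(\tau_2)\ge w(\sigma_2)=s(\sigma_2)+o(\sigma_2)$ with $V_1=s(\sigma_1)+o(\sigma_2)$ yields $s(\sigma_1)\ge s(\sigma_2)$; the symmetric computation yields $s(\sigma_2)\ge s(\sigma_1)$, so $s(\sigma_1)=s(\sigma_2)$.

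The punchline then falls out. With $s(\sigma_1)=s(\sigma_2)$ we get $w(\sigma_2)=s(\sigma_2)+o(\sigma_2)=s(\sigma_1)+o(\sigma_2)=V_1$, and the sandwich $V_1\ge w(\tau_2)\ge w(\sigma_2)=V_1$ forces $w(\tau_2)=V_1$. Since $(\pi_1,\pi_2)$ is an equilibrium, $\max_{\pi'}V_1(\pi',\pi_2)=V_1$, so $V_1(\pi_2,\pi_2)=w(\tau_2)=V_1$ means $\pi_2$ is itself a best response to $\pi_2$; by symmetry $(\pi_2,\pi_2)$ is a Nash equilibrium, symmetric, with payoff $(V_1,V_1)$. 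If $V_1>V_2$ this Pareto-dominates $(\pi_1,\pi_2)$, contradicting non-domination; hence $V_1=V_2$. The ``Hence'' sentence follows from the same construction: any equilibrium with payoffs $(x,y)$, $x\ge y$, spawns a symmetric equilibrium of value $x$ for both players, so if a Pareto-optimal equilibrium of value $(v,v)$ exists then $v\ge \max(x,y)$ for every equilibrium (else the spawned $(x,x)$ would dominate $(v,v)$), giving Pareto-dominance; and two such optima would dominate each other unless equal, giving uniqueness.

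The step I expect to be the real work is establishing $w(\tau_i)\ge w(\sigma_i)$ — equivalently, that no player can do strictly better at equilibrium than the symmetric copy value $w(\tau_{-i})$. This is precisely where all three hypotheses must act together: minimized-by-copies to make copying the diff-minimizing (hence cooperation-eliciting) move, high-value-uninformativeness to bound the most tempting ``defect-and-differ'' deviation and to orient the two branches, and symmetry to transport a best response against $\pi_2$ into a best response against itself. Two technical hazards live here: the $\epsilon$ in high-value-uninformativeness means the secured deviation payoffs hold only approximately, so the inequalities should be derived for each $\epsilon$ and passed to the limit using continuity of the payoff maps in the played strategies; and the welfare-monotonicity of the two branches is the genuinely game-specific ingredient and must be argued, not assumed.
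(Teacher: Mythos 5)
Your overall architecture is appealing, and the part of it that works (the copy deviation, the sandwich $V_1\geq w(\tau_2)\geq w(\sigma_2)=V_1$, and the spawning of a symmetric dominating equilibrium $(\ag_2,\ag_2)$) is essentially an elegant repackaging of the paper's case B.1. But there is a genuine gap at exactly the step you flagged, and it is not a technicality that can be "argued" within your framework --- it is false as stated. Your welfare-monotonicity claim $w(\tau_i)\geq w(\sigma_i)$ rests on minimized-by-copies (which shifts probability mass toward the below-threshold branch) \emph{plus} the branch ordering $w(\sigma_i^{\leqslant})\geq w(\sigma_i^{>})$. What high-value uninformativeness actually delivers at equilibrium is only the \emph{generosity} ordering $o(\sigma_i^{\leqslant})\geq o(\sigma_i^{>})$ (the paper's \Cref{lemma:lower-diff-is-better-additive}); it says nothing about the self-utility component $s$, which in the motivating examples points the opposite way (in the Prisoner's Dilemma, $s(D)>s(C)$). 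Hence $u_{\Sigma}=s+o$ can be ordered either way across the two branches, and when $u_{\Sigma}(\sigma_i^{>})>u_{\Sigma}(\sigma_i^{\leqslant})$, minimized-by-copies pushes $w(\tau_i)$ \emph{below} $w(\sigma_i)$: your key inequality reverses, the derivation of $s(\sigma_1)=s(\sigma_2)$ fails, and the sandwich collapses.

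Handling that bad case is the bulk of the paper's proof, and it requires tools absent from your proposal. The paper splits on the sign of $u_{\Sigma}(\sigma^{>})-u_{\Sigma}(\sigma^{\leqslant})$: when the below-threshold branch has weakly higher welfare it argues via copying, as you do (its cases A.2 and B.1); when the above-threshold branch has strictly higher welfare, the copy deviation is useless, and the paper instead invokes the \emph{other} deviation licensed by high-value uninformativeness --- submit a policy that itself plays the opponent's above-threshold strategy $\sigma_{-i}^{>}$ while being maximally dissimilar, thereby securing at least $u_i(\sigma_{-i}^{\mathrm{max}},\sigma_{-i}^{>})\geq u_i(\sigma_{-i}^{>},\sigma_{-i}^{>})=u_{\Sigma}(\sigma_{-i}^{>})$, which contradicts the equilibrium property (its cases A.1 and B.2). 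Even then there remains a degenerate sub-case ($\sigma_2=\sigma_2^{>}$, i.e., the opponent already plays the fallback with probability one) in which no profitable deviation exists; there the paper needs \Cref{lemma:one-player-plays-fallback-in-diff-NE} to conclude that the induced profile is a Nash equilibrium of the base game, and then \Cref{lemma:unique-NE-additive-games} (every additively decomposable symmetric game has a Pareto-dominant, equal-payoff equilibrium, which survives into the meta game as constant policies) to contradict Pareto-non-domination. To complete your proof you would have to reproduce this entire case analysis; the limit argument over the $\epsilon$ in high-value uninformativeness, which you do identify, is the only part of the remaining work that is routine.
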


We prove \Cref{theorem:uniqueness} in \Cref{appendix:proof-of-Pareto-uniqueness}. Roughly, we prove that under the given assumptions, equilibrium policies are more beneficial to the opponent when observing a diff value below the threshold than if they observe a diff value above the threshold. Second, we show that if in a given strategy profile Principal $i$ receives a lower utility than Principal $-i$, then Principal $i$ can increase her utility by submitting a copy of Principal $-i$'s policy.
\Cref{appendix:the-need-for-assumptions-in-uniqueness-theorem} shows why the assumptions (additive decomposability of the game and and high-value uninformativeness and symmetry of $\diff$) are necessary.

\section{Machine learning for similarity-based cooperation in complex games}

Our results so far demonstrate the theoretical viability of similarity-based cooperation, but leave open questions regarding its practicality. In complex environments, where cooperating and defecting are by themselves complex operations, can we find the cooperative equilibria for a given $\diff$ function with machine learning methods? %

\subsection{A novel pretraining method for similarity-based cooperation}
\label{sec:CCDR-pretraining}

We now describe \textit{Cooperate against Copies and Defect against Random (CCDR)}, a simple ML method to find cooperative equilibria in complex games. To use this method, we consider neural net policies $\ag_{\boldsymbol{\uptheta}}$ parameterized by a real vector $\boldsymbol{\uptheta}$.%
First, for any given diff game, let $V^d\colon \left(\mathbb{R}^m\right) ^ {\left(\mathbb{R}^{n+1}\right)} \times \left(\mathbb{R}^m\right) ^ {\left(\mathbb{R}^{n+1}\right)} \rightarrow \mathbb{R}^2$ be the utility of a version of the game in which $\diff$ is non-noisy.
CCDR trains a model $\ag_{\boldsymbol{\uptheta}_i}$ to maximize $V^d(\ag_{\boldsymbol{\uptheta}_i},\ag_{\boldsymbol{\uptheta}_i}) + V^d(\ag_{\boldsymbol{\uptheta}_i},\ag_{\boldsymbol{\uptheta}_{-i}'})$ for randomly sampled $\boldsymbol{\uptheta}_{-i}'$. That is, each player $i$ pretrains their policy $\ag_{\boldsymbol{\uptheta}_i}$ to do well in both of the following scenarios: principal $-i$ copies principal $i$'s model; and principal $-i$ generates a random model. The method is named for its intended effect in Prisoner's Dilemma-like games. Note, however, that it is well-defined in all symmetric games, not just Prisoner's Dilemma-like games.

CCDR pretraining is motivated by two considerations. First, in games like the Prisoner's Dilemma, there exist cooperative equilibria of policies that cooperate at a diff value of $0$ and defect as the perceived diff value increases. We give a toy model of this in \Cref{sec:function-equilibrium}. CCDR puts in place the rudimentary structure of these equilibria. Note, however, that CCDR does not directly optimize for the model's ability to form an equilibrium. Second, CCDR can be thought of as a form of curriculum training. %
Before trying to play diff games against other (different but similar) learned agents, we might first train a policy to solve two (conceptually and technically) easier related problems. %

\subsection{A high-dimensional one-shot Prisoner's Dilemma}
\label{sec:hdpd}

\begin{figure*}[t]
    \centering
    \includegraphics[width=0.45\textwidth]{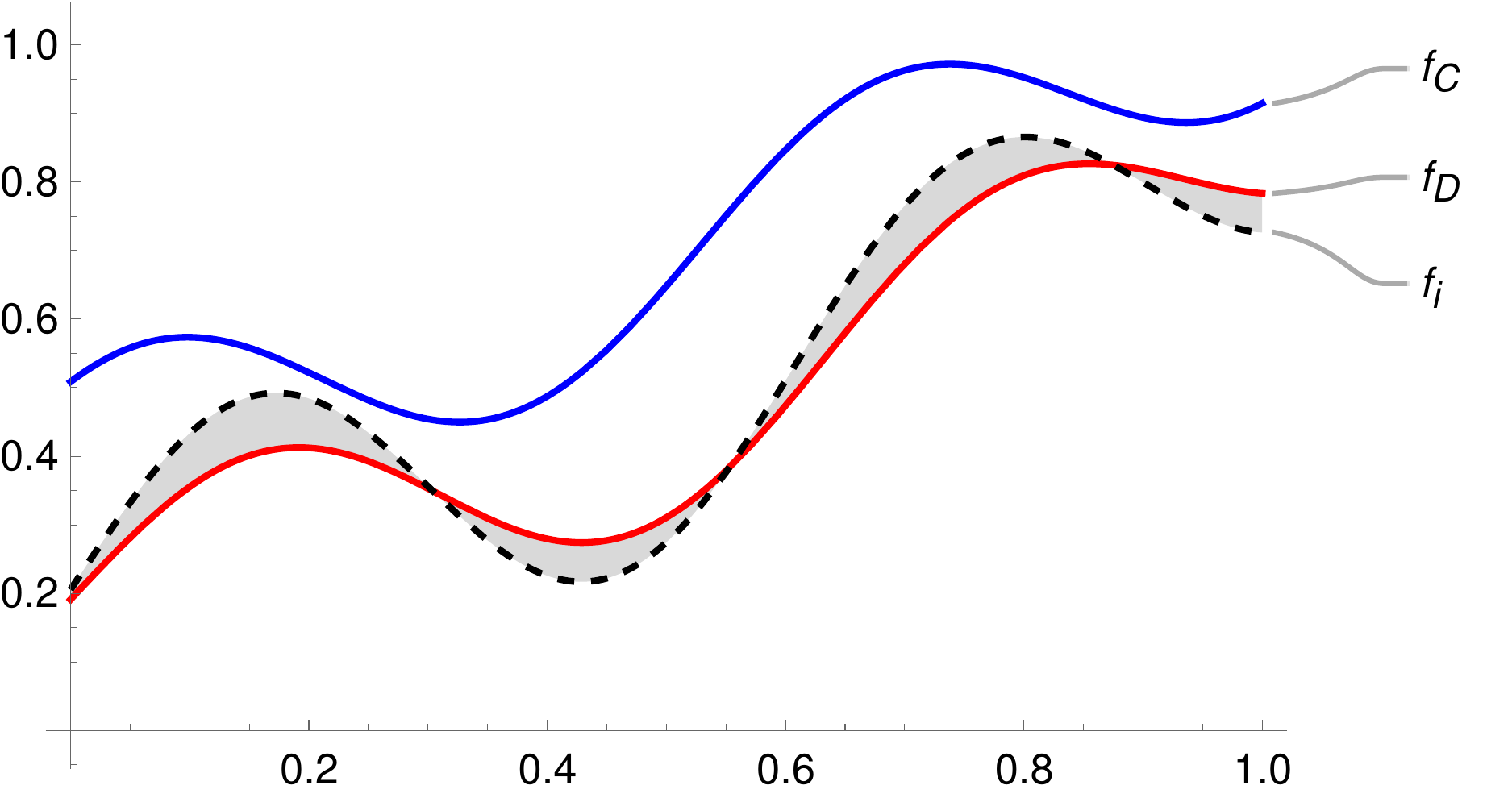}
    \includegraphics[width=0.45\textwidth]{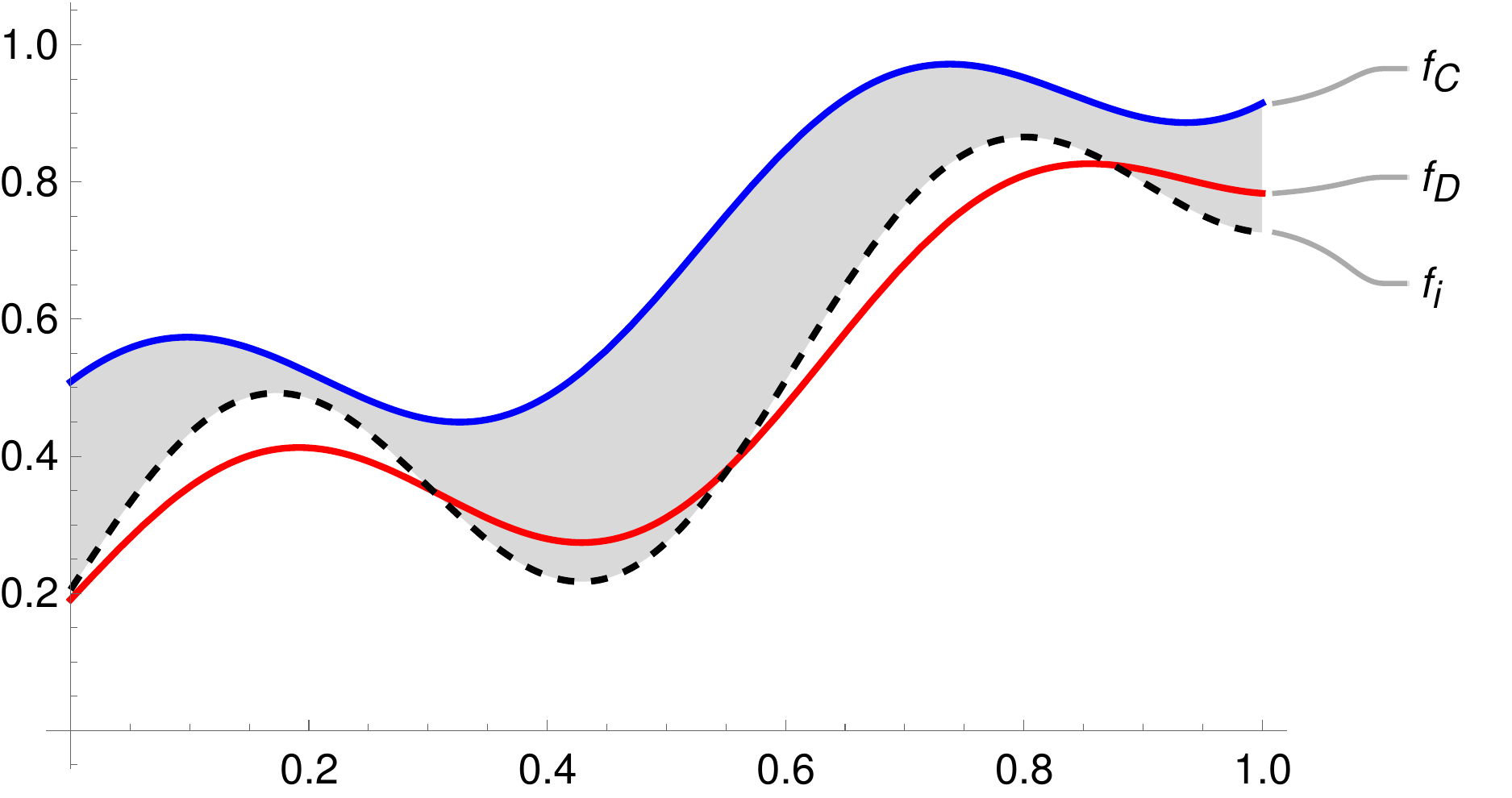}
    \caption{The figure illustrates how utilities are calculated in the HDPD. The function $f_i$ is an action chosen by Player $i$. The area between the curves $f_i$ and $f_D$ determines, intuitively, how much the agent defects; the area between the curves $f_i$ and $f_C$ determines how much it cooperates. The $f_i$ shown in the figure is much closer to defecting than to cooperation.}
    \label{fig:HDPD-pic-2d}
\end{figure*}

To study similarity-based cooperation in an ML context, we need a more complex version of the Prisoner's Dilemma. The complex Prisoner's Dilemma-like games studied in the multi-agent learning community generally offer other mechanisms that establish cooperative equilibria (e.g., playing a game repeatedly). For our experiments, however, we specifically need SBC to be the only mechanism to establish cooperation.

We therefore introduce a new game, the High-Dimensional (one-shot) Prisoner's Dilemma (HDPD). The goal is to give a variant of the one-shot Prisoner's Dilemma that is conceptually simple
but introduces scalable complexity that makes finding, for example, exact best responses in the diff meta game intractable.
In addition to $G$, the HDPD is parameterized by two functions $f_C,f_D\colon \mathbb R ^n \rightarrow \mathbb{R}^m$ representing the two actions Cooperate and Defect, respectively, as well as a probability measure $\mu$ over $\mathbb R ^n$.
Each player's action is also a function $f_i\colon \mathbb{R} ^n \rightarrow \mathbb{R}^m$. This is illustrated in \Cref{fig:HDPD-pic-2d} for the case of $n=1$ and $m=1$.
For any pair of actions $f_1,f_2$, payoffs are then determined as follows. First, we sample some $\mathbf{x}$ according to $\mu$ from $\mathbb{R}^n$. Then to determine how much Player $i$ cooperates, we consider the distance $d(f_i(\mathbf{x}),f_C(\mathbf{x}))$ to determine, roughly speaking, how much Player $1$ cooperates. %
The larger the distance the less cooperative is $f_i$.
In the case of $n=m=1$ and $\mu$ uniform, the expected distance between $f_i(x)$ and $f_D(x)$ is simply the area between the curves of $f_i$ and $f_D$, as visualized in \Cref{fig:HDPD-pic-2d}.
We analogously determine how much the players defect.
Formally, we define 
$u_i\left(f_1,f_2\right) = -\mathbb{E}_{\mathbf{x}\sim \mu}\left[d(f_{i}(\mathbf x),f_D(\mathbf{x})) +  G d(f_{-i}(\mathbf x),f_C(\mathbf x))\right]/\mathbb{E}_{\mathbf{x}\sim \mu}[d(f_C(\mathbf{x}),f_D(\mathbf{x}))]$.
Thus, the action $f_i=f_D$ corresponds to defecting and the action $f_C$ corresponds to cooperating, e.g., $\mathbf{u}(f_C,f_C)=(-1,-1)$ and $\mathbf{u}(f_D,f_D)=(-G,-G)$. The unique equilibrium of this game is $(f_D,f_D)$. In our experiments, we specifically used $G=5$. %

We consider a diff meta game on the HDPD. Formally, a diff-based policy for the HDPD is a function $\mathbb{R}\rightarrow \left( \mathbb{R}^m \right) ^ {\left(\mathbb{R} ^n\right)}$. For notational convenience, we will instead write policies as functions $\mathbb{R}^{n+1}\rightarrow \mathbb{R}^m$. We then define our diff function by
$\diff_i(\ag_1,\ag_2)=\mathbb{E}_{(y,\mathbf{x})\sim \nu}\left[ d(\ag_1(y,\mathbf{x}),\ag_2(y,\mathbf{x})) \right] + \diffnoise_i$,
where $\nu$ is some probability distribution over $\mathbb{R}^{n+1}$ and $\diffnoise_i$ is some real-valued noise.

\subsection{Experiments}
\label{sec:experiments}

\textbf{Experimental setup.}
We trained on the environment from \cref{sec:hdpd}.
We selected a fixed set of hyperparameters based on prior exploratory experiments and the theoretical considerations in \Cref{sec:function-equilibrium}. We then randomly initialized $\boldsymbol{\uptheta}_1$ and $\boldsymbol{\uptheta}_2$, CCDR-pretrained them (independently), and then trained $\boldsymbol{\uptheta}_1$ and $\boldsymbol{\uptheta}_2$ against each other using ABR. We repeated the experiment with 28 random seeds. As control, we also ran the experiment \textit{without} CCDR on 26 seeds. We also ran experiments with Learning with Opponent-Learning Awareness (LOLA) \citep{Foerster2018}, which we report in \Cref{appendix:experiments-with-LOLA}.

\textbf{Results.}
First, we observe that in the runs without CCDR pretraining, the players generally converge to mutual defection during alternating best response learning. In particular, in all 26 runs, at least one player's utility was below $-5$. Only two runs had a utility above $-5$ for one of the players ($-4.997$ and $-4.554$). The average utility across the 26 runs and across the two players was $-5.257$ with a standard deviation of $0.1978$. Anecdotally, these results are robust -- ABR without pretraining practically never finds cooperative equilibria in the HDPD.

Second, we observe that in all 28 runs, CCDR pretraining qualitatively yields the desired policy models, i.e., a policy that cooperates at low values of diff and gradually comes closer to defecting at high values of diff. \Cref{fig:example-step-2-success} shows a representative example.

\begin{figure*}[t]
\centering
\begin{subfigure}{.45\textwidth}
    \centering
    \includegraphics[width=\linewidth]{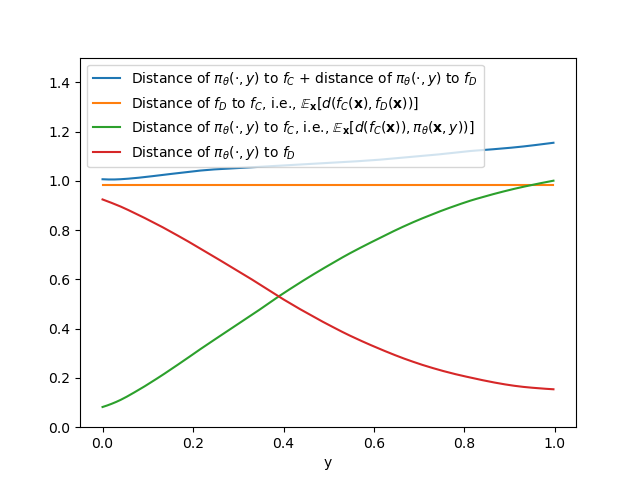}
    \caption{}
    \label{fig:example-step-2-success}
\end{subfigure}%
\begin{subfigure}{.55\textwidth}
    \centering
    \includegraphics[width=\linewidth]{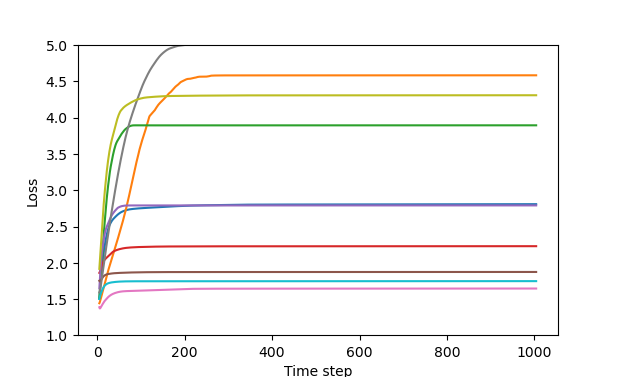}
    \caption{}
    \label{fig:loss-graph-10-runs}
\end{subfigure}
\caption{(a) The behavior of a CCDR-pretrained policy. For each perceived value of diff to the opponent $y$, the green line shows the expected distance of the learned policy's choice to $f_C$ (smaller means more cooperative) and the read line shows the expected distance to $f_D$. (b) Losses of Player 1 in 10 runs through the ABR phase.}
\label{fig:main-text-experimental-results}
\end{figure*}

Our main positive experimental result is that after CCDR pretraining, the models converged in alternating best response learning to a partially cooperative equilibrium in 26 out of 28 runs. Thus, the cooperative equilibria postulated in general by \Cref{theorem:folk-theorem} and in simplified examples by \Cref{prop-example:PD-threshold-uniform-NE,prop-example:PD-threshold-unimodal-NE} (as well as \Cref{prop:general-function-case-coop-eq}), do indeed exist and can be found with simple methods. The minimum utility of either player across the 26 successful runs was -4.854. The average utility across all runs and the two players was about -2.77 and thus a little closer to $u(f_C,f_C)=-1$ than to $u(f_D,f_D)=-5$. The standard deviation was about 1.19. \Cref{fig:loss-graph-10-runs} shows the losses (i.e., the negated utilities) across ABR learning. Generally, the policies also converge to receiving approximately the same utility (cf.\ \Cref{sec:uniqueness-theorem}). The average of the absolute differences in utility between the two players at the end of the 28 runs is about 0.04 with a standard deviation of 0.05. %
We see that in line with \Cref{theorem:uniqueness}, we tend to learn egalitarian equilibria in this symmetric, additively decomposable setting. After alternating best response learning, the models generally have a similar structure as the model in \Cref{fig:example-step-2-success}, though often they cooperate only a little at low diff values. Based on prior exploratory experiments, CCDR's success is moderately robust. 

\subsection{Discussion}
\label{sec:experiments-results-discussion}
Without pretraining, ABR learning unsurprisingly converges to mutual defection. This is due to a bootstrapping problem. Submitting a policy of the form \enquote{cooperate with similar policies, defect against different policies} is a unique best response against itself.
If the opponent model $\ag_{-i}$ is not of this form, then any policy $\ag_i$ that defects, i.e., that satisfies $\ag_i(\diff(\ag_i,\ag_{-i}))=f_D$, is a best response. Because $f_C$ is complex, learning a model that cooperates at all is unlikely. (Even if $f_C$ was simple, the appropriate use of the perceived $\diff$ value would still be specific and thus unlikely to be found by chance.)
Similar failures to find the more complicated cooperative equilibria by default have also been observed in the iterated PD (\citealt{Sandholm1996}; \citealt{Foerster2018}; \citealt{Letcher2019}) and in the open-source PD \citep{Hutter2020}%
. 
Opponent shaping methods have been used successfully to learn to cooperate both in the iterated Prisoner's Dilemma (\citealt{Foerster2018}; \citealt{Letcher2019}) and the open-source Prisoner's Dilemma \citep{Hutter2020}. Our experiments in \Cref{appendix:experiments-with-LOLA} show that LOLA can also learn SBC, but unfortunately not as robustly as CCDR pretraining.

CCDR pretraining reliably finds models that cooperate with each other and that continue to partially cooperate with each other throughout ABR training. This shows that when given some guidance, ABR can find SBC equilibria -- SBC equilibria have at least some ``basin of attraction''. Our experiments therefore suggest that SBC is a promising means of establishing cooperation between ML agents.

That said, CCDR has many limitations that we hope can be addressed in future work.
For one, in many games the best response against a randomly generated opponents does poorly against a rational opponent.  
Second, our experiments show that while the two policies almost fully cooperate after CCDR pretraining, they quickly partially unlearn to cooperate in the ABR phase. We would prefer a method that preserves closer to full cooperation throughout ABR-style training.
Third, while CCDR seems to often work, it can certainly fail in games in which SBC is possible. Learning to distinguish randomly sampled opponent policies from copies will in many settings not prepare an agent to distinguish cooperative/SBC opponents from uncooperative but trained (not randomly sampled) opponents. Consequently, CCDR may sometimes result in insufficiently steep incentive curves, cooperating with too dissimilar opponents.
We suspect that to make progress on the latter issues we need training procedures that more explicitly reason about incentives \textit{\`{a} la} opponent shaping (cf.\ our experiments with LOLA \Cref{appendix:experiments-with-LOLA}).

\section{Related work}
\label{sec:rel-work}

We here relate our project to the two most closely related lines of work. In \Cref{appendix:distantly-related-work} we discuss more distantly related lines of work.

\textbf{Program equilibrium.} We already discussed in \Cref{sec:introduction} the literature on program meta games in which players submit computer programs as policies and the programs fully observe each other's code (\citealt{McAfee1984}; \citealt{Howard1988}; \citealt[][Section 10.4]{Rubinstein1998}; \citealt{Tennenholtz2004}). %
Interestingly, some constructions for equilibria in program meta games are similarity based. For example, the earliest cooperative program equilibrium for the Prisoner's Dilemma, described in all four of the above-cited papers, is the program \enquote{Cooperate if the opponent's program is equal to this program; else Defect}. %
The program \enquote{cooperate if my cooperation implies cooperation from the opponent} proposed by \citet{Critch2022} is also similarity-based.
Other approaches to program equilibrium cannot be interpreted as similarity based, however \citep[see, e.g.,][]{Barasz2014,Critch2016,RobustProgramEquilibrium}.
To our knowledge, the only published work on ML in program equilibrium is due to \citet{Hutter2020}. It assumes the programs to have the structure proposed by \citet{RobustProgramEquilibrium} on simple normal-form games, thus leaving only a few parameters open. Similar to our experiments, Hutter shows that best response learning fails to converge to the cooperative equilibria. %
In Hutter's experiments, the opponent shaping methods LOLA \citep{Foerster2018} and SOS \citep{Letcher2019} converge to mutual cooperation.

\textbf{Decision theory and Newcomb's problem.} \citet{Brams1975} and \citet{Lewis1979} have pointed out that the Prisoner's Dilemma against a similar opponent closely resembles \textit{Newcomb's problem}, a problem first introduced to the decision-theoretical literature by \citet{Nozick1969}. %
Most of the literature on Newcomb's problem is about the normative, philosophical question of whether one should cooperate or defect in a Prisoner's Dilemma against an exact copy%
. Our work is inspired by the idea that in some circumstances one should cooperate with similar opponents. However, this literature only informally discusses the question of whether to also cooperate with agents other than exact copies (\citealt[e.g.,][]{Hofstadter1983}; \citealt[][Ch.\ 7]{Drescher2006}; \citealt[][Sect.\ 4.6.3]{Ahmed2014}). We address this question formally.

One idea behind the present project, as well as the program game literature, is to analyze a decision situation from the perspective of (actual or hypothetical) \textit{principals} who design policies. The principals find themselves in an ordinary strategic situation. This is how our analysis avoids the philosophical issues arising in the \textit{agent's} perspective. Similar changes in perspective have been discussed in the literature on Newcomb's problem (e.g., \citealt{Gauthier1989}; \citealt{ExAnteVsDeSe}). %

\section{Conclusion and future work}
\label{sec:conclusion}

We make a strong case for the promise of similarity-based cooperation as a means of improving outcomes from interactions between ML agents. At the same time, there are many avenues for future work. On the theoretical side, we would be especially interested in generalizations of \Cref{theorem:uniqueness}, that is, theorems that tell us what outcomes we should expect in diff meta games. Is it true more generally that under reasonable assumptions about the diff function, we can expect SBC to result in fairly specific, symmetric, Pareto-optimal outcomes? We are also interested in further experimental investigations of SBC. We hope that future work can improve on our results in the HDPD in terms of robustness and degree of cooperation. Besides that, we think a natural next step is to study settings in which the agents observe their similarity to one another in a more realistic fashion. For example, we conjecture that SBC can occur when the agents can determine that their policies were generated by similar learning procedures.

\unblindedonly{
\section*{Acknowledgments}

We thank Stephen McAleer, Emery Cooper, Daniel Filan, John Mori, and our anonymous reviewers helpful discussions and comments. We thank Maxime Rich\'e and the Center on Long-Term Risk for compute support.
Caspar Oesterheld and Vincent Conitzer would like to thank the Cooperative AI Foundation, Polaris Ventures (formerly the Center for Emerging Risk Research) and Jaan Tallinn's donor-advised fund at Founders Pledge for financial support. Roger Grosse acknowledges financial support from Open Philanthropy.
Caspar Oesterheld and Johannes Treutlein are grateful
for support by FLI PhD Fellowships. Johannes Treutlein was additionally supported by an OpenPhil AI PhD Fellowship.
}

\bibliography{refs}
\bibliographystyle{iclr2023_conference}

\clearpage

\begin{appendix}

\section{Preliminary game theory results}

We say that $\bm{\sigma}$ \textit{very weakly Pareto-dominates} $\hat{\bm{\sigma}}$ if for all $i$, we have that $u_i(\bm{\sigma})\geq u_i(\hat{\bm{\sigma}})$.

\begin{proposition}\label{lemma:unique-NE-additive-games}
Let $\Gamma$ be a two-player additively decomposable normal-form game. Then $\Gamma$ has a Nash equilibrium that very weakly Pareto-dominates all other Nash equilibria. If $\Gamma$ is furthermore symmetric, then in the Pareto-dominant equilibrium, both players receive the same utility.
\end{proposition}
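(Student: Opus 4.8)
The plan is to exploit the defining feature of additive decomposition: since $u_i(a_1,a_2)=u_{i,1}(a_1)+u_{i,2}(a_2)$, Player $i$'s payoff depends on her own action $a_i$ only through the term $u_{i,i}(a_i)$. Hence each player has a (possibly non-unique) dominant action, and her best-response correspondence is constant in the opponent. Writing $B_i\coloneqq \argmax_{a_i\in A_i} u_{i,i}(a_i)$ and $M_i\coloneqq \max_{a_i} u_{i,i}(a_i)$, the first step is to observe that a mixed profile $(\sigma_1,\sigma_2)$ is a Nash equilibrium if and only if $\supp(\sigma_i)\subseteq B_i$ for $i=1,2$; this is immediate from applying the best-response condition to the separable payoff. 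Consequently, in any equilibrium Player $i$'s payoff equals $M_i+\mathbb{E}_{\sigma_{-i}}[u_{i,-i}]$, and the cross-term is bounded above by $\max_{a_{-i}\in B_{-i}} u_{i,-i}(a_{-i})$ because $\supp(\sigma_{-i})\subseteq B_{-i}$.

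I would then construct the dominant equilibrium explicitly. Pick $a_1^*\in\argmax_{a_1\in B_1} u_{2,1}(a_1)$ and $a_2^*\in\argmax_{a_2\in B_2} u_{1,2}(a_2)$; that is, each player plays the opponent-optimal action among her own dominant actions. Since $a_i^*\in B_i$, the pure profile $(a_1^*,a_2^*)$ is an equilibrium by the characterization above. Its payoffs are $u_1(a_1^*,a_2^*)=M_1+\max_{a\in B_2}u_{1,2}(a)$ and $u_2(a_1^*,a_2^*)=M_2+\max_{a\in B_1}u_{2,1}(a)$, which are exactly the upper bounds on equilibrium payoffs derived in the previous step. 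Therefore $(a_1^*,a_2^*)$ very weakly Pareto-dominates every Nash equilibrium. The structural reason this works without conflict is that the two maximizations range over disjoint variables: improving Player $-i$'s payoff amounts to adjusting $a_i$ within $B_i$, which leaves Player $i$'s own self-term $M_i$ untouched.

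For the symmetric refinement, I would first use the relation $u_1(a_1,a_2)=u_2(a_2,a_1)$ to align the two decompositions. Holding $a_2$ fixed and varying $a_1$ shows that $u_{1,1}(a_1)-u_{2,2}(a_1)$ is constant, so $B_1=B_2\eqqcolon B$; holding $a_1$ fixed and varying $a_2$ shows that $u_{1,2}-u_{2,1}$ is constant, so $\argmax_{a\in B}u_{1,2}=\argmax_{a\in B}u_{2,1}\eqqcolon C$. Choosing any $a^*\in C$ and setting $a_1^*=a_2^*=a^*$ then realizes both opponent-optimal choices of the previous paragraph simultaneously, so $(a^*,a^*)$ is the Pareto-dominant equilibrium. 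Equality of the payoffs follows immediately by evaluating symmetry on the diagonal: $u_1(a^*,a^*)=u_2(a^*,a^*)$.

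I expect the only real subtlety — rather than a genuine obstacle — to be this symmetric step, where one must notice that the additive decomposition is unique only up to shifting constants between the self- and cross-terms, and handle this by arguing with the differences $u_{i,i}-u_{-i,-i}$ and $u_{i,-i}-u_{-i,i}$ (which are well-defined up to an additive constant and hence have canonical maximizer sets) rather than with the individual terms. Everything else reduces to the separability observation and a pair of independent maximizations.
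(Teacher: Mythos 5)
Your proof is correct and follows essentially the same route as the paper's: characterize Nash equilibria via separability (each $\sigma_i$ must be supported in $\argmax_{a_i} u_{i,i}(a_i)$), then select within that set an action maximizing the opponent's cross-term, and verify that the resulting profile attains every player's maximal equilibrium payoff. In fact, your handling of the symmetric case --- using the symmetry identity to show that $u_{1,1}-u_{2,2}$ and $u_{1,2}-u_{2,1}$ are constant, so that a single action $a^*$ realizes both opponent-optimal choices and $(a^*,a^*)$ is a Pareto-dominant equilibrium with equal payoffs --- is more explicit than the paper's proof, which only establishes Pareto-dominance and leaves the equal-utility claim unargued.
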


\begin{proof}
First note that for all $\sigma_{-i}$, Player $i$'s best responses are given by
\begin{equation*}
    \argmax_{a_i} u_i(a_i,\sigma_{-i})
    = \argmax_{a_i} u_{i,i}(a_i)+ u_{i,-i}(\sigma_{-i})
    = \argmax_{a_i} u_{i,i}(a_i).
\end{equation*}
Now among this set of universal best responses, let  let $a_i^*$ be one that maximizes $u_{-i,i}(a_i)$ for $i=1,2$. Clearly $\mathbf{a}^*$ is a Nash equilibrium.

Now let $\bm{\sigma}$ be any Nash equilibrium. Note that for $i=1,2$ the support of $\sigma_i$ must be in the above argmax. It follows that for $i=1,2$,
\begin{equation*}
    u_i(\mathbf{a}^*)=u_{i,i}(a_i^*) + u_{i,-i}(a_{-i}^*) = u_{i,i}(\sigma_i) + u_{i,-i}(a_{-i}^*) \geq u_{i,i}(\sigma_i) + u_{i,-i}(\sigma_{-i}) = u_i(\bm\sigma).
\end{equation*}
\end{proof}

\section{A detailed analysis of Example~\ref{example:PD-threshold}}
\label{appendix:a-detailed-analysis-of-example:PD-threshold}

\Cref{example:PD-threshold} is already surprisingly rich. We here provide a detailed analysis.

\examplePDthreshold*

\begin{figure}
    \centering
    \includegraphics[width=0.4\linewidth]{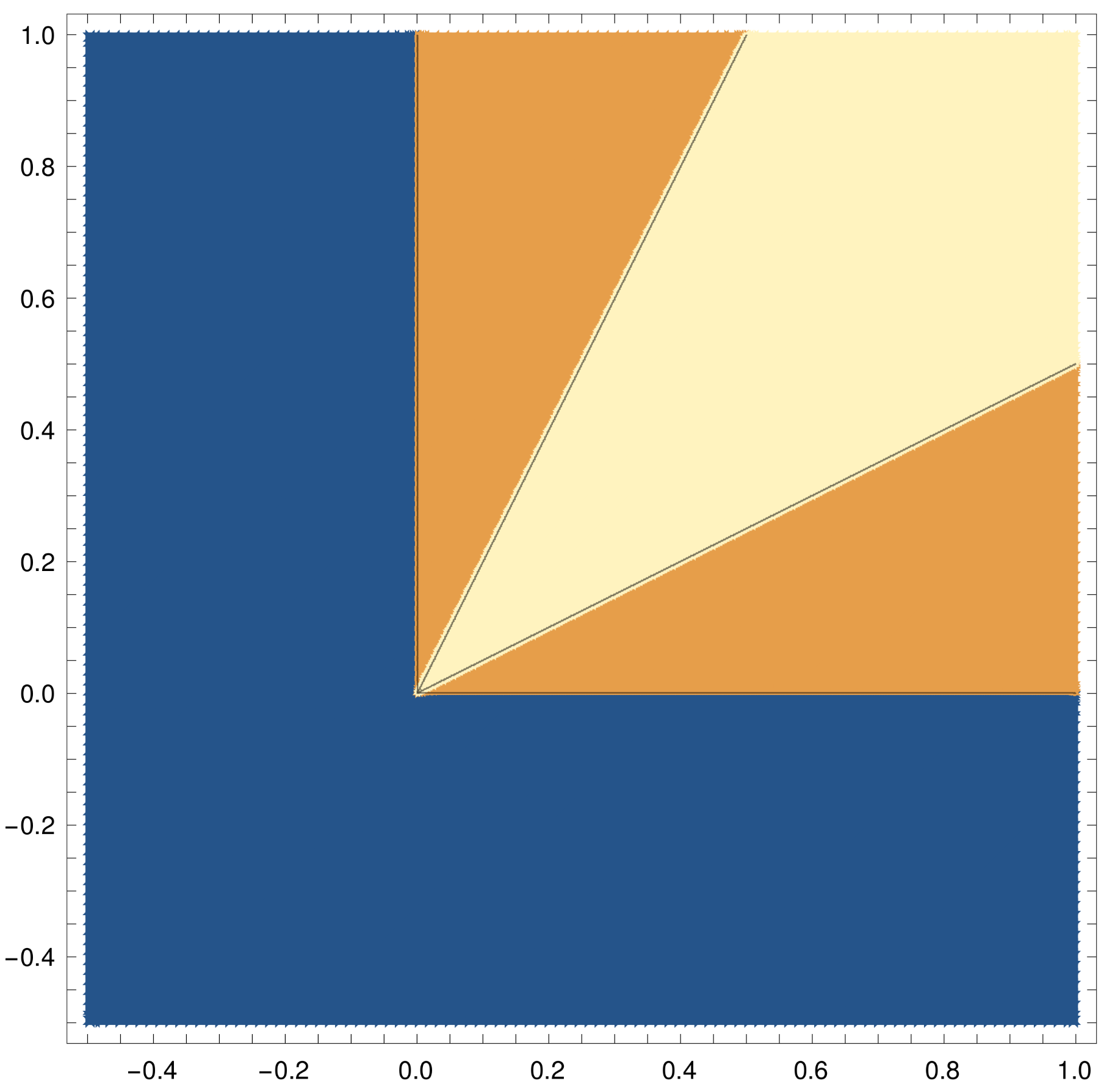}
    \caption{Visualization of outcomes as a function of the thresholds in \Cref{example:PD-threshold} without noise ($Z_1=Z_2=0$). For each pair of thresholds (x and y axis), the graph shows whether both players cooperate (yellow), both players defect (blue), or one player cooperates and the other defects (orange).}
    \label{fig:no-noise-outcomes}
\end{figure}

\begin{figure*}
\centering
\begin{subfigure}{.5\textwidth}
    \centering
    \includegraphics[width=0.9\linewidth]{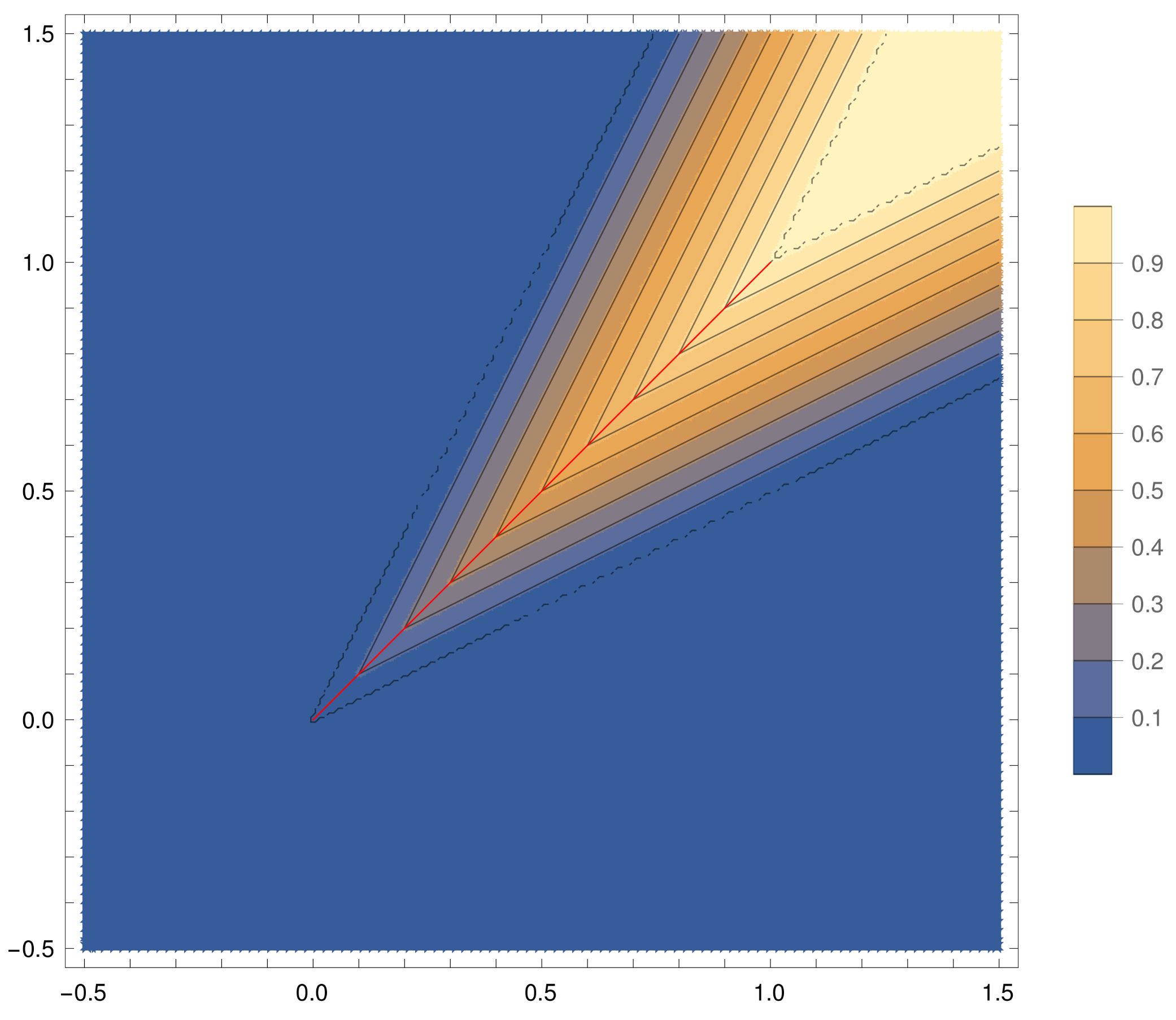}
    \caption{}
    \label{fig:unif-noise-min-c}
\end{subfigure}%
\begin{subfigure}{.5\textwidth}
    \centering
    \includegraphics[width=0.9\linewidth]{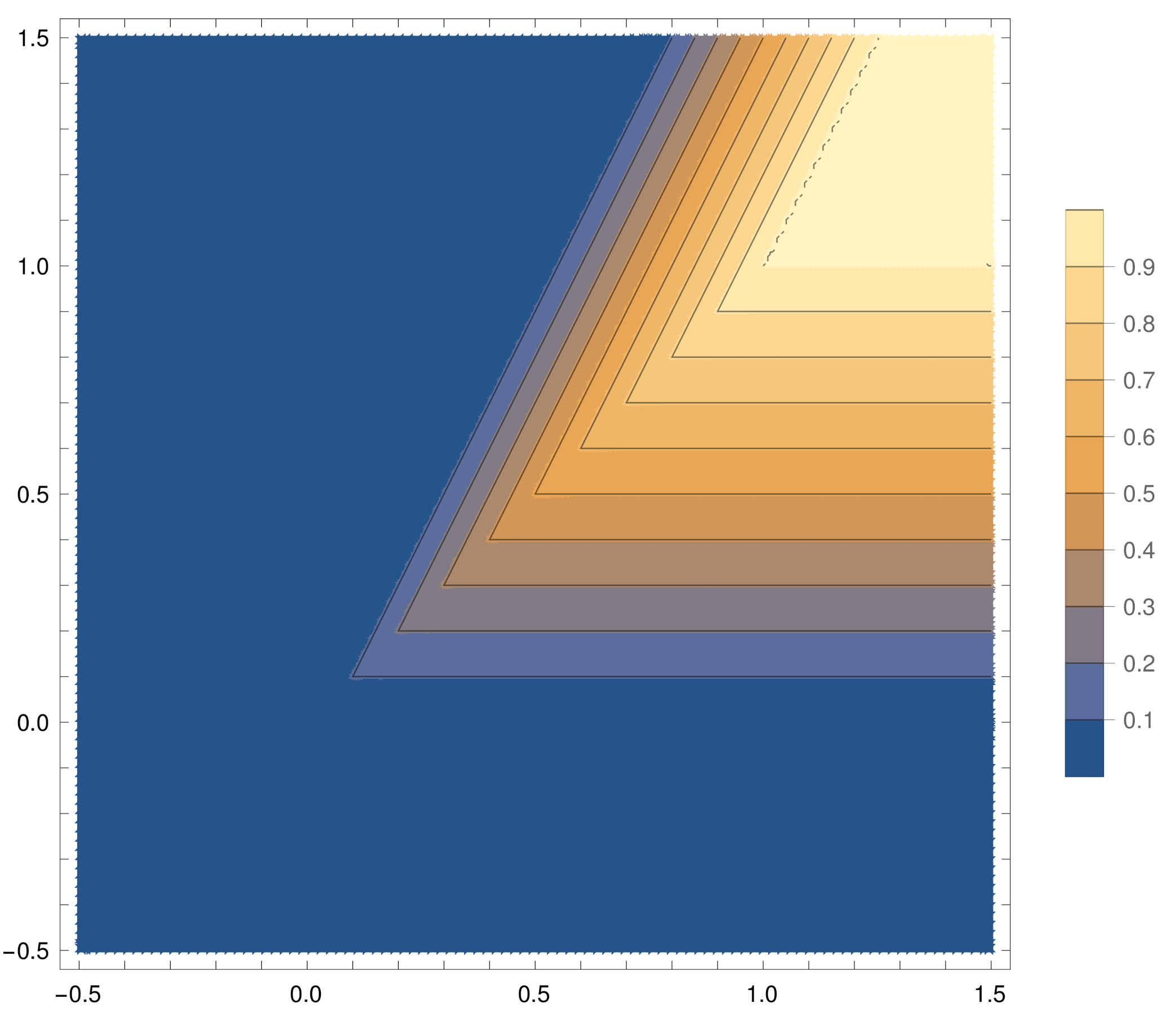}
    \caption{}
    \label{fig:unif-noise-x-action}
\end{subfigure}
\caption{Visualization of the probabilities of cooperation as a function of the thresholds in \Cref{example:PD-threshold} with uniform noise $Z_1,Z_2\sim \mathrm{Uniform}([0,1])$. For each pair of thresholds (x and y axis), the left graph shows the minimum probability of cooperation across the two players. The right-hand graph shows the probability of cooperation of the player corresponding to the x axis.}
\label{fig:unif-noise}
\end{figure*}

\Cref{fig:no-noise-outcomes,fig:unif-noise-x-action,fig:unif-noise-min-c} illustrate \Cref{example:PD-threshold}. Specifically, \Cref{fig:no-noise-outcomes} considers the case without noise ($\diffnoise_i=0$) and shows for each pair of thresholds $\theta_1,\theta_2$ whether both agents cooperate (yellow), only one agent (the one with the lower threshold) cooperates (orange), or both defect (blue). \Cref{fig:unif-noise-min-c} and \Cref{fig:unif-noise-x-action} consider the case $\diffnoise_i\sim \mathrm{Uniform}([0,1])$, i.e., the case where noise is drawn uniformly from $[0,1]$. \Cref{fig:unif-noise-min-c} shows for each pair of thresholds the minimum probability of cooperation across the two players. For instance, if Player 1 submits $0.5$ and Player 2 submits $1$, then Player 1 cooperates with probability 0 and and Player 2 cooperates with probability $\nicefrac{1}{2}$, so the plot in \Cref{fig:unif-noise-min-c} is $0$ (blue) at $(0.5,1)$ (and symmetrically at $(1,0.5)$). \Cref{fig:unif-noise-x-action} shows the action of the agent whose threshold is given by the x axis.

Because we here restrict attention to policies of type $(C,\theta,D)$, policies are uniquely specified by a single real number $\theta$. So we will denote them as such.

In addition to threshold policies that correspond to real numbers, we will here consider the agents $-\infty$ by which we mean the agent that always defects, and the agent $\infty$, by which we mean the agent that always cooperates.

One might suspect that if there is too much noise, there can be no cooperative equilibria. But it's easy to see that the setting of \Cref{example:PD-threshold} is scale-invariant.
\begin{proposition}[Scale invariance of noise]\label{prop:scale-invariance-PD}
Let $(\Gamma,\mathcal{A}_1,\mathcal{A}_2,\diff)$ be a $\diff$-based meta game with utility $V$, where $\diff(\theta_1,\theta_2)=|\theta_1-\theta_2|+ \diffnoise_i$. Further, let $(\Gamma,\mathcal{A}_1,\mathcal{A}_2,\diff')$ be a $\diff$-based meta game with utility $V'$, where $\diff'(\theta_1,\theta_2)=|\theta_1-\theta_2|+ \alpha \diffnoise_i$ for some $\alpha>0$. Then for all $\theta_1,\theta_2$, $V(\theta_1,\theta_2)=V'(\alpha\theta_1,\alpha\theta_2)$. It follows that for all $\theta_1,\theta_2$, $(\theta_1,\theta_2)$ is a Nash equilibrium in the $\diff$ meta game if and only if $(\alpha\theta_1,\alpha\theta_2)$ is a Nash equilibrium in the $\diff'$ meta game.
\end{proposition}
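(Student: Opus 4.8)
The plan is to reduce the entire claim to a single observation: the \emph{event} that player $i$ cooperates is invariant under simultaneously scaling both thresholds and the noise by $\alpha$. Since in \Cref{example:PD-threshold} each policy $\theta_i$ is of type $(C,\theta_i,D)$, player $i$ cooperates exactly when the perceived difference falls below $\theta_i$. In the $\diff$ game this means player $i$ cooperates iff $|\theta_1-\theta_2|+\diffnoise_i\leq \theta_i$, i.e.\ iff $\diffnoise_i\leq \theta_i-|\theta_1-\theta_2|$. In the $\diff'$ game evaluated at the scaled profile $(\alpha\theta_1,\alpha\theta_2)$, player $i$ cooperates iff $|\alpha\theta_1-\alpha\theta_2|+\alpha\diffnoise_i\leq \alpha\theta_i$; because $\alpha>0$ we may divide through by $\alpha$ to obtain precisely $|\theta_1-\theta_2|+\diffnoise_i\leq\theta_i$, the identical condition.

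First I would make this rigorous by \textbf{coupling} the two games on the same noise draws $\diffnoise_1,\diffnoise_2$. Under this coupling the cooperation/defection outcome of each player in $(\Gamma,\mathcal{A}_1,\mathcal{A}_2,\diff)$ at $(\theta_1,\theta_2)$ agrees sample-path-by-sample-path with the outcome in $(\Gamma,\mathcal{A}_1,\mathcal{A}_2,\diff')$ at $(\alpha\theta_1,\alpha\theta_2)$. Hence the induced random action profiles in the base game $\Gamma$ have the same joint distribution, and taking expectations of $\mathbf{u}$ yields $V(\theta_1,\theta_2)=V'(\alpha\theta_1,\alpha\theta_2)$ for all $\theta_1,\theta_2$. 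This is the first sentence of the proposition.

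For the equilibrium equivalence, I would use that $\phi\mapsto\alpha\phi$ is a bijection of $\mathbb{R}$ onto itself (as $\alpha>0$), and that the strategy set $\hat A_i=\{(C,\theta_i,D)\mid\theta_i\in\mathbb{R}\}$ is the same in both games and is mapped onto itself by this scaling. Fixing player $-i$'s threshold, $\theta_i$ is a best response against $\theta_{-i}$ in $V$ iff $V_i(\theta_i,\theta_{-i})\geq V_i(\theta_i',\theta_{-i})$ for all $\theta_i'$; rewriting both sides via the identity $V_i(\cdot,\cdot)=V_i'(\alpha\,\cdot,\alpha\,\cdot)$ and noting that $\theta_i'$ ranges over $\mathbb{R}$ exactly as $\alpha\theta_i'$ does, this is equivalent to $\alpha\theta_i$ being a best response against $\alpha\theta_{-i}$ in $V'$. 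Applying this for $i=1,2$ gives the Nash-equilibrium correspondence.

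I do not expect a genuine obstacle here; the result is a clean change of variables. The only points requiring care are (i) coupling the noise so that the two outcomes are compared on the \emph{same} randomness rather than merely in distribution, and (ii) checking that, because $\alpha>0$, dividing the scaled inequality by $\alpha$ preserves its direction and that the scaling is a strategy-space bijection, so that the range of admissible deviations is preserved and no best responses are lost or gained.
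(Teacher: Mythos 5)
Your proof is correct: the coupling of the noise draws makes the cooperation events at $(\theta_1,\theta_2)$ under $\diff$ and at $(\alpha\theta_1,\alpha\theta_2)$ under $\diff'$ coincide pathwise, which gives $V(\theta_1,\theta_2)=V'(\alpha\theta_1,\alpha\theta_2)$, and the bijectivity of $\theta\mapsto\alpha\theta$ on the threshold-policy space then transfers best responses and hence Nash equilibria. The paper states this proposition without a written proof (treating it as immediate scale invariance), and your change-of-variables argument is exactly the intended justification, so there is nothing to reconcile.
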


\subsection{Best responses}

In the regular Prisoner's Dilemma, defecting strictly dominates cooperating. Similarly, in the diff meta game of \Cref{example:PD-threshold}, always defecting strictly dominates always cooperating (without looking at the difference to the opponent).

\begin{definition}
    Let $(A_1,A_2,\mathbf{u})$ be a normal-form game. Let $a_1,a_1'\in A_1$ be strategies for Player 1. We say that $a_1$ \textit{very weakly dominates} $a_1'$ if for all $a_2\in A_2$ we have that $u_1(a_1,a_2)\geq u_1(a_1,a_2)$. We further say that $a_1$ \textit{weakly dominates} $a_1'$ if the inequality is strict for at least one $a_2$ and that $a_1$ \textit{strictly dominates} $a_1$ if the inequality is strict for all $a_2$.
\end{definition}

\begin{proposition}
The threshold policy $-\infty$ strictly dominates the threshold policy $\infty$.
\end{proposition}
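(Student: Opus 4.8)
Recall that in the notation of this appendix, $-\infty$ is the policy that always plays $D$ and $\infty$ the policy that always plays $C$, independently of the perceived diff. By the definition of strict dominance we must show that $V_1(-\infty,\ag_2) > V_1(\infty,\ag_2)$ for \emph{every} opponent policy $\ag_2$ (finite threshold policy or $\pm\infty$). The plan is to reduce this to the elementary base-game fact that $D$ strictly dominates $C$ in the Prisoner's Dilemma, which holds because $G{+}1 > G$ and $1 > 0$ for all $G>1$. The reason this reduction is not immediate is that in a diff meta game Player 1's choice of policy can in principle alter Player 2's perceived diff and hence Player 2's realized action, so one cannot simply invoke base-game dominance.

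The crux of the argument is therefore to show that Player 2 plays \emph{the same} strategy whether Player 1 submits $\infty$ or $-\infty$; call this common strategy $\sigma_2$. I would argue this by cases on $\ag_2$. If $\ag_2=(C,\theta_2,D)$ with $\theta_2\in\mathbb{R}$ finite, then $\diff_2$ with either of Player 1's two policies equals $|\pm\infty - \theta_2| + \diffnoise_2 = \infty$ almost surely, since $\diffnoise_2$ is a real-valued (hence a.s.\ finite) random variable; as $\infty > \theta_2$, Player 2 defects with probability $1$ in both cases. If instead $\ag_2\in\{\infty,-\infty\}$, then Player 2's action is by definition independent of the perceived diff, so it is again identical against both of Player 1's policies. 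In every case Player 2's induced play is some fixed $\sigma_2\in\Delta(A_2)$ that does not depend on whether Player 1 chose $\infty$ or $-\infty$.

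Given this, the conclusion is short: $V_1(-\infty,\ag_2) = u_1(D,\sigma_2)$ and $V_1(\infty,\ag_2) = u_1(C,\sigma_2)$, and since $u_1(D,a_2) > u_1(C,a_2)$ holds for each pure $a_2\in\{C,D\}$, taking the expectation over $\sigma_2$ preserves the strict inequality, giving $V_1(-\infty,\ag_2) > V_1(\infty,\ag_2)$. I expect the main obstacle to be conceptual rather than computational: one must verify that the two constant policies are genuinely \emph{information-equivalent} from every opponent's vantage point, i.e.\ that neither $\infty$ nor $-\infty$ leaks anything that would let the opponent ``reward'' cooperation. This is exactly what fails for generic policies (a policy cooperating at low diff can induce opponent cooperation, so $D$ does not dominate $C$ in the meta game at large), which is why the statement is specific to $\pm\infty$.
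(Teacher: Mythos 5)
Your proof is correct. The paper states this proposition without giving any proof, and your argument -- showing that every admissible opponent policy (finite-threshold or constant) plays the identical strategy whether Player 1 submits $\infty$ or $-\infty$, since a finite-threshold opponent perceives an infinite diff (noise being a.s.\ finite) and hence defects in both cases, and then invoking the strict dominance of $D$ over $C$ in the base game -- is exactly the natural argument the paper evidently intends, with the one genuinely non-trivial point (that the meta game's informational channel is inert between these two particular policies) correctly identified and handled.
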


Intuitively, in our model there is never a reason submit a policy that defects when it can be sure that it faces an exact copy. If the noise is lower-bounded, this puts a lower bound on what kind of agent it makes sense to submit, as we now show.

\begin{proposition}
Let $\diffnoise_i\geq \theta_i$ with certainty. Let $\theta_i'<\theta_{i}$. Then $\theta_i$ very weakly dominates $\theta_i'$. If $P(\diffnoise =\theta_i)>0$, $\theta_i$ weakly dominates $\theta_i'$.
\end{proposition}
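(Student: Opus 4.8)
The plan is to compute Player $i$'s meta-game payoff $V_i(\theta,\theta_{-i})$ explicitly for the two candidate thresholds $\theta\in\{\theta_i,\theta_i'\}$ and every opponent threshold $\theta_{-i}$, then compare. Using the additive decomposition of the PD ($u_{i,i}\colon C\mapsto 0, D\mapsto 1$ and $u_{i,-i}\colon C\mapsto G, D\mapsto 0$), Player $i$'s expected utility when she submits $\theta$ against $\theta_{-i}$ is
\[
V_i(\theta,\theta_{-i}) = P(\text{$i$ defects}) + G\cdot P(\text{$-i$ cooperates}),
\]
where, by the definition of the diff function in \Cref{example:PD-threshold}, $i$ cooperates iff $\diffnoise_i \le \theta - |\theta-\theta_{-i}|$ and $-i$ cooperates iff $\diffnoise_{-i}\le \theta_{-i}-|\theta-\theta_{-i}|$. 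The one hypothesis I would lean on throughout is that the two noise terms are identically distributed (as the unsubscripted $\diffnoise$ in the statement suggests), so that $\diffnoise_i\ge\theta_i$ with certainty forces $\diffnoise_{-i}\ge\theta_i$ as well.

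Next I would show that the smaller threshold $\theta_i'$ yields $V_i(\theta_i',\theta_{-i})=1$ for every $\theta_{-i}$. The elementary inequality $t-|s-t|\le s$ (with equality iff $t\ge s$) gives both $\theta_i'-|\theta_i'-\theta_{-i}|\le \theta_i'$ and $\theta_{-i}-|\theta_i'-\theta_{-i}|\le\theta_i'$. Since $\theta_i'<\theta_i\le\diffnoise_i$ and $\theta_i'<\theta_i\le\diffnoise_{-i}$ almost surely, both cooperation events are impossible: submitting $\theta_i'$ makes Player $i$ defect with certainty and, crucially, makes the opponent defect with certainty as well, regardless of $\theta_{-i}$. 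Hence $V_i(\theta_i',\theta_{-i}) = 1 + G\cdot 0 = 1$ identically.

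Finally I would bound $V_i(\theta_i,\theta_{-i})$ below by $1$. If $\theta_{-i}\ne\theta_i$ then $\theta_i-|\theta_i-\theta_{-i}|<\theta_i\le\diffnoise_i$, so Player $i$ again defects with certainty and $V_i(\theta_i,\theta_{-i}) = 1 + G\cdot P(\text{$-i$ cooperates})\ge 1$. If $\theta_{-i}=\theta_i$ (an exact copy), the perceived diff is just the noise, so Player $i$ cooperates exactly when $\diffnoise_i=\theta_i$ and the opponent when $\diffnoise_{-i}=\theta_i$; writing $q = P(\diffnoise_i=\theta_i)=P(\diffnoise_{-i}=\theta_i)$,
\[
V_i(\theta_i,\theta_i) = (1-q) + Gq = 1 + (G-1)q \ge 1,
\]
using $G>1$. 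Comparing with $V_i(\theta_i',\cdot)\equiv 1$ establishes very weak domination everywhere, and the copy case is strict precisely when $q=P(\diffnoise=\theta_i)>0$, giving weak domination in that case.

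The main obstacle, and the step I would be most careful about, is the claim that $\theta_i'$ forces the \emph{opponent} to defect as well. This is exactly where identical distribution of the noise does the work, since it is what lets me conclude $\diffnoise_{-i}\ge\theta_i$. Without a lower bound on the opponent's noise the statement can fail: an opponent with a low threshold and heavy lower tail could cooperate \emph{more} against the smaller-diff policy $\theta_i'$ than against $\theta_i$, so I would flag this hypothesis explicitly rather than treat the opponent's behavior as automatic.
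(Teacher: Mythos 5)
Your proof is correct, and there is nothing in the paper to compare it against: the paper states this proposition in the appendix \emph{without} proof, so your computation supplies the missing argument, and it rigorously implements exactly the intuition the paper gestures at in the surrounding text (``there is never a reason to submit a policy that defects when it can be sure that it faces an exact copy''). The case analysis is sound: with $\theta_i'$, both players' cooperation events require noise at most $\theta_i'<\theta_i$, so $V_i(\theta_i',\theta_{-i})\equiv 1$; with $\theta_i$, Player $i$'s payoff is $1+G\,P(\diffnoise_{-i}=\theta_i)$ when $\theta_{-i}>\theta_i$, equals $1+(G-1)\,P(\diffnoise_i=\theta_i)$ at the copy $\theta_{-i}=\theta_i$, and equals $1$ when $\theta_{-i}<\theta_i$, so it is everywhere at least $1$, with strictness at the copy exactly when $P(\diffnoise=\theta_i)>0$. (For completeness one could also check the constant agents $-\infty$ and $\infty$ that the appendix adds to the strategy space, but both thresholds induce identical play against them, so nothing changes.) Your closing caveat is the most valuable part of the write-up and is well taken: as literally written the hypothesis bounds only Player $i$'s own noise, and then the claim is false --- for $\theta_{-i}<\theta_i'<\theta_i$ the opponent cooperates iff $\diffnoise_{-i}\leq 2\theta_{-i}-\theta$ when facing threshold $\theta$, so if $\diffnoise_{-i}$ has mass in $(2\theta_{-i}-\theta_i,\,2\theta_{-i}-\theta_i']$ then $V_i(\theta_i,\theta_{-i})<V_i(\theta_i',\theta_{-i})$ and even very weak dominance fails. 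Since the paper treats the noise as i.i.d.\ across players in every concrete result about \Cref{example:PD-threshold} and writes the unsubscripted $\diffnoise$ in this very proposition, your identically-distributed reading is clearly the intended one, but the statement would benefit from making that assumption explicit.
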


The next result shows that if the player who submits the higher threshold decreases her threshold while still staying above the other player's threshold, she cooperates with the same probability. Conversely, one cannot (in the setting of \Cref{example:PD-threshold}) decrease one's probability of threshold by increasing one's threshold.

\begin{lemma}\label{lemma:high-decrease-threshold-no-change}
Let $\theta_i,\theta_i',\theta_{-i}\in\mathbb{R}$ with $\theta_i\geq\theta_i'\geq\theta_{-i}$. Then in $(\theta_i,\theta_{-i})$, Player $i$ cooperates with equal probability as in $(\theta_i',\theta_{-i})$.
\end{lemma}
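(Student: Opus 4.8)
The plan is to compute Player $i$'s cooperation probability in each of the two profiles directly and to observe that it does not depend on which of $\theta_i,\theta_i'$ she submits, as long as her threshold stays at or above $\theta_{-i}$. So this is really a one-line cancellation rather than a structural argument.

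First I would record the cooperation event explicitly. By the definition of threshold policies and of $\diff$ in \Cref{example:PD-threshold}, in the profile $(\theta_i,\theta_{-i})$ Player $i$ cooperates exactly when her perceived difference does not exceed her own threshold, i.e., when $|\theta_i-\theta_{-i}|+\diffnoise_i\le\theta_i$. Hence her cooperation probability equals $P(|\theta_i-\theta_{-i}|+\diffnoise_i\le\theta_i)$, and likewise $P(|\theta_i'-\theta_{-i}|+\diffnoise_i\le\theta_i')$ for the primed threshold.

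The key step is the cancellation that occurs once the submitting player's threshold is at least $\theta_{-i}$. In that regime $|\theta_i-\theta_{-i}|=\theta_i-\theta_{-i}$, so the event $|\theta_i-\theta_{-i}|+\diffnoise_i\le\theta_i$ simplifies to $\theta_i-\theta_{-i}+\diffnoise_i\le\theta_i$, i.e., to $\diffnoise_i\le\theta_{-i}$, from which the dependence on $\theta_i$ has dropped out entirely. Intuitively, raising one's own threshold while remaining above the opponent's threshold raises the perceived difference (through $|\theta_i-\theta_{-i}|$) by exactly the same amount as it raises the bar $\theta_i$ for cooperating, so the two effects cancel exactly.

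Applying this linearization to both $\theta_i$ and $\theta_i'$ — which is legitimate precisely because the hypotheses give $\theta_i\ge\theta_{-i}$ \emph{and} $\theta_i'\ge\theta_{-i}$ — yields cooperation probability $P(\diffnoise_i\le\theta_{-i})$ in each case, and these two expressions are identical, giving the claim. I do not anticipate any genuine obstacle here: the entire content is the observation that the absolute value linearizes to $\theta_i-\theta_{-i}$ in this regime and then cancels against the threshold. The only point requiring care is to invoke $\theta_i'\ge\theta_{-i}$ (and not merely $\theta_i\ge\theta_i'$), since the same linearization must be valid for the primed profile for the two probabilities to coincide.
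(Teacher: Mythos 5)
Your proposal is correct and is essentially identical to the paper's own proof: both express the cooperation event as $|\theta_i-\theta_{-i}|+\diffnoise_i\le\theta_i$, linearize the absolute value using $\theta_i\ge\theta_{-i}$ (resp.\ $\theta_i'\ge\theta_{-i}$), and observe the cancellation down to $\diffnoise_i\le\theta_{-i}$, which is threshold-independent. Your explicit remark that the hypothesis $\theta_i'\ge\theta_{-i}$ (not merely $\theta_i\ge\theta_i'$) is what licenses the second linearization is a point the paper leaves implicit, but the argument is the same.
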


\begin{proof}
\begin{eqnarray*}
P(\text{Pl. $i$ C's}\mid \theta_i,\theta_{-i}) &=& P\left(\diffnoise_i + \theta_i-\theta_{-i} \leq \theta_i \right)\\
&=& P\left(\diffnoise \leq \theta_{-i} \right)\\
&=& P\left(\diffnoise_i + \theta_i'-\theta_{-i} \leq \theta_i' \right)\\
&=& P(\text{Pl. $i$ C's}\mid \theta_i',\theta_{-i})
\end{eqnarray*}
\end{proof}

\begin{theorem}\label{thm:better-to-copy-opponent-threshold}
Let $\theta_i,\theta_{-i}\in\mathbb{R}$ with $\theta_i>\theta_{-i}$. Then $u_i(\theta_{-i},\theta_{-i})\geq u_i(\theta_i,\theta_{-i})$. The inequality is strict if and only if $P(2\theta_{-i} - \theta_i\leq \diffnoise_{-i}\leq \theta_{-i})>0$.
\end{theorem}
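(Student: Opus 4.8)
The plan is to reduce the theorem to a single comparison of two probabilities by exploiting the additive decomposition of the Prisoner's Dilemma recorded in the Background. Writing the decomposition as $u_{i,i}\colon C\mapsto 0,\, D\mapsto 1$ and $u_{i,-i}\colon C\mapsto G,\, D\mapsto 0$, I would first note that Player $i$'s meta-game payoff splits into a self-term and an opponent-term,
\[
u_i(\theta_i,\theta_{-i}) = P(\text{Pl.\ } i \text{ D's}\mid \theta_i,\theta_{-i}) + G\cdot P(\text{Pl.\ } {-i} \text{ C's}\mid \theta_i,\theta_{-i}),
\]
because each player's action contributes to $u_i$ independently of the other's. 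The entire claim then becomes a comparison of this expression at the profile $(\theta_i,\theta_{-i})$ with its value at the ``copying'' profile $(\theta_{-i},\theta_{-i})$.

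Next I would argue that the self-term is identical at both profiles. Setting $\theta_i'=\theta_{-i}$ in \Cref{lemma:high-decrease-threshold-no-change} (whose hypothesis $\theta_i\geq\theta_i'\geq\theta_{-i}$ reduces here to $\theta_i\geq\theta_{-i}$, which holds since $\theta_i>\theta_{-i}$), Player $i$ cooperates with the same probability under $(\theta_i,\theta_{-i})$ as under $(\theta_{-i},\theta_{-i})$; hence $P(\text{Pl.\ } i \text{ D's})$ is unchanged and the two self-terms cancel in the difference. This leaves the payoff gap governed entirely by how much copying changes the opponent's cooperation probability. I would then compute that probability straight from the $\diff$ function: with $\theta_i>\theta_{-i}$ the distance is $\theta_i-\theta_{-i}$, so Player $-i$ cooperates under $(\theta_i,\theta_{-i})$ iff $(\theta_i-\theta_{-i})+\diffnoise_{-i}\leq\theta_{-i}$, i.e.\ iff $\diffnoise_{-i}\leq 2\theta_{-i}-\theta_i$, whereas under $(\theta_{-i},\theta_{-i})$ the distance is $0$ and she cooperates iff $\diffnoise_{-i}\leq\theta_{-i}$.

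Combining these two steps gives
\[
u_i(\theta_{-i},\theta_{-i}) - u_i(\theta_i,\theta_{-i}) = G\big(P(\diffnoise_{-i}\leq\theta_{-i}) - P(\diffnoise_{-i}\leq 2\theta_{-i}-\theta_i)\big) = G\,P\big(2\theta_{-i}-\theta_i < \diffnoise_{-i}\leq \theta_{-i}\big).
\]
Since $\theta_i>\theta_{-i}$ forces $2\theta_{-i}-\theta_i<\theta_{-i}$, the indicated interval is nondegenerate and its probability is nonnegative, which yields the weak inequality; it is strictly positive precisely when that interval carries positive mass, giving the strictness characterization. The main obstacle I anticipate is not the structure of the argument but the bookkeeping at the endpoints: I must respect the convention that a threshold policy cooperates on the closed event $\{d\leq\theta\}$, and take care at the left endpoint $2\theta_{-i}-\theta_i$. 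My computation produces the half-open interval $(2\theta_{-i}-\theta_i,\theta_{-i}]$, which agrees with the stated closed-left interval for every continuous noise distribution (e.g.\ the uniform and normal cases studied elsewhere) and differs only in the degenerate situation where $\diffnoise_{-i}$ places an atom exactly at $2\theta_{-i}-\theta_i$ with no mass strictly above it; I would therefore either verify the intended continuous reading or record the correct half-open form.
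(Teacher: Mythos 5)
Your proof is correct and takes essentially the same route as the paper's: both use \Cref{lemma:high-decrease-threshold-no-change} to conclude that Player $i$'s own cooperation probability is unchanged when she lowers her threshold to $\theta_{-i}$, and then reduce the claim to comparing Player $-i$'s cooperation probabilities $P(\diffnoise_{-i}\leq \theta_{-i})$ versus $P(\diffnoise_{-i}\leq 2\theta_{-i}-\theta_i)$, with your explicit additive decomposition merely making formal what the paper leaves implicit. Your endpoint caveat is warranted and is in fact a point where you are more careful than the paper: the difference in probabilities equals $P(2\theta_{-i}-\theta_i < \diffnoise_{-i}\leq \theta_{-i})$, so the half-open interval is the correct strictness condition, and the paper's stated closed-interval condition is off precisely in the degenerate case of an atom at $2\theta_{-i}-\theta_i$ with no mass strictly above it.
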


Intuitively, there's never a reason to submit a higher threshold than the opponent.

\begin{proof}
With \Cref{lemma:high-decrease-threshold-no-change}, we need only prove that by decreasing $\theta_i$ to $\theta_{-i}$ the probability that Player 2 cooperates (weakly) increases. This is easy to see, though for the strictness condition, we need the details:

\begin{eqnarray*}
P(\text{Pl. $-i$ C's}\mid \theta_i,\theta_{-i}) &=& P\left(\diffnoise_{-i} + \theta_i-\theta_{-i} \leq \theta_{-i} \right)\\
&=& P\left(\diffnoise_{-i}  \leq 2\theta_{-i} - \theta_i \right)\\
P(\text{Pl. $-i$ C's}\mid \theta_{-i},\theta_{-i}) &=& P\left( \diffnoise_{-i}\leq \theta_{-i}  \right)
\end{eqnarray*}
Clearly, $P\left( \diffnoise_{-i}\leq \theta_{-i}  \right) \geq P\left(\diffnoise_{-i}  \leq 2\theta_{-i} - \theta_i \right)$. Moreover, the inequality is strict if and only if $P(2\theta_{-i} - \theta_i\leq \diffnoise_{-i}\leq \theta_{-i})>0$.
\end{proof}

\subsection{(Pure) Nash equilibria}

We now give some results on the Nash equilibria of \Cref{example:PD-threshold}. We start with two simple results to warm up.

\begin{proposition}\label{prop:infty-NEs}
For all distributions of the noise:
\begin{enumerate}
    \item $(-\infty,-\infty)$ is a Nash equilibrium.
    \item $(\infty,\infty)$ is not a Nash equilibrium.
\end{enumerate}
\end{proposition}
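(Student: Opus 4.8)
The plan is to exploit the fact that the two distinguished policies $-\infty$ and $\infty$ are \emph{constant}: $-\infty$ defects and $\infty$ cooperates no matter what diff value is perceived. Consequently, when one player submits one of these policies, the other player faces an opponent that defects (resp.\ cooperates) with probability $1$, and the question of best responses reduces to best responses against a constant opponent in the underlying Prisoner's Dilemma. Both parts then follow from the payoff structure of \Cref{table:prisoners-dilemma}.

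For the first claim, suppose Player $-i$ submits $-\infty$, so that Player $-i$ defects with certainty. Then for any policy $\ag_i$ that Player $i$ might submit (threshold or $\pm\infty$), each realization of the perceived diff leads either to the outcome $(C,D)$, worth $0$ to Player $i$, or to $(D,D)$, worth $1$ to Player $i$. Hence if $p$ denotes the overall probability that $\ag_i$ cooperates, Player $i$'s expected payoff is exactly $1-p \le 1$, with equality if and only if $p=0$. The policy $-\infty$ achieves $p=0$, so it is a best response to $-\infty$; by the symmetry of the construction the same holds for Player $-i$, and therefore $(-\infty,-\infty)$ is a Nash equilibrium. (This is also an instance of the general fact, recorded as \Cref{prop:Nash-equilibrium-survives}, that Nash equilibria of $\Gamma$ survive in the diff meta game once all constant policies are available.)

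For the second claim it suffices to exhibit one profitable deviation. If Player $-i$ submits $\infty$, then Player $-i$ cooperates with certainty, so by deviating from $\infty$ to $-\infty$ Player $i$ moves the outcome from $(C,C)$, worth $G$, to $(D,C)$, worth $G+1>G$. Thus $\infty$ is not a best response to $\infty$, and $(\infty,\infty)$ is not a Nash equilibrium.

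There is essentially no hard step here; the only point requiring care is the bookkeeping for the extended action set — in particular that the deviations considered range over all threshold policies together with $\pm\infty$, and that $\pm\infty$ ignore the perceived diff (and hence the noise $\diffnoise_i$) entirely, which is what makes the reduction to the base game exact and lets the single probability $p$ summarize every possible deviation in the first part.
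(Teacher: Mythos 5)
Your proof is correct: against the constant policies $-\infty$ and $\infty$ the meta game reduces to best-responding to a constant opponent in the base game, which is exactly the immediate argument the paper relies on (it states this proposition without proof, treating it as obvious). Both your payoff bookkeeping for part 1 (payoff $1-p$, maximized at $p=0$) and your single profitable deviation $\infty \to -\infty$ for part 2 (payoff $G+1 > G$) are sound.
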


\begin{proposition}
If there is no upper bound to noise, then there is no fully cooperative equilibrium.
\end{proposition}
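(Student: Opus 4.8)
The plan is to argue by contradiction: I will show that mutual cooperation with probability $1$ forces both thresholds to be infinite, and then invoke the fact that $(\infty,\infty)$ is not a Nash equilibrium. The whole argument rests on unfolding the cooperation condition of \Cref{example:PD-threshold} and using unboundedness of the noise to rule out every finite threshold.

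First I would translate ``fully cooperative'' into an inequality on the noise. A threshold policy $(C,\theta_i,D)$ cooperates exactly when the perceived difference $|\theta_1-\theta_2|+\diffnoise_i$ does not exceed $\theta_i$, i.e., when $\diffnoise_i\le \theta_i-|\theta_1-\theta_2|$. Hence a profile $(\theta_1,\theta_2)$ plays mutual cooperation with probability $1$ if and only if $P\!\left(\diffnoise_i\le \theta_i-|\theta_1-\theta_2|\right)=1$ for both $i=1,2$.

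Next I would reduce this to a single clean constraint via a WLOG ordering. Suppose for contradiction that $(\theta_1,\theta_2)$ is a fully cooperative Nash equilibrium, and assume WLOG $\theta_1\le\theta_2$, so that $|\theta_1-\theta_2|=\theta_2-\theta_1$. Then the higher-threshold player $2$ cooperates with probability $1$ exactly when $P\!\left(\diffnoise_2\le \theta_2-(\theta_2-\theta_1)\right)=P\!\left(\diffnoise_2\le\theta_1\right)=1$; that is, full cooperation requires $P(\diffnoise_2\le\theta_1)=1$, where $\theta_1$ is the \emph{smaller} of the two thresholds. (This is the weaker of the two cooperation constraints, since player $1$'s constraint involves $2\theta_1-\theta_2\le\theta_1$, so contradicting this one suffices.) Because the noise has no upper bound, $P(\diffnoise_2\le c)<1$ for every real $c$, so $P(\diffnoise_2\le\theta_1)=1$ is impossible for any finite $\theta_1$. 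We are therefore forced to $\theta_1=\infty$, and since $\theta_1\le\theta_2$, also $\theta_2=\infty$.

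Finally I would close the contradiction: the only remaining candidate for a fully cooperative profile is $(\infty,\infty)$, which by \Cref{prop:infty-NEs} is not a Nash equilibrium (equivalently, $\infty$ is strictly dominated by $-\infty$). This contradicts the assumption that $(\theta_1,\theta_2)$ is an equilibrium, finishing the proof. I expect the only delicate point to be the bookkeeping around the infinite thresholds: one must verify that the finite-threshold case is \emph{genuinely} excluded by the unboundedness of the noise (not merely made improbable), and that the degenerate $(\infty,\infty)$ case is dispatched by the separately established non-equilibrium result rather than slipping through the case split. Beyond this, I anticipate no real obstacle.
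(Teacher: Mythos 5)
Your proof is correct and follows essentially the same route as the paper's: reduce the claim to the observation that unbounded noise forces the only fully cooperative profile to be $(\infty,\infty)$, then invoke \Cref{prop:infty-NEs}.2 to rule it out. The paper simply asserts the first step, whereas you spell out the threshold inequalities explicitly; the substance is identical.
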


\begin{proof}
If there is no upper bound to noise, then the only policy profile with universal cooperation is $(\infty,\infty)$. But by \Cref{prop:infty-NEs}.2, this is not a Nash equilibrium.
\end{proof}

Next we use our results on best responses to show that to form a Nash equilibrium it is never \textit{necessary} for the two players to submit different thresholds.

\begin{theorem}
Let $(\theta_i,\theta_{-i})$ be a Nash equilibrium with $\theta_i>\theta_{-i}$. Then $(\theta_{-i},\theta_{-i})$ is also a Nash equilibrium.
\end{theorem}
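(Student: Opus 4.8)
The plan is to reduce the claim to a single best-response inequality and then close it using Theorem~\ref{thm:better-to-copy-opponent-threshold}. First I would record that the meta game of Example~\ref{example:PD-threshold} is symmetric: the base Prisoner's Dilemma is symmetric, and the diff function $\diff_i(\theta_1,\theta_2)=|\theta_1-\theta_2|+\diffnoise_i$ treats the two players identically (the $\diffnoise_i$ are i.i.d.). Consequently the meta-game payoff $u_i(a,b)$, read as Player $i$'s payoff when she submits threshold $a$ against an opponent submitting $b$, is one and the same function for both players. In particular, both equilibrium conditions for the candidate profile $(\theta_{-i},\theta_{-i})$ collapse to the single statement that $\theta_{-i}$ is a best response to an opponent who also submits $\theta_{-i}$.

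Next I would extract the relevant inequality from the hypothesis. Since $(\theta_i,\theta_{-i})$ is a Nash equilibrium, Player $i$'s threshold $\theta_i$ is a best response to $\theta_{-i}$, i.e.\ $u_i(\theta_i,\theta_{-i})\geq u_i(\theta',\theta_{-i})$ for every available threshold $\theta'$. At the same time, because $\theta_i>\theta_{-i}$, Theorem~\ref{thm:better-to-copy-opponent-threshold} gives $u_i(\theta_{-i},\theta_{-i})\geq u_i(\theta_i,\theta_{-i})$: copying the opponent's lower threshold is at least as good as submitting the strictly higher $\theta_i$. Chaining the two inequalities yields $u_i(\theta_{-i},\theta_{-i})\geq u_i(\theta',\theta_{-i})$ for all $\theta'$, which is exactly the best-response condition isolated in the first step. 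By symmetry this certifies both players simultaneously, so $(\theta_{-i},\theta_{-i})$ is a Nash equilibrium.

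I do not anticipate a serious obstacle; the argument is essentially a two-line chaining of the equilibrium hypothesis with Theorem~\ref{thm:better-to-copy-opponent-threshold}. The only point that needs care is the symmetry bookkeeping: one must check that the best-response inequality derived for Player $i$ transfers verbatim to Player $-i$, which holds precisely because the meta game is symmetric and the set of alternative thresholds being compared against is identical for both players. A minor subtlety is to confirm that the quantification over $\theta'$ ranges over the full strategy set used in the equilibrium hypothesis (including $\pm\infty$ if those are admitted), but since both the Nash condition and Theorem~\ref{thm:better-to-copy-opponent-threshold} are stated over that same set, no gap arises.
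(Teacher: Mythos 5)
Your first step is sound and in fact coincides with the paper's own opening move: chaining the equilibrium hypothesis $u_i(\theta_i,\theta_{-i})\geq u_i(\theta',\theta_{-i})$ with \Cref{thm:better-to-copy-opponent-threshold} shows that $\theta_{-i}$ is a best response for Player $i$ against an opponent playing $\theta_{-i}$. The gap is in how you dispose of the \emph{other} player's best-response condition. You assert that the meta game is symmetric because ``the $\diffnoise_i$ are i.i.d.,'' but \Cref{example:PD-threshold} does not grant this: it says only that each $\diffnoise_i$ is ``some real-valued random variable,'' so the two players' noise terms may have different distributions. That identical noise is an \emph{extra} hypothesis in this paper is visible in \Cref{lemma:unimodal-prep-lemma}, which explicitly adds ``assume that the two players have the same noise distribution''; the theorem you are proving carries no such assumption, and the paper's proof never uses one. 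With non-identically distributed noise, $V_1(a,b)$ need not equal $V_2(b,a)$, the two best-response conditions at the symmetric profile $(\theta_{-i},\theta_{-i})$ do not collapse into one, and your argument establishes only Player $i$'s half.

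The paper closes the second half without any symmetry, by contradiction. Taking $i=1$ WLOG: if $\theta_2$ were not a best response for Player 2 against $\theta_2$, there would be a strictly better response $\theta_2'$, and $\theta_2'<\theta_2$ by \Cref{thm:better-to-copy-opponent-threshold}. Then \Cref{lemma:high-decrease-threshold-no-change} is used to transport this profitable deviation back to the original profile via
$u_2(\theta_1,\theta_2')\geq u_2(\theta_2,\theta_2')>u_2(\theta_2,\theta_2)=u_2(\theta_1,\theta_2)$:
the first inequality holds because Player 1 cooperates with the same probability in $(\theta_1,\theta_2')$ as in $(\theta_2,\theta_2')$ while Player 2 cooperates weakly less (which benefits her); the final equality holds because Player 1 cooperates equally in $(\theta_1,\theta_2)$ and $(\theta_2,\theta_2)$, and Player 2 must too, since otherwise $\theta_2$ would be a strictly better response for Player 1 than $\theta_1$, contradicting the original equilibrium. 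The resulting chain contradicts $(\theta_1,\theta_2)$ being a Nash equilibrium. So if you are willing to add the assumption that the two noise terms are identically distributed, your proof is valid and genuinely shorter; as the theorem is actually stated, you need an argument of this kind for the lower-threshold player.
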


\begin{proof}
WLOG assume $i=1$ for notational clarity.
Assume for contradiction that $(\theta_2,\theta_2)$ is not a Nash equilibrium. First, notice that by \Cref{thm:better-to-copy-opponent-threshold} and the assumption that $\theta_1$ is a best response for Player $1$ to $\theta_2$, it follows that for Player 1 $\theta_2$ is a best response to $\theta_2$. So if $(\theta_2,\theta_2)$ is not a Nash equilibrium, then it must be because for Player 2 $\theta_2$ is not a best response to $\theta_2$ as submitted by Player $1$. So there must be $\theta_2'$ such that $u_2(\theta_2,\theta_2')> u_2(\theta_2,\theta_2)$. By \Cref{thm:better-to-copy-opponent-threshold}, $\theta_2'<\theta_2$.

We now show that we would then also have that $u_2(\theta_1,\theta_2')>u_2(\theta_1,\theta_2)$ in contradiction with the assumption that $(\theta_1,\theta_2)$ is a Nash equilibrium. We do this via the following sequence of (in)equalities:
$$
u_2(\theta_1,\theta_2') \underset{(1)}{\geq} u_2(\theta_2, \theta_2') > u_2(\theta_2,\theta_2) \underset{(2)}{=} u_2(\theta_1,\theta_2). 
$$

(1) By \Cref{lemma:high-decrease-threshold-no-change}, Player 1 cooperates with equal probability in $(\theta_1,\theta_2')$ and $(\theta_2, \theta_2')$. It is easy to see that Player 2's probability of cooperating is weakly lower in $(\theta_1,\theta_2')$. It follows that $u_2(\theta_1,\theta_2') \geq u_2(\theta_2, \theta_2')$.

(2) (A) By \Cref{lemma:high-decrease-threshold-no-change}, Player 1 cooperates with equal probability in $(\theta_1,\theta_2)$ and $(\theta_2,\theta_2)$. (B) From A and the assumption that $\theta_1\in \mathrm{BR}_1(\theta_2)$ it follows that Player 2 cooperates with equal probability in $(\theta_1,\theta_2)$ and $(\theta_2,\theta_2)$. (Because if this were not the case, then $\theta_2$ would be a strictly better response for Player 1 to $\theta_2$.) From A and B it follows that the distributions over actions are the same in $(\theta_1,\theta_2)$ and $(\theta_2,\theta_2)$ and thus that $u_2(\theta_2,\theta_2)=u_2(\theta_1,\theta_2)$ as claimed.
\end{proof}

We are now ready to show the first of our two results about the main text.

\PDthresholduniformresult*

\begin{proof}
"$\Leftarrow$": First we show that the given strategy profiles really are equilibria.

1. $\theta_1,\theta_2\leq 0$ is just the like the earlier $(-\infty,-\infty)$ equilibrium. If one player plays $\theta_{-i}\leq 0$, then clearly the unique best response is to also always defect.

2. By \Cref{thm:better-to-copy-opponent-threshold} we only need to consider whether one of the players, WLOG Player 1, can increase her utility by \textit{decreasing} their threshold. So for the following consider $\theta_1< \theta_2$
\begin{eqnarray*}
P(\text{Pl. 1 C's} \mid \theta_1,\theta_2) &=& P(\theta_2-\theta_1+\diffnoise_1 < \theta_1)\\
&=& P(\diffnoise_1< 2\theta_1 - \theta_2).
\end{eqnarray*}
This is equal to $\mathrm{max}(0,(2\theta_1 - \theta_2)/\epsilon)$.  Clearly, if Player $1$ can profitably deviate to some $\theta_1$, then she can profitably deviate to some $\theta_1$ s.t.\ $(2\theta_1 - \theta_2)/\epsilon$ is nonnegative. After all, Player 1 wants to maximize Player 2's probability of cooperation.

Similarly,
\begin{eqnarray*}
P(\text{Pl. 2 C's} \mid \theta_1,\theta_2)
&=& P(\theta_2-\theta_1+\diffnoise < \theta_2)\\
&=& P(\diffnoise< \theta_1)\\
&=& \theta_1/\epsilon.
\end{eqnarray*}

Now $\theta_1=\theta_2$ is a best response to $\theta_2$ if and only if the rate at which $P(\text{Pl. 1 C's} \mid \theta_1,\theta_2)$ decreases is at most $G$ times as high as the rate at which $P(\text{Pl. 2 C's} \mid \theta_1,\theta_2)$ decreases. Now the rates of change / derivatives are
$2/\epsilon$ and $1/\epsilon$. So this condition is satisfied (for our payoff matrix).

"$\Rightarrow$": It is left to show that no other profile is a Nash equilibrium.

First, notice that for all $\theta_{-i}>\epsilon$, the unique best response is $\theta_i = \epsilon$, which minimizes the probability of $i$ cooperating, while ensuring that Player $-i$ cooperates with probability $1$. For this, use part 2 of "$\Leftarrow$". From this it follows directly that there is no equilibrium in which both players play $> \epsilon$. By the strictness part of $\Leftarrow$, all equilibria in which one player plays $\leq\epsilon$ are as described in the result. %
\end{proof}

We now prove a lemma in preparation for proving our second result for the main text.

\begin{lemma}\label{lemma:unimodal-prep-lemma}
Assume $G=2$ and assume that the two players have the same noise distribution. Then $(\theta,\theta)$ is a Nash equilibrium if and only if for all $\Delta>0$, $P\left(\theta-2\Delta<\diffnoise<\theta-\Delta\right)\leq P\left(\theta-\Delta<\diffnoise<\theta\right)$. It is a strict Nash equilibrium if all of these inequalities are strict.
\end{lemma}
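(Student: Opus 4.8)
The plan is to reduce the two-player equilibrium question to a single best-response computation and then read off the stated inequality from the additive structure of the Prisoner's Dilemma payoffs. Since $\Gamma$ is symmetric and both players share a noise distribution, $(\theta,\theta)$ is a Nash equilibrium exactly when $\theta$ is a best response to $\theta$ for one player, say Player~$1$; Player~$2$'s condition is identical by symmetry. The crucial simplification is that the PD is additively decomposable, so writing $p_1,p_2$ for the probabilities that Players~$1,2$ cooperate, linearity of expectation gives $u_1(\theta_1,\theta) = (1-p_1) + G\,p_2$ directly, with no need for independence of $\diffnoise_1,\diffnoise_2$ (only their marginals, which are equal, matter). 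Thus Player~$1$ is simply trading off her own cooperation probability $p_1$ against the opponent's $p_2$, both controlled through $\theta_1$.

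Next I would cut down the deviations that need checking. By \Cref{thm:better-to-copy-opponent-threshold} together with \Cref{lemma:high-decrease-threshold-no-change}, raising one's threshold above the opponent's never helps: it leaves $p_1$ unchanged while weakly lowering $p_2$, hence weakly lowering $u_1$. So $(\theta,\theta)$ is an equilibrium iff no \emph{downward} deviation $\theta_1=\theta-\Delta$ with $\Delta>0$ is profitable. For such a deviation the perceived gap is $|\theta_1-\theta|=\Delta$, so $p_1 = P(\diffnoise \le 2\theta_1-\theta) = P(\diffnoise \le \theta-2\Delta)$ and $p_2 = P(\diffnoise \le \theta_1) = P(\diffnoise \le \theta-\Delta)$, while at $\theta_1=\theta$ both equal $P(\diffnoise\le\theta)$. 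Plugging these into $u_1 = (1-p_1)+G\,p_2$, setting $G=2$, and rearranging the no-profitable-deviation inequality $u_1(\theta-\Delta,\theta)\le u_1(\theta,\theta)$ collapses to $P(\theta-2\Delta<\diffnoise\le\theta-\Delta)\le P(\theta-\Delta<\diffnoise\le\theta)$, which is the asserted condition. It is precisely the choice $G=2$ that makes the coefficient on the opponent's cooperation probability render the two adjacent CDF increments directly comparable (a general $G$ would give a $G$-weighted comparison).

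For the strictness claim I would run the same computation with strict inequalities to conclude every downward deviation is strictly worse. The one extra point is ruling out weakly-profitable \emph{upward} deviations: by \Cref{thm:better-to-copy-opponent-threshold} a deviation to $\theta_1>\theta$ is strictly worse iff $P(2\theta-\theta_1\le\diffnoise\le\theta)>0$, and the strict version of the stated inequalities already forces positive mass just below $\theta$ -- a strict inequality with nonnegative left-hand side requires a positive right-hand side, so $P(\theta-\Delta<\diffnoise<\theta)>0$ for every $\Delta>0$, which supplies exactly the mass needed. Hence strict inequalities for all $\Delta>0$ imply that $\theta$ is the unique best response, i.e.\ a strict equilibrium.

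The main obstacle is bookkeeping at the interval endpoints. The cooperation events are defined by weak inequalities $\diffnoise\le\theta_1$, so the computation naturally yields half-open intervals $(\,\cdot\,,\,\cdot\,]$, whereas the statement uses open intervals; the two forms differ only by the atom masses $P(\diffnoise=\theta-\Delta)$ and $P(\diffnoise=\theta)$. They therefore coincide whenever $\diffnoise$ is atomless, which is exactly the regime in the application to unimodal densities in \Cref{prop-example:PD-threshold-unimodal-NE} (``positive measure on every interval''). I would accordingly either handle atomless $\diffnoise$ or carry the atom terms explicitly; apart from this endpoint care, the remainder is the routine substitution above.
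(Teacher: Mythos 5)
Your proposal is correct and follows essentially the same route as the paper's own proof: invoke \Cref{thm:better-to-copy-opponent-threshold} to reduce to downward deviations $\theta-\Delta$, compute both players' cooperation probabilities $P(\diffnoise\leq\theta-2\Delta)$ and $P(\diffnoise\leq\theta-\Delta)$ against the equilibrium value $P(\diffnoise\leq\theta)$, and observe that with $G=2$ the no-profitable-deviation inequality collapses to the stated comparison of adjacent increments. In fact you go somewhat beyond the paper, which silently writes all intervals with weak inequalities and never argues the strictness clause at all: your explicit treatment of the atom/endpoint discrepancy and your verification that strict inequalities also rule out upward deviations (via the strictness condition $P(2\theta-\theta_1\leq\diffnoise\leq\theta)>0$ in \Cref{thm:better-to-copy-opponent-threshold}) fill genuine gaps in the paper's writeup.
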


\begin{proof}
By \Cref{thm:better-to-copy-opponent-threshold} we only need to consider deviations to a lower threshold. So consider WLOG the case where Player $1$ deviates from $\theta$ to submit $\theta-\Delta$. 
First, we calculate the probabilities of cooperation under $(\theta,\theta)$ and $(\theta-\Delta,\theta)$:
\begin{eqnarray*}
P(\text{Pl. }1\text{ C's} \mid \theta,\theta) &=& P( \diffnoise\leq\theta)\\
P(\text{Pl. }1\text{ C's} \mid \theta-\Delta,\theta) &=& P(\diffnoise+\Delta\leq\theta-\Delta) \\
&=& P( \diffnoise\leq\theta-2\Delta)\\
P(\text{Pl. }2\text{ C's} \mid \theta,\theta) &=& P(\diffnoise\leq\theta)\\
P(\text{Pl. }2\text{ C's} \mid \theta-\Delta,\theta) &=& P(\diffnoise +\Delta\leq\theta)\\
&=& P(\diffnoise\leq\theta-\Delta)
\end{eqnarray*}
Thus by Player $1$ switching from $\theta$ to $\theta-\Delta$, Player $1$'s probability of cooperating decreases by
$$
P(\diffnoise\leq\theta) - P(\diffnoise\leq\theta-2\Delta) = P(\theta-2\Delta \leq \diffnoise \leq \theta).
$$
Meanwhile, Player $2$'s probability of cooperating decreases by 
$$
P(\diffnoise\leq\theta) - P(\diffnoise\leq\theta-\Delta) = P(\theta-\Delta\leq \diffnoise\leq \theta).
$$
Thus, for this switch to not be profitable for player $1$, it needs to be the case that
\begin{equation*}
P(\theta-2\Delta \leq \diffnoise \leq \theta) \leq 2 P(\theta-\Delta\leq \diffnoise\leq \theta),
\end{equation*}
or, equivalently,
\begin{equation*}
P(\theta-2\Delta \leq \diffnoise \leq \theta-\Delta) \leq P(\theta-\Delta\leq \diffnoise\leq \theta)
\end{equation*}
as claimed.
\end{proof}

\PDthresholdunimodal*

\begin{proof}
By \Cref{thm:better-to-copy-opponent-threshold} and the assumption of positive measure on any interval, all Nash equilibria have the form $\theta_1=\theta_2$. The second part follows directly from \Cref{lemma:unimodal-prep-lemma} and the fact that the noise distribution is unimodal with mode $\nu$.%
\end{proof}

\subsection{A different type of noise}
\label{sec:a-different-type-of-noise}

Intuitively, we might expect that more noise is an obstacle to similarity-based cooperation. The above results do not vindicate this intuition (see \Cref{prop:scale-invariance-PD}). We here give an alternative setup with a different kind of noise in which more noise \textit{is} an obstacle to cooperation.

\begin{example}\label{example:differnt-kind-of-noise}
Consider a variant of \Cref{example:PD-threshold} where for $i=1,2$ we have with probability $p_i$ that $\mathrm{diff}_i((C,\theta_1,D),(C,\theta_2,D)))=|\theta_1-\theta_2|+\diffnoise_i$ with $\diffnoise_i\sim \mathrm{Unif}([0,\epsilon])$ for some $\epsilon >0$; and with the remaining probability $\mathrm{diff}_i((C,\theta_1,D),(C,\theta_2,D)))=0$.
\end{example}

Note that for $p_1=p_2=1$ the setting is exactly the setting of \Cref{prop-example:PD-threshold-uniform-NE}.

Intuitively, this models a scenario in which each player can try to manipulate the diff value to $0$ and the manipulation succeeds with probability $1-p_i$. (It is further implicitly assumed, that if manipulation fails, the other player never learns of the attempt to manipulate. Instead, the diff value is observed normally if manipulation fails. That way we can assume that each player always attempts to manipulate.)

We can generalize \Cref{prop-example:PD-threshold-uniform-NE} to this new setting as follows:

\begin{proposition}
In \Cref{example:differnt-kind-of-noise}, $((C,\theta_1,D),(C,\theta_2,D))$ is a Nash equilibrium if and only if
\begin{itemize}[nolistsep]
    \item $\theta_1,\theta_2\leq 0$; or
    \item $0<\theta_1=\theta_2\leq \epsilon$ and $Gp_i \geq 0$ for $i=1,2$.
\end{itemize}
\end{proposition}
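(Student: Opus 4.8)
The plan is to follow the structure of the proof of \Cref{prop-example:PD-threshold-uniform-NE}, adapting each step to the mixture structure of the new $\diff$ function. The first task is to write down, for an arbitrary threshold pair $(\theta_1,\theta_2)$, the probability that each player cooperates. Conditioning on whether the manipulation succeeds, Player $i$ cooperates with probability
\[
P_i(\theta_1,\theta_2) = p_i\, P\bigl(\diffnoise_i \le \theta_i - |\theta_1-\theta_2|\bigr) + (1-p_i)\,\mathbf{1}[\theta_i \ge 0],
\]
and since $\diffnoise_i\sim\mathrm{Unif}([0,\epsilon])$ the first term is the clipped linear expression $\min\bigl(1,\max(0,(\theta_i-|\theta_1-\theta_2|)/\epsilon)\bigr)$. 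Writing each meta-game payoff in the additively decomposable form $u_i=(1-P_i)+G\,P_{-i}$ then reduces the whole analysis to understanding how $P_1$ and $P_2$ respond to unilateral threshold changes.

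Next I would re-establish the monotonicity facts that drove the original argument, since \Cref{thm:better-to-copy-opponent-threshold} and \Cref{lemma:high-decrease-threshold-no-change} were proved for the pure-noise model $\diff_i=|\theta_1-\theta_2|+\diffnoise_i$ and do not transfer verbatim to the mixture. Concretely, I would verify directly that raising one's own threshold above the opponent's leaves one's own cooperation probability unchanged while weakly lowering the opponent's, so that, as before, only deviations to a \emph{lower} threshold need to be considered. This restricts, for each candidate profile, to the one-parameter family obtained by decreasing a single threshold, which makes the remaining bookkeeping tractable.

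For the ``$\Leftarrow$'' direction I would verify that the two listed families are equilibria. For the mutual-defection family the opponent cooperates with a probability no unilateral deviation can raise, so defecting as much as possible is optimal. For the symmetric cooperative family $0<\theta_1=\theta_2\le\epsilon$ I would compute the marginal effect of an infinitesimal decrease of $\theta_i$: the deviator's own cooperation falls at rate $2p_i/\epsilon$ while the opponent's falls at rate $p_{-i}/\epsilon$, so the slope comparison rewards staying put exactly when $G p_{-i}\ge 2p_i$, recovering the $G\ge2$ threshold of \Cref{prop-example:PD-threshold-uniform-NE} at $p_i=1$. The ``$\Rightarrow$'' direction would then show every other profile admits a profitable deviation, combining the restriction to threshold decreases with this slope computation.

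The step I expect to be genuinely delicate — and the place where the mixture departs sharply from the original model — is the discontinuity of $P_i$ at $\theta_i=0$. Because a successful manipulation sets the perceived diff to $0$, a player with a positive threshold is pulled into cooperating with probability $1-p_i$ regardless of what the opponent does, and this ``free'' cooperation vanishes the instant the threshold crosses below $0$. Consequently the binding deviation need not be the infinitesimal one but may be the discrete jump to a threshold below $0$, where the deviator defects with certainty yet still collects the opponent's manipulation-induced cooperation $1-p_{-i}$. Comparing the equilibrium payoff against this full-defection payoff is what pins down the precise admissibility condition, so the crux of the argument is to show this jump deviation is never profitable under the stated hypotheses; I would handle it by analyzing the regions $\theta_i<0$, $\theta_i=0$, and $\theta_i>0$ separately and comparing payoffs across the jump, paying particular attention to the boundary case $\theta_i=0$, where the manipulation term must be treated with care.
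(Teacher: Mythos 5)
The first thing to know is that the paper supplies essentially no proof to compare against: its entire argument is the single sentence ``The proof works the same as the proof of \Cref{prop-example:PD-threshold-uniform-NE}.'' Your outline follows that same strategy (adapt the uniform-noise argument), but you are right that the adaptation is not verbatim: the mixture $\diff$ of \Cref{example:differnt-kind-of-noise} cannot be written as $|\theta_1-\theta_2|+\diffnoise_i$ for any fixed noise distribution, so \Cref{lemma:high-decrease-threshold-no-change} and \Cref{thm:better-to-copy-opponent-threshold} do need to be re-proved. Your cooperation-probability formula $P_i=p_i\min\bigl(1,\max(0,(\theta_i-|\theta_1-\theta_2|)/\epsilon)\bigr)+(1-p_i)\mathbf{1}[\theta_i\ge 0]$ is correct, and so is your slope computation for the symmetric profiles: an infinitesimal downward deviation by player $i$ is unprofitable iff $Gp_{-i}\ge 2p_i$. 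Already this exposes that the side condition ``$Gp_i\ge 0$'' in the statement is vacuous (it always holds) and is presumably a typo for a condition of this kind. In all of these respects your proposal is more careful than the paper itself.

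The genuine problem sits exactly at the step you flag as the crux: showing that the jump to a negative threshold ``is never profitable under the stated hypotheses.'' That step cannot be completed, because the claim is false, and consequently the proposition read literally is false as well. In the candidate equilibrium $0<\theta_1=\theta_2=\theta\le\epsilon$, player $i$'s payoff is $u_i=p_i(1-\theta/\epsilon)+G\bigl(p_{-i}\theta/\epsilon+1-p_{-i}\bigr)$, while deviating to any $\theta_i'<0$ yields $1+G(1-p_{-i})$ (the deviator defects surely yet still collects the opponent's manipulation-induced cooperation, exactly as you describe). The jump is therefore profitable iff $1-p_i>(\theta/\epsilon)(Gp_{-i}-p_i)$. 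For instance, with $p_1=p_2=\nicefrac{1}{2}$, $G=3$ and $\theta=\epsilon/100$, the candidate equilibrium pays $2.01$ while defecting outright pays $2.5$, even though $Gp_{-i}\ge 2p_i$ holds; so bullet 2 requires the additional $\theta$-dependent condition $1-p_i\le(\theta/\epsilon)(Gp_{-i}-p_i)$, which excludes small thresholds whenever $p_i<1$. Bullet 1 also fails at its boundary: if $\theta_i=0$ and $p_i<1$, player $i$ cooperates with probability $1-p_i$ (the manipulated diff $0$ is $\le$ her threshold) and gains $1-p_i$ by dropping below zero, with no effect on the opponent's play. So your plan is the right instrument, but carried out honestly it refutes the statement rather than proving it: the correct classification needs ``$\theta_i<0$ or ($\theta_i=0$ and $p_i=1$)'' in bullet 1, and the pair of conditions $Gp_{-i}\ge 2p_i$ and $1-p_i\le(\theta/\epsilon)(Gp_{-i}-p_i)$ in bullet 2. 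The paper's one-line proof glosses over precisely the two issues you identified.
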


The proof works the same as the proof of \Cref{prop-example:PD-threshold-uniform-NE}.

\section{Proofs for Section~\ref{sec:folk-theorem}}

\subsection{Nash equilibria of the base game as Nash equilibria of the meta game}
\label{appendix:Nash-equilibrium-survives}

We first note two simple results. The first is that every Nash equilibrium of the base game is also a Nash equilibrium of the diff meta game in which both players submit a policy that simply ignores the diff value.

\begin{proposition}
\label{prop:Nash-equilibrium-survives}
Let $\Gamma$ be a game and $\bm{\sigma}$ be a Nash equilibrium of $\Gamma$. For $i=1,2$, let $\mathcal{A}_i$ be any set of policies that contains the policy $\ag_i\colon d \mapsto \sigma_i$. Then $(\ag_1,\ag_2)$ is a Nash equilibrium of the $(\Gamma,\diff,\mathcal{A}_1,\mathcal{A}_2)$ meta game.
\end{proposition}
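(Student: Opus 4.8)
The plan is to reduce the meta-game best-response condition to the base-game best-response condition that $\bm{\sigma}$ already satisfies, using only the fact that the opponent's policy is constant and that $\mathbf{u}$ is multilinear in the mixed strategies.

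First I would record the behavior of the prescribed profile. Because $\ag_i\colon d\mapsto \sigma_i$ ignores its argument, we have $\ag_i(\diff_i(\ag_1,\ag_2))=\sigma_i$ with certainty for both $i$, so the profile $(\ag_1,\ag_2)$ plays $\bm{\sigma}$ and $V(\ag_1,\ag_2)=\mathbf{u}(\bm\sigma)$. To show this is a Nash equilibrium of the meta game it suffices, by the symmetric roles of the two players, to fix a player $i$ and show that $\ag_i$ is a best response to $\ag_{-i}$ among all $\ag_i'\in\mathcal{A}_i$.

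Next I would analyze an arbitrary deviation $\ag_i'\in\mathcal{A}_i$. The key point is that $\ag_{-i}$ is constant: whatever diff value it receives, Player $-i$ plays $\sigma_{-i}$. Hence, writing $\sigma_i'\coloneqq\mathbb{E}[\ag_i'(\diff_i(\ag_i',\ag_{-i}))]\in\Delta(A_i)$ for the (possibly noisy) effective strategy that $\ag_i'$ ends up playing, linearity of $u_i$ in its first argument gives $V_i(\ag_i',\ag_{-i})=\mathbb{E}[u_i(\ag_i'(\diff_i(\ag_i',\ag_{-i})),\sigma_{-i})]=u_i(\sigma_i',\sigma_{-i})$. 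Since $\bm\sigma$ is a Nash equilibrium of $\Gamma$, $\sigma_i$ is a best response to $\sigma_{-i}$, so $u_i(\sigma_i,\sigma_{-i})\geq u_i(\sigma_i',\sigma_{-i})$ for every mixed strategy $\sigma_i'\in\Delta(A_i)$. Chaining these relations yields $V_i(\ag_i,\ag_{-i})=u_i(\sigma_i,\sigma_{-i})\geq u_i(\sigma_i',\sigma_{-i})=V_i(\ag_i',\ag_{-i})$, so $\ag_i$ is indeed a best response, and the same argument for $-i$ completes the equilibrium check.

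The proof has no substantial obstacle; the only point requiring care is the linearity step, where one must observe that the expected strategy $\sigma_i'$ induced by any deviating policy is itself a legitimate element of $\Delta(A_i)$, and that because $\mathbf{u}$ was extended to mixed strategies multilinearly, averaging over the diff noise commutes with evaluating $u_i$. Once this is in place, the set of meta-game deviations available to Player $i$ collapses onto the base-game mixed-strategy simplex $\Delta(A_i)$, against which $\sigma_i$ is already optimal by assumption.
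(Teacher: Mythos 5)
Your proof is correct; the paper in fact states Proposition~\ref{prop:Nash-equilibrium-survives} without proof, and your argument is precisely the routine one it leaves implicit. You also correctly isolate the one point of substance — that because $\ag_{-i}$ is constant, the expectation over the diff randomness affects only Player $i$'s realized strategy (no correlation through the noise), so linearity of $u_i(\cdot,\sigma_{-i})$ collapses every meta-game deviation onto an element of $\Delta(A_i)$, against which $\sigma_i$ is optimal by the Nash assumption.
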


If the $\diff$ function is uninformative, then the Nash equilibria are in fact the only Nash equilibria of the diff meta game, as we now state.

\begin{proposition}
\label{prop:only-NEs-of-Gamma-survive-constant-diff}
Let $\Gamma$ be a game and $(\Gamma,\diff,\mathcal{A}_1,\mathcal{A}_2)$ be a meta game on $\Gamma$ where $\diff(\cdot,\cdot)=\mathbf{y}$ for some $\mathbf{y}^2$. Then $(\ag_1,\ag_2)$ is a Nash equilibrium of the meta game if and only if $(\ag_1(y_1),\ag_2(y_2))$ is a Nash equilibrium of $\Gamma$.
\end{proposition}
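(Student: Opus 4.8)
The plan is to reduce the meta game to the base game by exploiting that a constant, deterministic $\diff$ makes each policy's played strategy independent of the opponent. Reading $\mathbf{y}=(y_1,y_2)$, I would set $\sigma_i\coloneqq\ag_i(y_i)$ for any policies $\ag_1,\ag_2$. Since $\diff_i(\ag_1,\ag_2)=y_i$ for all policies, the strategy profile played by $(\ag_1,\ag_2)$ is exactly $(\sigma_1,\sigma_2)$, and for any unilateral deviation $\ag_i'\in\mathcal{A}_i$ of Player $i$ we have $V_i(\ag_i',\ag_{-i})=u_i(\ag_i'(y_i),\sigma_{-i})$. The crucial observation is that the opponent's contribution $\sigma_{-i}=\ag_{-i}(y_{-i})$ does not depend on $i$'s choice of policy, because the diff value stays fixed at $y_{-i}$ regardless. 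Thus deviating in the meta game is, payoff-wise, identical to deviating to the base-game strategy $\ag_i'(y_i)$ against the fixed opponent strategy $\sigma_{-i}$.

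Given this reduction, the backward direction (``$\Leftarrow$'') is immediate and needs no assumption on $\mathcal{A}_i$. Suppose $(\sigma_1,\sigma_2)$ is a Nash equilibrium of $\Gamma$. For any $\ag_i'\in\mathcal{A}_i$ we have $\ag_i'(y_i)\in\Delta(A_i)$, so by the best-response property of $\sigma_i$ in $\Gamma$, $V_i(\ag_i,\ag_{-i})=u_i(\sigma_i,\sigma_{-i})\geq u_i(\ag_i'(y_i),\sigma_{-i})=V_i(\ag_i',\ag_{-i})$. Hence $\ag_i$ is a best response to $\ag_{-i}$ in the meta game for $i=1,2$, so $(\ag_1,\ag_2)$ is a Nash equilibrium of the meta game.

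For the forward direction (``$\Rightarrow$'') I would invoke the standing assumption used throughout this section that $\mathcal{A}_i$ contains at least all constant policies. Suppose $(\ag_1,\ag_2)$ is a Nash equilibrium of the meta game, so $V_i(\ag_i,\ag_{-i})\geq V_i(\ag_i',\ag_{-i})$ for every $\ag_i'\in\mathcal{A}_i$. Given any $\tau_i\in\Delta(A_i)$, apply this to the constant policy $\ag_i'\colon d\mapsto\tau_i$, which satisfies $\ag_i'(y_i)=\tau_i$; this yields $u_i(\sigma_i,\sigma_{-i})\geq u_i(\tau_i,\sigma_{-i})$. Since $\tau_i$ was arbitrary, $\sigma_i$ maximizes $u_i(\cdot,\sigma_{-i})$ over $\Delta(A_i)$, i.e.\ $\supp(\sigma_i)\subseteq\argmax_{a_i}u_i(a_i,\sigma_{-i})$, so $\sigma_i$ is a best response to $\sigma_{-i}$ in $\Gamma$. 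Applying this to both players shows $(\sigma_1,\sigma_2)=(\ag_1(y_1),\ag_2(y_2))$ is a Nash equilibrium of $\Gamma$.

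The one real subtlety, and the step I would state explicitly, is the availability of policies in the forward direction: without the guarantee that every base-game strategy $\tau_i$ is realizable as $\ag_i'(y_i)$ for some available policy $\ag_i'$, a policy could be a best response in an impoverished meta game while $\sigma_i$ fails to be a best response in $\Gamma$, breaking ``$\Rightarrow$''. Assuming all constant policies are available (consistent with the convention flagged at the start of \Cref{sec:folk-theorem}) closes this gap, and the backward direction needs no such assumption. The minor definitional point that ``best response'' is phrased via $\supp\subseteq\argmax$ is harmless, since that condition is equivalent to maximizing $u_i(\cdot,\sigma_{-i})$ over all of $\Delta(A_i)$.
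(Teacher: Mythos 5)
Your proof is correct, and it is essentially the argument the paper has in mind: the paper states \Cref{prop:only-NEs-of-Gamma-survive-constant-diff} without proof in \Cref{appendix:Nash-equilibrium-survives}, treating the reduction you spell out (constant $\diff$ makes each player's played strategy independent of the opponent's policy, so meta-game deviations are exactly base-game deviations) as immediate. One point in your write-up is a genuine improvement over the paper's statement: the forward direction does require a richness assumption on $\mathcal{A}_i$ (e.g., that all constant policies, or at least one policy realizing each pure action at $y_i$, are available), since otherwise an impoverished policy set --- in the extreme, $\mathcal{A}_i$ a singleton --- makes $(\ag_1,\ag_2)$ trivially an equilibrium of the meta game without $(\ag_1(y_1),\ag_2(y_2))$ being an equilibrium of $\Gamma$. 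The paper's proposition omits this hypothesis (its companion \Cref{prop:Nash-equilibrium-survives} states the analogous assumption explicitly, and the main text flags it parenthetically), so your explicit identification and repair of this gap is exactly right; your observation that the backward direction needs no such assumption is also correct.
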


\subsection{Proof of Theorem~\ref{theorem:folk-theorem}}
\label{appendix:proof-of-folk-theorem}

\folktheorem*

\begin{proof}
\enquote{1$\Rightarrow$2}: We show the contrapositive, i.e., that if $\boldsymbol{\sigma}$ is not individually rational, it is not implemented by any equilibrium of any diff game. Let $\hat\sigma_i$ be the strategy of Player $i$ that guarantees her her minimax utility. Then submitting $(\hat\sigma_i, \theta, \hat\sigma_i)$ for any threshold $\theta$ guarantees minimax utility regardless of what diff function is used. Thus, anything that gives $i$ less than threat point utility cannot be an equilibrium.

\enquote{2$\Rightarrow$1}: Let $\tilde\sigma_i$ be Player $i$'s minimax strategy against Player $-i$. Then consider the strategy profile
$\left(\ag_1=(\tilde\sigma_1 , \nicefrac{1}{2} , \sigma_1),\ag_2=(\tilde\sigma_2, \nicefrac{1}{2}, \sigma_2)\right)$
and any diff function s.t.\
$\mathrm{diff}(\ag_1,\ag_2) = 0$
and for $i=1,2$ and $\ag_{-i}'\neq \ag_{-i}$,
$\mathrm{diff}(\ag_i,\ag_{-i}') = 1$. Clearly, in $(\ag_1,\ag_2)$, the players play $(\sigma_1,\sigma_2)$. Finally, $(\ag_1,\ag_2)$ is a Nash equilibrium of the resulting $\diff$ meta game, because if either player deviates they will receive their minimax utility, which is by assumption no larger than their utility in $\boldsymbol{\sigma}$ and thus in $(\ag_1,\ag_2)$.
\end{proof}

\section{On the uniqueness theorem}

\subsection{Examples to show the need for the assumptions of Theorem~\ref{theorem:uniqueness}}
\label{appendix:the-need-for-assumptions-in-uniqueness-theorem}

\subsubsection{Why the diff function must be high-value uninformative in Theorem~\ref{theorem:uniqueness}}
\label{appendix:why-high-value-uninformativeness-is-necessary}

\begin{table}
	\begin{center}
    \setlength{\extrarowheight}{2pt}
    \begin{tabular}{cc|C{1.4cm}|C{1.4cm}|C{1.4cm}|C{1.4cm}|}
      & \multicolumn{1}{c}{} & \multicolumn{3}{c}{Player 2}\\
      & \multicolumn{1}{c}{}  & \multicolumn{1}{c}{$a_0$} & \multicolumn{1}{c}{$a_1$} & \multicolumn{1}{c}{$a_2$} \\\cline{3-5}
      \multirow{3}*{Player 1} & $a_0$ & $0,0$ & $-2,1$ & $-5,-1$ \\\cline{3-5}
      & $a_1$ & $1,-2$ & $-1,-1$ & $-4,-3$ \\\cline{3-5}
      & $a_2$ & $-1,-5$ & $-3,-4$ & $-6,-6$ \\\cline{3-5}
    \end{tabular}
    \end{center}
    \caption{A game to show the need for assuming high-value uninformativeness in \Cref{theorem:uniqueness}.}
    \label{table:why-high-diff-no-info-assumption}
\end{table}

We now give an example for why we need $\diff$ to be high-value uninformative, both for \Cref{lemma:lower-diff-is-better} and for our uniqueness theorem below.

\begin{prop-example}
Consider the game of \Cref{table:why-high-diff-no-info-assumption}. Note that the game is symmetric and additively decomposable. Consider the $\diff$ function defined by $\diff((\sigma_1^{\leqslant},*,\sigma_1^>)_{i=1,2})=1$ if $\mathrm{supp}(\sigma_1^{\leqslant})\cup \mathrm{supp}(\sigma_1^>)$ and $\mathrm{supp}(\sigma_2^{\leqslant})\cup \mathrm{supp}(\sigma_2^>)$ are disjoint and $\diff((\sigma_1^{\leqslant},*,\sigma_1^>)_{i=1,2})=0$ otherwise. Then $((a_2,\nicefrac{1}{2},a_1),(a_0,*,a_0))$ is an equilibrium of the $\diff$ meta game.
\end{prop-example}

Intuitively, the policy $(a_2,\nicefrac{1}{2},a_1)$ with the described $\diff$ function implements the following idea: \enquote{I want to play $a_1$ (which is good for me and moderately bad for you). I don't want you to also play $a_1$. If you are similar to me (which you are if you give weight to the same action I give weight to), I'll play $a_2$, which is very bad for you.} Assuming that Player 1 submits such a policy, Player 2 optimizes her utility by always playing $a_0$. Player 1 thus obtains her favorite outcome.

\subsubsection{Why the game must be additively decomposable in Theorem~\ref{theorem:uniqueness}}

The following example shows why we need to restrict attention to additively decomposable games. Intuitively, the game is a Prisoner's Dilemma, except that if the players cooperate, they also play a Game of Chicken for an additional payoff. Then (with a natural $\diff$ function) similarity-based cooperation takes care of the cooperate versus defect part, but leaves open the Dare versus Swerve part. In particular, there are multiple Pareto-optimal equilibria.

\begin{prop-example}
Let $\Gamma$ be the game of \Cref{table:why-additivity-is-needed}. Define $\diff(\ag_1,\ag_2) = 0$ if $a_0 \notin \supp(\ag_i(0))$ for $i=1,2$ and $\diff(\ag_1,\ag_2) = 1$ otherwise. Then for $i=1,2$, 
\begin{equation*}
(\ag_i=(a_1,\nicefrac{1}{2},a_0), \ag_{-i}=(a_2,\nicefrac{1}{2},a_0))
\end{equation*}
is a Pareto-optimal Nash equilibrium of the $\diff$ meta game on $\Gamma$.
\end{prop-example}

\begin{table}
	\begin{center}
    \setlength{\extrarowheight}{2pt}
    \begin{tabular}{cc|C{1cm}|C{1cm}|C{1cm}|C{1cm}|}
      & \multicolumn{1}{c}{} & \multicolumn{3}{c}{Player 2}\\
      & \multicolumn{1}{c}{}  & \multicolumn{1}{c}{$a_0$} & \multicolumn{1}{c}{$a_1$} & \multicolumn{1}{c}{$a_2$} \\\cline{3-5}
      \multirow{3}*{Player 1} & $a_0$ & $0,0$ & $5,-5$ & $5,-5$ \\\cline{3-5}
      & $a_1$ & $-5,5$ & $0,0$ & $3,1$ \\\cline{3-5}
      & $a_2$ & $-5,5$ & $1,3$ & $1,1$ \\\cline{3-5}
    \end{tabular}
    \end{center}
    \caption{A game to show the need for the additive decomposability assumption in \Cref{theorem:uniqueness}.}
    \label{table:why-additivity-is-needed}
\end{table}

\subsubsection{Why diff must be observer-symmetric for Theorem~\ref{theorem:uniqueness}}

We here give an example to show why the $\diff$ function in \Cref{theorem:uniqueness} needs to have $(\diff_1(\ag_1,\ag_2),\diff_2(\ag_1,\ag_2))$ and $(\diff_2(\ag_1,\ag_2),\diff_1(\ag_1,\ag_2))$ have the same distribution. One might call this observer symmetry.

\begin{prop-example}
Let $\Gamma$ be the Prisoner's Dilemma. Let $\diff_1(\ag_1,\ag_2)=0$ if $\ag_1=\ag_2$ and $\diff_1(\ag_1,\ag_2)=1$ otherwise. Let $\diff_2(\ag_1,\ag_2)$ be defined in the same way, except that if $\ag_1=\ag_2$, then there is still an $\epsilon$ probability of $\diff_2(\ag_1,\ag_2)=1-\epsilon$. Note that this diff meta game satisfies the other conditions of \Cref{theorem:uniqueness}, i.e., $\Gamma$ is symmetric and additively decomposable, $\diff(\ag_1,\ag_2)$ and $\diff(\ag_2,\ag_1)$ are equally distributed for all $\ag_1,\ag_2$ and $\diff$ is high-value-uninformative and minimized by copies. Then $((C,\nicefrac{1}{2},D),(C,\nicefrac{1}{2},D))$ has asymmetric payoffs but is a Pareto-optimal Nash equilibrium of the $\diff$ meta game on $\Gamma$.
\end{prop-example}

In this example, the Pareto-optimal Nash equilibrium is still unique, but it is easy to come up with examples in which there are multiple Pareto-optimal Nash equilibria.

Note also that in this example the player who has less information about the other does better in the cooperative equilibria.

\subsubsection{Why diff must be policy-symmetric for Theorem~\ref{theorem:uniqueness}}

Finally we give an example to show why the $\diff$ function in \Cref{theorem:uniqueness} needs to have $\diff(\ag_1,\ag_2)$ and $\diff(\ag_2,\ag_1)$ have the same distribution. One might call this policy symmetry.

\begin{table}
	\begin{center}
    \setlength{\extrarowheight}{2pt}
    \begin{tabular}{cc|C{1cm}|C{1cm}|C{1cm}|C{1cm}|}
      & \multicolumn{1}{c}{} & \multicolumn{3}{c}{Player 2}\\
      & \multicolumn{1}{c}{}  & \multicolumn{1}{c}{$C_1$} & \multicolumn{1}{c}{$C_2$} & \multicolumn{1}{c}{$D$} \\\cline{3-5}
      \multirow{3}*{Player 1} & $C_1$ & $3,3$ & $2,3$ & $0,4$ \\\cline{3-5}
      & $C_2$ & $3,2$ & $2,2$ & $0,3$ \\\cline{3-5}
      & $D$ & $4,0$ & $3,0$ & $1,1$ \\\cline{3-5}
    \end{tabular}
    \end{center}
    \caption{A game to show why we need $\diff$ to be policy-symmetric in \Cref{theorem:uniqueness}.}
    \label{table:why-policy-symmetry-is-needed}
\end{table}

\begin{prop-example}
Let $\Gamma$ be the game of \Cref{table:why-policy-symmetry-is-needed}. Let $\diff(\ag_1,\ag_2)=(0,0)$ if $(\ag_1,\ag_2)$ equals one of the following
\begin{gather*}
    ((C_2,\nicefrac{1}{2},D),(C_1,\nicefrac{1}{2},D))\\
    ((C_1,\nicefrac{1}{2},D),(D,\nicefrac{1}{2},D))\\
    ((C_2,\nicefrac{1}{2},D),(C_2,\nicefrac{1}{2},D))\\
    ((C_1,\nicefrac{1}{2},D),(C_1,\nicefrac{1}{2},D))\\
    ((D,\nicefrac{1}{2},D),(D,\nicefrac{1}{2},D))
\end{gather*}
and $\diff(\ag_1,\ag_2)=(1,1)$ otherwise.
Note that this diff meta game satisfies the other conditions of \Cref{theorem:uniqueness}, i.e., $\Gamma$ is symmetric and additively decomposable, $\diff_1(\ag_1,\ag_2)=\diff_2(\ag_1,\ag_2)$ for all $\ag_1,\ag_2$ and $\diff$ is high-value-uninformative and minimized by copies.
However, the only Pareto-optimal (pure) Nash equilibrium of the diff meta game is $((C_2,\nicefrac{1}{2},D),(C_1,\nicefrac{1}{2},D))$. 
\end{prop-example}

As in the previous example, the Pareto-optimal Nash equilibrium is still unique, but it is easy to come up with examples in which there are multiple Pareto-optimal Nash equilibria.

\subsection{Results on the structure of equilibria}

We have various intuitions about similarity-based cooperation. For example, we have the intuition that $(C,\theta,D)$ is a sensible policy but $(D,\theta,C)$ is not. In this section we prove results of this type under appropriate assumptions. We find these results interesting independently, but we also need all results here to prove \Cref{theorem:uniqueness}.

The following two lemmas capture the idea that under some assumptions it is rational to be more cooperative if the opponent is similar, i.e., if the observed difference is below the threshold.

\begin{lemma}\label{lemma:lower-diff-is-better}
Let $(\ag_1,\ag_2)$ be a Nash equilibrium of the meta game and $\diff$ be high-value uninformative. Then $\max_{\sigma_i} u_i(\sigma_i,\sigma_{\ag_{-i}}^{\mathrm{max}})\leq u_i(\ag_1,\ag_2)$ for $i=1,2$.
\end{lemma}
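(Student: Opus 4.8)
The plan is to turn the high-value-uninformative property into a near-optimal deviation for Player~$i$ and then invoke the equilibrium condition. Intuitively, high-value uninformativeness says that Player~$i$ can force the opponent into her ``high-diff'' behaviour $\sigma_{\ag_{-i}}^{\mathrm{max}}$ while simultaneously playing \emph{any} strategy $\sigma_i$ she likes; so in particular she can obtain (arbitrarily close to) the best-response value against $\sigma_{\ag_{-i}}^{\mathrm{max}}$. Since $(\ag_1,\ag_2)$ is an equilibrium, no such deviation can beat her equilibrium payoff, which is exactly the desired inequality. Throughout I read $u_i(\ag_1,\ag_2)$ as the meta-game value $V_i(\ag_1,\ag_2)$ of \Cref{def:diff-meta-game}.

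Concretely, fix $i$ and let $\sigma_i^{*}\in\argmax_{\sigma_i} u_i(\sigma_i,\sigma_{\ag_{-i}}^{\mathrm{max}})$, which exists since $\Delta(A_i)$ is compact and $u_i(\cdot,\sigma_{\ag_{-i}}^{\mathrm{max}})$ is continuous. Because $\ag_{-i}$ is a threshold policy, $\sigma_{\ag_{-i}}^{\mathrm{max}}$ is defined, and applying the high-value-uninformative property to $\ag_{-i}$, the target $\sigma_i^{*}$, and an arbitrary $\epsilon>0$ yields a threshold policy $\ag_i^{\epsilon}\in\bar{\mathcal A}_i$ such that $(\ag_i^{\epsilon},\ag_{-i})$ plays some profile $(\tau_i^{\epsilon},\tau_{-i}^{\epsilon})$ within $\epsilon$ of $(\sigma_i^{*},\sigma_{\ag_{-i}}^{\mathrm{max}})$.

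I would then combine the equilibrium inequality with a continuity/limit argument. Since $\ag_i^{\epsilon}\in\bar{\mathcal A}_i\subseteq\mathcal A_i$ is an admissible deviation and $(\ag_1,\ag_2)$ is a Nash equilibrium, $V_i(\ag_i^{\epsilon},\ag_{-i})\le V_i(\ag_i,\ag_{-i})=u_i(\ag_1,\ag_2)$. Using that $\Gamma$ is additively decomposable (the standing assumption of \Cref{theorem:uniqueness}), $V_i(\ag_i^{\epsilon},\ag_{-i})=\mathbb{E}[u_i(\ag_i^{\epsilon}(\diff_i),\ag_{-i}(\diff_{-i}))]$ splits into a sum of terms each depending on a single marginal and hence equals $u_i(\tau_i^{\epsilon},\tau_{-i}^{\epsilon})$. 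Thus $u_i(\tau_i^{\epsilon},\tau_{-i}^{\epsilon})\le u_i(\ag_1,\ag_2)$ for every $\epsilon>0$; letting $\epsilon\to 0$ and using continuity of $u_i$ together with $(\tau_i^{\epsilon},\tau_{-i}^{\epsilon})\to(\sigma_i^{*},\sigma_{\ag_{-i}}^{\mathrm{max}})$ gives $u_i(\sigma_i^{*},\sigma_{\ag_{-i}}^{\mathrm{max}})\le u_i(\ag_1,\ag_2)$, as claimed.

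The main obstacle I expect is the identity $V_i(\ag_i^{\epsilon},\ag_{-i})=u_i(\tau_i^{\epsilon},\tau_{-i}^{\epsilon})$. The high-value-uninformative property only controls the \emph{marginal} strategies played, whereas $V_i$ is an expectation over the joint law of $(\diff_i,\diff_{-i})$ and so in general depends on their correlation; without additive decomposability the identity can fail and the reduction to marginals is unjustified. Making this step rigorous therefore forces me to lean on additive decomposability (or, alternatively, on independence of the diff draws) to collapse $V_i$ to a function of marginals. The remaining points---attainment of the maximizer and the $\epsilon\to0$ passage---are routine once a metric on $\Delta(A_1)\times\Delta(A_2)$ is fixed and compactness/continuity are invoked.
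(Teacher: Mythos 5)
Your proof is correct and takes essentially the same route as the paper's: both use high-value uninformativeness to produce a deviation $\ag_i^{\epsilon}$ whose play against $\ag_{-i}$ approximates $(\sigma_i^{*},\sigma_{\ag_{-i}}^{\mathrm{max}})$ and then invoke the Nash condition, the only cosmetic difference being that the paper argues by contradiction while you argue directly and pass to the limit $\epsilon\to 0$. The obstacle you flag is genuine but resolved exactly as you suggest: the paper's one-line proof silently identifies the meta-game payoff $V_i(\ag_i^{\epsilon},\ag_{-i})$ (an expectation over the joint law of $(\diff_1,\diff_2)$) with $u_i$ of the played strategy profile (which only involves the marginals), an identification that holds under the additive decomposability standing in \Cref{theorem:uniqueness}, where this lemma is applied, and which is precisely the assumption you lean on.
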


\begin{proof}
Assume for contradiction that there is some strategy $\sigma_i'$ s.t.\ $u_i(\sigma_i',\sigma_{\ag_{-i}}^{\mathrm{max}})>u_i(\ag_1,\ag_2)$. Because $\diff$ is high-value uninformative, there exists a policy $\ag_i'$ s.t.\ $(\ag_i',\ag_{-i})$ resolves to a strategy profile arbitrarily close to $\sigma_i',\sigma_{\ag_{-i}}^{\mathrm{max}}$. Hence, $u_i(\ag'_i,\ag_{-i}) > u_i(\ag_1,\ag_2)$
and so $(\ag_1,\ag_2)$ cannot be a Nash equilibrium after all.
\end{proof}

\begin{lemma}\label{lemma:lower-diff-is-better-additive}
Let $\Gamma$ additively decompose into $(u_{i,j}\colon A_i \rightarrow \mathbb{R})_{i,j\in\{1,2\}}$ and let $\diff$ be high-value uninformative. Let $(\sigma_i^{\leqslant},\theta_i,\sigma_i^>)_{i=1,2}$ be a Nash equilibrium. Then $u_{i,-i}(\sigma_{-i}^>)\leq u_{i,-i}(\sigma_{-i}^{\leqslant})$ for $i=1,2$.
\end{lemma}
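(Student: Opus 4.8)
The plan is to reduce the claim to a single scalar inequality about how much weight player $-i$'s equilibrium strategy places on its above-threshold action, and then to read that inequality off from \Cref{lemma:lower-diff-is-better} together with additive decomposability. Fix $i$ and write $\ag_{-i}=(\sigma_{-i}^{\leqslant},\theta_{-i},\sigma_{-i}^>)$. Let $r:=P(\diff_{-i}(\ag_1,\ag_2)>\theta_{-i})$ be the probability that player $-i$ observes an above-threshold diff in the equilibrium, so that player $-i$'s realized (marginal) strategy is $\sigma_{-i}^{\mathrm{eq}}=(1-r)\sigma_{-i}^{\leqslant}+r\,\sigma_{-i}^>$. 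Because $\Gamma$ additively decomposes, expected utility depends only on the marginal strategies, so $u_i(\ag_1,\ag_2)=u_{i,i}(\sigma_i^{\mathrm{eq}})+u_{i,-i}(\sigma_{-i}^{\mathrm{eq}})$, where $\sigma_i^{\mathrm{eq}}$ is player $i$'s realized marginal.

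Next I would exploit \Cref{lemma:lower-diff-is-better}. Writing $M_i:=\max_{\sigma_i}u_{i,i}(\sigma_i)$ and using additivity, $\max_{\sigma_i}u_i(\sigma_i,\sigma_{\ag_{-i}}^{\mathrm{max}})=M_i+u_{i,-i}(\sigma_{\ag_{-i}}^{\mathrm{max}})$. \Cref{lemma:lower-diff-is-better} bounds this by $u_i(\ag_1,\ag_2)=u_{i,i}(\sigma_i^{\mathrm{eq}})+u_{i,-i}(\sigma_{-i}^{\mathrm{eq}})\le M_i+u_{i,-i}(\sigma_{-i}^{\mathrm{eq}})$, the last step because $u_{i,i}(\sigma_i^{\mathrm{eq}})\le M_i$. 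Cancelling $M_i$ gives the key inequality $u_{i,-i}(\sigma_{\ag_{-i}}^{\mathrm{max}})\le u_{i,-i}(\sigma_{-i}^{\mathrm{eq}})$. Now $\sigma_{\ag_{-i}}^{\mathrm{max}}=(1-p)\sigma_{-i}^{\leqslant}+p\,\sigma_{-i}^>$ by definition, and $p\ge r$ since the equilibrium policy of player $i$ is one admissible witness to the supremum defining $p$. Since $t\mapsto u_{i,-i}((1-t)\sigma_{-i}^{\leqslant}+t\,\sigma_{-i}^>)=(1-t)u_{i,-i}(\sigma_{-i}^{\leqslant})+t\,u_{i,-i}(\sigma_{-i}^>)$ is affine in $t$ and its value at the larger argument $p$ is at most its value at the smaller argument $r$, its slope $u_{i,-i}(\sigma_{-i}^>)-u_{i,-i}(\sigma_{-i}^{\leqslant})$ must be $\le 0$ whenever $p>r$; this is exactly the desired conclusion.

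I expect the main obstacle to be the boundary case $p=r$, in which $\sigma_{\ag_{-i}}^{\mathrm{max}}=\sigma_{-i}^{\mathrm{eq}}$ and the key inequality degenerates to $0\le 0$, giving no information about the slope. This case is not a mere technicality: it is precisely the regime in which player $i$'s equilibrium policy already induces the supremal above-threshold weight on player $-i$, which is exactly what player $i$ would do if $u_{i,-i}(\sigma_{-i}^>)>u_{i,-i}(\sigma_{-i}^{\leqslant})$ --- so it is where a violation of the claim could hide. I would attack it on two fronts. First, high-value uninformativeness only guarantees that $p$ is \emph{approached}, so if the supremum is not attained then $r<p$ strictly and the argument above already applies. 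Second, when the supremum is attained I would feed in the equilibrium condition for player $-i$ (via \Cref{lemma:lower-diff-is-better} applied with the roles swapped) together with the consequence of additive decomposability that each player has an own-action maximizing her own component independently of the opponent; the point is that player $-i$ has no payoff reason of her own to place weight on $\sigma_{-i}^>$ unless doing so is forced, and combining this with player $i$ already saturating the trigger should close the case. Verifying that this combination is airtight for an arbitrary high-value-uninformative $\diff$ --- where deviations by player $-i$ can move the diff in ways not controlled by the definition --- is the delicate part.
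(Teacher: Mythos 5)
Your first two paragraphs reconstruct exactly the argument the paper intends: the paper's entire proof of this lemma is the single sentence \enquote{Follows directly from \Cref{lemma:lower-diff-is-better}}, and the intended filling-in is precisely your chain --- additive decomposability makes the meta-game payoff depend only on the marginal strategies, combining \Cref{lemma:lower-diff-is-better} with $u_{i,i}(\sigma_i^{\mathrm{eq}})\le M_i$ and cancelling $M_i$ yields $u_{i,-i}(\sigma_{\ag_{-i}}^{\mathrm{max}})\le u_{i,-i}(\sigma_{-i}^{\mathrm{eq}})$ (where $\sigma_{-i}^{\mathrm{eq}}$ is Player $-i$'s equilibrium marginal), and affineness converts this into the claimed slope inequality whenever the equilibrium trigger probability $r$ is strictly below $p$.

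However, the degenerate case you flagged cannot be closed, and your instinct that a violation could hide there is exactly right: the lemma as stated is false when $p=r$. Concretely, take the Prisoner's Dilemma and the constant function $\diff\equiv(0,0)$. This $\diff$ is high-value uninformative: for any threshold policy $\ag_{-i}$ with $\theta_{-i}\geq 0$, Player $-i$ plays $\sigma_{-i}^{\leqslant}$ against every opponent policy, so $p=0$ and $\sigma_{\ag_{-i}}^{\mathrm{max}}=\sigma_{-i}^{\leqslant}$, and the constant policy $(\sigma_i,\theta_i,\sigma_i)$ realizes $(\sigma_i,\sigma_{\ag_{-i}}^{\mathrm{max}})$ exactly (the case $\theta_{-i}<0$ is analogous with $p=1$); it is moreover symmetric and minimized by copies. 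Now let both principals submit $(D,\nicefrac{1}{2},C)$. Both observe diff $0$ and play $D$; since no deviation can change the opponent's play, and $D$ maximizes $u_{i,i}$, this is a Nash equilibrium of the meta game. Yet $u_{i,-i}(\sigma_{-i}^>)=u_{i,-i}(C)=G>0=u_{i,-i}(D)=u_{i,-i}(\sigma_{-i}^{\leqslant})$, contradicting the lemma. This also shows why your second front must fail: when the above-threshold branch is never triggered in equilibrium, $\sigma_{-i}^>$ is an off-path object that no Nash condition (for either player, with roles swapped or not) constrains at all, so it can be chosen adversarially --- your observation that Player $-i$ has no payoff reason to put weight on $\sigma_{-i}^>$ cuts the other way, since she equally has no reason not to declare an arbitrarily generous $\sigma_{-i}^>$ there. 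What is true, and what your derivation actually proves, is the key inequality $u_{i,-i}(\sigma_{\ag_{-i}}^{\mathrm{max}})\le u_{i,-i}(\sigma_{-i}^{\mathrm{eq}})$; equivalently, the stated conclusion holds under the extra hypothesis $r<p$ (in particular, whenever the supremum defining $p$ is not attained). Your attempt is therefore more careful than the paper's own one-line proof. Note that the paper's later invocations of this lemma (in \Cref{lemma:one-player-plays-fallback-in-diff-NE} and in the proof of \Cref{theorem:uniqueness}) use the slope form and so inherit the same degenerate-case caveat, although the example above does not contradict the conclusion of \Cref{theorem:uniqueness} itself, since its payoffs are symmetric.
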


\begin{proof}
Follows directly from \Cref{lemma:lower-diff-is-better}.
\end{proof}

\begin{lemma}\label{lemma:one-player-plays-fallback-in-diff-NE}
Let $\Gamma$ be symmetric and additively decomposable. Let $\diff$ be symmetric, high-value uninformative and minimized by copies.
Let $(\pi_i=(\sigma_i^{\leqslant},*,\sigma_i^>))_{i=1,2}$ be a Nash equilibrium of the $\diff$ meta game that induces strategies $\sigma_1,\sigma_2$. If $\sigma_i=\sigma_i^>$ for some $i$, then $(\sigma_1,\sigma_2)$ is a Nash equilibrium of $\Gamma$.
\end{lemma}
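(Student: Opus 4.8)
The plan is to reduce the statement to a purely base-game condition and then verify it for each player separately. Since $\Gamma$ is additively decomposable, write $u_i(\sigma_i,\sigma_{-i})=u_{i,i}(\sigma_i)+u_{i,-i}(\sigma_{-i})$; as in the proof of \Cref{lemma:unique-NE-additive-games}, a strategy is a best response in $\Gamma$ iff its support lies in $\argmax u_{i,i}$, and player-symmetry gives $u_{1,1}=u_{2,2}$ and $u_{1,2}=u_{2,1}$. Hence it suffices to show $u_{i,i}(\sigma_i)=\max u_{i,i}$ for both $i$. Assume w.l.o.g.\ that the hypothesis holds for $i=1$, i.e.\ $\sigma_1=\sigma_1^>$. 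I first record the key observation that $\sigma_{\pi_1}^{\mathrm{max}}=\sigma_1$: if $\sigma_1^{\leqslant}\neq\sigma_1^>$ then $\sigma_1=\sigma_1^>$ forces $\pi_1$ to trigger its above-threshold strategy with probability $1$ against $\pi_2$, so the supremum $p$ in the definition of $\sigma_{\pi_1}^{\mathrm{max}}$ equals $1$; and if $\sigma_1^{\leqslant}=\sigma_1^>$ the claim is immediate.

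Next I would pin down player $2$'s strategy. Applying \Cref{lemma:lower-diff-is-better} to player $2$ and substituting $\sigma_{\pi_1}^{\mathrm{max}}=\sigma_1$ gives $\max_{\sigma_2'}u_2(\sigma_2',\sigma_1)\leq u_2(\sigma_1,\sigma_2)$. By additive decomposability the opponent term $u_{2,1}(\sigma_1)$ cancels, so this collapses to $\max u_{2,2}\leq u_{2,2}(\sigma_2)$, i.e.\ $\sigma_2\in\argmax u_{2,2}$. Thus $\sigma_2$ is already a best response to $\sigma_1$ in $\Gamma$, and it remains only to show $\sigma_1\in\argmax u_{1,1}$.

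The crux is this last step, for which I would use a copying deviation (this is where minimized-by-copies and both symmetry conditions enter). Let player $1$ deviate to submit a copy of $\pi_2$; since $\Gamma$ is symmetric, $\bar{\mathcal{A}}_1=\bar{\mathcal{A}}_2$, so this is an admissible policy. In the profile $(\pi_2,\pi_2)$, observer-symmetry makes $\diff_1(\pi_2,\pi_2)$ and $\diff_2(\pi_2,\pi_2)$ identically distributed, so both players play the same strategy $\sigma^c=(1-q')\sigma_2^{\leqslant}+q'\sigma_2^>$, where $q'=P(\diff_2(\pi_2,\pi_2)>\theta_2)$. Combining minimized-by-copies ($P(\diff_2(\pi_2,\pi_1)<y)\leq P(\diff_2(\pi_2,\pi_2)<y)$) with policy-symmetry ($\diff_2(\pi_2,\pi_1)$ and $\diff_2(\pi_1,\pi_2)$ are identically distributed) shows $\diff_2(\pi_2,\pi_2)$ is stochastically dominated by $\diff_2(\pi_1,\pi_2)$, whence $q'\leq q:=P(\diff_2(\pi_1,\pi_2)>\theta_2)$. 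Now $\sigma_2\in\argmax u_{2,2}$ forces $u_{2,2}(\sigma^c)=\max u_{2,2}$ (when $0<q<1$ both $\sigma_2^{\leqslant},\sigma_2^>$ maximize $u_{2,2}$, and when $q=0$ we have $q'=0$ and $\sigma^c=\sigma_2^{\leqslant}=\sigma_2$), while \Cref{lemma:lower-diff-is-better-additive} ($u_{1,2}(\sigma_2^>)\leq u_{1,2}(\sigma_2^{\leqslant})$) together with $q'\leq q$ gives $u_{1,2}(\sigma^c)\geq u_{1,2}(\sigma_2)$. Since $u_{1,1}=u_{2,2}$, the copy earns player $1$ a payoff $u_{1,1}(\sigma^c)+u_{1,2}(\sigma^c)\geq\max u_{1,1}+u_{1,2}(\sigma_2)$; as $(\pi_1,\pi_2)$ is a Nash equilibrium this cannot exceed $u_{1,1}(\sigma_1)+u_{1,2}(\sigma_2)$, forcing $u_{1,1}(\sigma_1)=\max u_{1,1}$.

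The remaining case is the degenerate $q=1$, i.e.\ $\sigma_2=\sigma_2^>$, where the copying bound is inconclusive because $u_{2,2}(\sigma_2^{\leqslant})$ is uncontrolled; but then player $2$ also satisfies the hypothesis, so $\sigma_{\pi_2}^{\mathrm{max}}=\sigma_2$ and \Cref{lemma:lower-diff-is-better} applied to player $1$ yields $u_{1,1}(\sigma_1)=\max u_{1,1}$ directly. In all cases both induced strategies maximize their respective own-action terms, so $(\sigma_1,\sigma_2)$ is a Nash equilibrium of $\Gamma$. I expect the main obstacle to be exactly the third paragraph: correctly chaining observer-symmetry, policy-symmetry, and minimized-by-copies to certify that the copy profile plays a symmetric strategy no more than $q$ above threshold, and handling the boundary at the threshold in the stochastic-dominance comparison; the additive structure is what lets the opponent terms cancel cleanly throughout.
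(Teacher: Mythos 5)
Your proof is correct and takes essentially the same approach as the paper's: high-value uninformativeness (via \Cref{lemma:lower-diff-is-better} together with the observation that $\sigma_{\pi_1}^{\mathrm{max}}=\sigma_1$) pins down the opponent's induced strategy as a universal maximizer of $u_{2,2}$, and the copying deviation---chaining minimized-by-copies with the two symmetry conditions of $\diff$ and \Cref{lemma:lower-diff-is-better-additive}---handles the remaining player, with the case $\sigma_2=\sigma_2^>$ treated separately exactly as in the paper. The differences are cosmetic: you argue directly rather than by contradiction, and you make explicit several details the paper leaves implicit (the $q'\leq q$ stochastic-dominance chain and the boundary behavior at the threshold).
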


\begin{proof}
With high-value uninformativeness, it follows immediately that $\sigma_{-i}$ is a best response to $\sigma_i$.

The case that $\sigma_{-i}=\sigma_{-i}^>$ is trivial, so we focus on the case where $\sigma_{-i}$ gives positive probability to $\sigma_{-i}^{\leqslant}$. It then follows that $\sigma_{-i}^{\leqslant}$ is a best response to $\sigma_i$. Because the game is additively decomposable, this means that $\sigma_{-i}^{\leqslant}$ (independent of the opponent's strategy) maximizes $-i$'s utility (i.e., $\sigma_{-i}\in\argmax_{\sigma_{-i}' \in \Delta(A_{-i})} u_{-i,-i}(\sigma_{-i}')$). So in particular, $\sigma_{-i}$ is a best response to $\sigma_i$.

It is left to show that $\sigma_i$ is a best response to $\sigma_{-i}$. We will argue that if $\sigma_i$ is not optimizing Player $i$'s utility (i.e., if $\sigma_{i}\notin\argmax_{\sigma_{i}' \in \Delta(A_{i})} u_{i,i}(\sigma_{i}')$), then Player $i$ could better-respond to $\ag_{-i}$ by also playing $\ag_{-i}$ instead of $\ag_i$. Let $\sigma_{-i}'$ be the strategy induced for both players by $(\ag_{-i},\ag_{-i})$. %
Because $\diff$ is minimized by copies, $\sigma_{-i}'$ gives weakly more weight to $\sigma_{-i}^{\leqslant}$ than $\sigma_{-i}$. By \Cref{lemma:lower-diff-is-better-additive}, this means that $u_{i,-i}(\sigma_{-i}')\geq u_{i,-i}(\sigma_{-i})$. Second, by the assumption that $\sigma_i$ doesn't optimize $i$'s utility but $\sigma_{-i}$ and $\sigma_{-i}^{\leqslant}$ do, it follows that $u_{i,i}(\sigma_i)<u_{i,i}(\sigma_{-i}')$.

Putting it all together we obtain that
\begin{eqnarray*}
u_i(\ag_{-i},\ag_{-i}) &=& u_i(\sigma_{-i}',\sigma_{-i}')\\
&=& u_{i,i}(\sigma_{-i}') + u_{i,-i}(\sigma_{-i}')\\
&>& u_{i,i}(\sigma_{i}) + u_{i,-i}(\sigma_{-i})\\
&=& u_i(\sigma_i,\sigma_{-i})\\
&=& u_i(\ag_i,\ag_{-i}),
\end{eqnarray*}
as claimed.
\end{proof}

\subsection{Proof of Theorem~\ref{theorem:uniqueness}}
\label{appendix:proof-of-Pareto-uniqueness}

\uniquenesstheorem*

For the proof we define for additively decomposable games, $u_{\Sigma,j}\coloneqq u_{1,j}+u_{2,j} \colon A_j \rightarrow \mathbb{R}$. Intuitively, $u_{\Sigma,j}$ denotes the utilitarian welfare generated by Player $j$'s actions. In symmetric games, $u_{\Sigma,1}=u_{\Sigma,2}$ so that we can simply write $u_{\Sigma}$. For example, in the Prisoner's Dilemma $u_{\Sigma}\colon \mathrm{Cooperate} \mapsto G, \mathrm{Defect}\mapsto 1$.

\begin{proof} 
We will prove that if $(\ag_i=(\sigma_{i}^{\leqslant},\theta_i,\sigma_i^>))$ is a Pareto-optimal equilibrium of the meta game, then both players receive the same utility. The uniqueness of the Pareto-optimal equilibrium follows immediately.

We prove this in turn by contradiction. So assume that $(\ag_1,\ag_2)$ is a Pareto-optimal equilibrium of the meta game and that the two players receive different utilities.

Assume WLOG that in $(\ag_1,\ag_2)$ Player 1 receives higher utility. Let $\sigma_1,\sigma_2$ be the strategies played in $(\ag_1,\ag_2)$. Then we distinguish two cases:
\begin{itemize}
    \item[A)] Player 1 \enquote{takes} more than Player 2, i.e.,
    \begin{equation}\label{line:Player-1-takes-more}
        u_{1,1}(\sigma_1)>u_{2,2}(\sigma_2).
    \end{equation}
    \item[B)] Player 1 does not take more but Player 2 \enquote{gives} more than Player 1, i.e.,
    \begin{equation}\label{line:Player-1-does-not-take-more}
        u_{1,1}(\sigma_1) \leq u_{2,2}(\sigma_2)
    \end{equation}
    and
    \begin{equation}\label{line:Player-2-gives-more}
        u_{2,1}(\sigma_1)< u_{1,2}(\sigma_2).
    \end{equation}
\end{itemize}
It is easy to see that one of these cases must obtain.

A) We in turn distinguish two cases:

A.1) First consider the case where 
\begin{equation}\label{line:plus-policy-better-than-minus-policy}
u_{\Sigma}(\sigma_1^>) \geq u_{\Sigma}(\sigma_1^{\leqslant}).
\end{equation}
We will show that in this case $(\ag_1,\ag_2)$ cannot be a Nash equilibrium. Player 2 can better-respond by playing the policy $\tilde\ag_1$ that plays $\sigma_1^>$ and maximizes (as per the high value uninformativeness condition) Player 1's probability of playing $\sigma_1^>$. This can be seen as follows:
\begin{eqnarray*}
    u_2 (\ag_1,\ag_2) &=& u_2(\sigma_1,\sigma_2)\\
    &\underset{\text{Ineq.~\ref{line:Player-1-takes-more}}}{<}& u_2(\sigma_1,\sigma_1)\\
    &\underset{\text{Ineq.~\ref{line:plus-policy-better-than-minus-policy}}}{\leq}& u_2(\sigma_1^>,\sigma_1^>)\\ &\underset{\text{\Cref{lemma:lower-diff-is-better-additive}}}{\leq}& u_2(\sigma_1^{\mathrm{max}},\sigma_1^>)\\
    &=& u_2(\tilde\ag_1,\ag_2)
\end{eqnarray*}

A.2) Now consider the case where 
\begin{equation}\label{line:plus-policy-worse-than-minus-policy}
u_{\Sigma}(\sigma_1^>) \leq u_{\Sigma}(\sigma_1^{\leqslant}).
\end{equation}
We will show that in this case Player 2 can better respond by also playing $\ag_1$ instead of $\ag_2$, such that $(\ag_1,\ag_2)$ (again) cannot be a Nash equilibrium.

Let $\sigma_1'$ be the strategy played by both players in $(\ag_1,\ag_1)$. Note that because $\diff$ is minimized by copies, $\sigma_1'$ gives at least as much weight to $\sigma_1^{\leqslant}$ as $\sigma_1$.
\begin{eqnarray*}
    u_2(\ag_1,\ag_1) &=& u_2(\sigma_1',\sigma_1')\\
    &\underset{\text{Ineq.~ \ref{line:plus-policy-worse-than-minus-policy}}}{\geq}& u_2(\sigma_1,\sigma_1)\\
    &\underset{\text{Ineq.~\ref{line:Player-1-takes-more}}}{>}& u_2(\sigma_1,\sigma_2)\\
    &=& u_2(\ag_1,\ag_2)
\end{eqnarray*}

B) We again distinguish two cases.

B.1)
First consider the case where
\begin{equation}\label{line:plus-policy-weakly-worse-than-minus-policy}
u_{\Sigma}(\sigma_2^>)\leq u_{\Sigma}(\sigma_2^{\leqslant}).
\end{equation}
We will show that in this case $(\ag_2,\ag_2)$ is also a Nash equilibrium and that $(\ag_2,\ag_2)$ Pareto-dominates $(\ag_1,\ag_2)$. 

First, we show Pareto dominance. Let $\sigma_2'$ be the strategy played by both players in $(\ag_2,\ag_2)$. Because $\diff$ is minimized by copies, $\sigma_2'$ gives at least as much weight to $\sigma_2^{\leqslant}$ as $\sigma_2$. Then for Player 1 we have that
\begin{equation}\label{line:ag2-better-response-than-ag1}
    \begin{split}
    u_1(\ag_1,\ag_2)=u_1(\sigma_1,\sigma_2) &\underset{\text{Ineq.~\ref{line:Player-1-does-not-take-more}}}{\leq} u_1(\sigma_2,\sigma_2)\\ &\underset{\text{Ineq.~\ref{line:plus-policy-weakly-worse-than-minus-policy}}}{\leq} u_1(\sigma_2',\sigma_2') = u_1(\ag_2,\ag_2).
    \end{split}
\end{equation}
Player 2's utility is strictly higher in $(\ag_2,\ag_2)$, which we can see as follows:
\begin{eqnarray*}
    u_2(\ag_1,\ag_2) &=& u_2(\sigma_1,\sigma_2)\\
    &\underset{\text{Ineq.~\ref{line:Player-2-gives-more}}}{<}& u_2(\sigma_2,\sigma_2)\\
    &\underset{\text{Ineq.\ \ref{line:plus-policy-weakly-worse-than-minus-policy}}}{\leq}& u_2(\sigma_2',\sigma_2')\\
    &=& u_2(\ag_2,\ag_2).
\end{eqnarray*}

It is left to show that $(\ag_2,\ag_2)$ is a Nash equilibrium. By assumption, $\ag_1$ is a best response to $\ag_2$. Line \ref{line:ag2-better-response-than-ag1} therefore implies that $\ag_2$ is also a best response to $\ag_2$. 
Because of symmetry, this is true for both players. We conclude that $(\ag_2,\ag_2)$ is a Nash equilibrium.

B.2) Let
\begin{equation}\label{line:pi2-plus-higherwelfare}
u_{\Sigma}(\sigma_2^>)> u_{\Sigma}(\sigma_2^{\leqslant}).  
\end{equation}

We now must make one more distinction. Consider first the case where $\sigma_2=\sigma_2^>$. By \Cref{lemma:one-player-plays-fallback-in-diff-NE}, $(\sigma_1,\sigma_2)$ must be a Nash equilibrium of $\Gamma$. The contradiction follows immediately from \Cref{lemma:unique-NE-additive-games}.

Now consider the case where $\sigma_2\neq \sigma_2^>$. With Ineq.\ \ref{line:pi2-plus-higherwelfare} it follows that
\begin{equation}\label{line:pi2-worse-than-pi2plus}
    u_1(\sigma_2,\sigma_2) < u_1(\sigma_2^>,\sigma_2^>).
\end{equation}
We will show that $(\ag_1,\ag_2)$ is not an equilibrium because Player 1 can better-respond by playing the policy $\tilde\ag_1$ that plays $\sigma_2^>$ and maximizes as per the definition of high-value uninformativeness Player 2's probability of playing $\sigma_2^>$. Then 
\begin{eqnarray*}
    u_1(\ag_1,\ag_2) &=& u_1(\sigma_1,\sigma_2)\\
    &\underset{\text{Ineq.~\ref{line:Player-1-does-not-take-more}}}{\leq}& u_1(\sigma_2,\sigma_2)\\ &\underset{\text{Ineq.~\ref{line:pi2-worse-than-pi2plus}}}{<}& u_1(\sigma_2^>,\sigma_2^>)\\ &\underset{\text{\Cref{lemma:lower-diff-is-better-additive}}}{\leq}& u_1(\sigma_2^>,\sigma_2^{\mathrm{max}})\\
    &=& u_1(\tilde\ag_1,\ag_2).
\end{eqnarray*}
\end{proof}

\section{Theoretical analysis beyond threshold policies}
\label{sec:function-equilibrium}

We here analyze a meta game in which players can not only submit threshold policies but continuous functions. The goal is to show an equilibrium based on linear functions, similar to the equilibria found by CCDR pretraining.

A policy now is a function $\ag\colon \mathbb{R}_{\geq 0}\rightarrow [0,1]$, where $\ag(x)$ denotes $\ag$'s probability of cooperation. For differences $x_1,x_2$, the payoff of Player $i$ is given by
\begin{equation*}
(1-\ag_i(x_i)) + G\cdot \ag_{-i}(x_{-i}).
\end{equation*}

It is left to specify the difference function. Let $\epsilon>0$ and $K>\epsilon$. Then define the function difference
\begin{equation*}
    d(\ag_1,\ag_2) = \frac{1}{K}\int_0^{K} | \ag_1(x)-\ag_2(x) | dx
\end{equation*}
Further, define the probabilistic difference mapping $\diff (\ag_1,\ag_2) = d(\ag_1,\ag_2)+\diffnoise_i$, where $\diffnoise_i$ is drawn uniformly from $[0,\epsilon]$.
As the set of policies for each player consider the set of integrable functions.

\begin{proposition}\label{prop:general-function-case-coop-eq}
Let $\ag(x) = \max\left(1-x/a,0\right)$ for some $a>\epsilon$. Then $(\ag,\ag)$ is a Nash equilibrium if and only if $1+aK/\epsilon\leq G$.
\end{proposition}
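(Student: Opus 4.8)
The plan is to fix the opponent's policy at $\ag$ and reduce Player~1's (infinite-dimensional) best-response problem to a one-dimensional optimization over the non-noisy difference $\delta$ that a deviation induces; by symmetry of $\Gamma$ it suffices to analyze one player. First I would record the on-path payoff: with both policies equal to $\ag$ we have $d(\ag,\ag)=0$, so each perceived difference is just $\diffnoise_i\sim\mathrm{Unif}([0,\epsilon])$, and since $a>\epsilon$ the ramp is strictly positive on $[0,\epsilon]$, giving cooperation probability $1-\epsilon/(2a)$ for both players and payoff $U(0):=1+(G-1)\bigl(1-\epsilon/(2a)\bigr)$.

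The core step is the deviation analysis. A deviation $\ag'$ enters Player~1's payoff only through (i) the induced difference $\delta:=d(\ag',\ag)$, which fixes the opponent's cooperation probability $q(\delta):=\tfrac1\epsilon\int_\delta^{\delta+\epsilon}\ag(u)\,du$, and (ii) Player~1's own cooperation probability $\tfrac1\epsilon\int_\delta^{\delta+\epsilon}\ag'(u)\,du$ on the observed window $[\delta,\delta+\epsilon]$. I would prove a reduction lemma: for a fixed $\delta$, the payoff-maximizing $\ag'$ spends its entire difference budget $\int_0^K|\ag'-\ag|=K\delta$ inside the window (matching $\ag$ outside), since deviating outside the window raises $\delta$ without lowering own cooperation. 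This yields own cooperation $q(\delta)-K\delta/\epsilon$ as long as $K\delta\le\epsilon q(\delta)$, i.e.\ for $\delta\le\delta_{\max}$, where $\delta_{\max}$ is the fixed point $K\delta_{\max}=\epsilon q(\delta_{\max})$ at which the deviator defects throughout the window. Hence the best deviation payoff is
\[
U(\delta)=1+\frac{K\delta}{\epsilon}+(G-1)\,q(\delta),\qquad \delta\in[0,\delta_{\max}],
\]
while for $\delta>\delta_{\max}$ the payoff $1+G\,q(\delta)$ is decreasing, so the optimum lies in $[0,\delta_{\max}]$.

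It then remains to show $U$ is maximized at $\delta=0$ exactly when $1+aK/\epsilon\le G$. Since $\ag$ is nonincreasing and piecewise linear, $q$ and hence $U$ are convex, so the maximum over $[0,\delta_{\max}]$ is attained at an endpoint. Near $0$ the window lies in the linear part of $\ag$ (as $\epsilon<a$), so $q'(0)=-1/a$ and $U'(0)=K/\epsilon-(G-1)/a$, which is $\le0$ precisely when $1+aK/\epsilon\le G$. This settles the ``only if'' direction at once: if $1+aK/\epsilon>G$ then $U'(0)>0$, so a small increase of $\delta$ strictly beats $U(0)$ and $(\ag,\ag)$ is not a Nash equilibrium.

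For the ``if'' direction I would use convexity to reduce $\sup_{[0,\delta_{\max}]}U$ to the two endpoints, $U(0)$ and the full-defection payoff $U(\delta_{\max})=1+GK\delta_{\max}/\epsilon$. Writing $U(\delta)-U(0)=\tfrac1\epsilon\bigl[K\delta-(G-1)\int_0^\delta(\ag(u)-\ag(u+\epsilon))\,du\bigr]$ and combining $\ag(u)-\ag(u+\epsilon)\le\epsilon/a$ with $G-1\ge aK/\epsilon$ disposes of every $\delta$ whose window stays left of the kink of $\ag$ at $a$. The main obstacle is precisely the remaining regime, where $\delta_{\max}$ is large enough that the observed window reaches the flat part of $\ag$: there the per-unit loss in opponent cooperation shrinks, so one must argue via the convexity of $q$ and the fixed-point relation $K\delta_{\max}=\epsilon q(\delta_{\max})$ that the full-defection endpoint still does not beat $U(0)$. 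In short, the first-order condition pins down the threshold $1+aK/\epsilon$ immediately, whereas certifying that no large, full-defection deviation is profitable is the delicate part of the argument.
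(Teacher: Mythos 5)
Your construction follows the same route as the paper's own proof---reduce the deviation problem to a one-dimensional optimization over the induced difference $\delta$, note that the deviator's budget $K\delta$ is best spent clearing out cooperation on the observed window $[\delta,\delta+\epsilon]$, and read the threshold off the marginal trade-off $K/\epsilon+1/a$ versus $G/a$---but you execute it more carefully: the reduction lemma, the cap $\delta_{\max}$ defined by $K\delta_{\max}=\epsilon q(\delta_{\max})$, and the convexity reduction to endpoints are all sound, and your \enquote{only if} direction is complete. (One small slip: on $[0,a-\epsilon]$ the \enquote{if} direction needs a \emph{lower} bound on $\ag(u)-\ag(u+\epsilon)$, not the upper bound $\leq\epsilon/a$ you quote; fortunately equality $=\epsilon/a$ holds exactly in that regime.) The paper's entire proof consists of your first-order computation together with the bare assertion that \enquote{it is easy to see that it is enough to consider small deviations}, i.e.\ $\delta\leq a-\epsilon$, so up to the point where you stop, your argument is strictly more rigorous than the paper's.

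However, the step you defer---showing that deviations whose window reaches the flat part of $\ag$ are unprofitable---is not merely delicate: it is false without further parameter restrictions, so no combination of convexity and the fixed-point relation can close it. Take $\epsilon=1$, $K=2$, $a=1.1$ (so $a>\epsilon$ and $K>\epsilon$ hold) and $G=1+aK/\epsilon=3.2$. Since $\epsilon q(a-\epsilon)=\epsilon^2/(2a)\approx 0.455 > K(a-\epsilon)=0.2$, we get $\delta_{\max}\approx 0.189>a-\epsilon=0.1$. Now $U(0)=1+(G-1)\bigl(1-\epsilon/(2a)\bigr)=2.2$, whereas the deviation $\delta=0.15\in(a-\epsilon,\delta_{\max})$ gives $q(0.15)=(a-0.15)^2/(2a\epsilon)\approx 0.4102$ and $U(0.15)=1+K\delta/\epsilon+(G-1)q(0.15)\approx 2.2025>U(0)$: a strictly profitable deviation at (and, by continuity, slightly above) the claimed threshold. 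So the \enquote{if} direction of the proposition actually fails in this regime; the statement is correct only under an additional hypothesis guaranteeing $\delta_{\max}\leq a-\epsilon$, namely $K(a-\epsilon)\geq\epsilon^2/(2a)$ (which holds, e.g., whenever $a\geq 2\epsilon$). In short, the gap you flagged is a genuine gap in the proposition itself, shared by the paper's proof, which silently assumes $\delta\leq a-\epsilon$; your write-up has the virtue of making the missing step explicit, but the plan you sketch for completing it cannot succeed as stated.
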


Note that $\ag$ decreases cooperation linearly in $x$ down to $0$ (which is hit at $x=a$). This function is shown in \Cref{fig:linear-agent-graph} for $a=1$. Note that our policies look roughly like this after Step 2.

\begin{figure}
    \centering
    \includegraphics[width=0.5\linewidth]{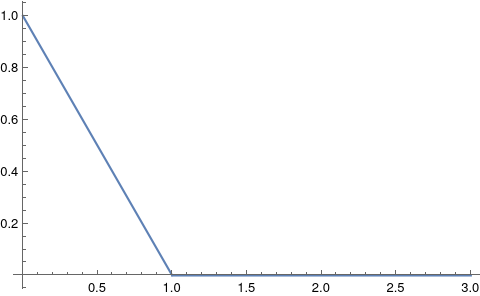}
    \caption{Visualization the continuous policy in \Cref{prop:general-function-case-coop-eq} for $a=1$. The probability of cooperation is plotted against the perceived difference to the opponent.}
    \label{fig:linear-agent-graph}
\end{figure}

\begin{proof}
Consider $(\ag,\ag)$ and imagine that, WLOG, Player 1 moves away from $\ag$ by $\Delta$, i.e., deviates to play some $\ag'$ s.t.\ $d(\ag,\ag')=\Delta$. It is easy to see that it is enough to consider small deviations. Specifically, we assume $\Delta\leq a-\epsilon$.
First, if the difference between the policies increases by $\Delta$, what happens to Player 2's expected amount of cooperation? It is easy to see that this decreases by $\Delta/a$.

Next, we need to ask: by increasing the difference by $\Delta$, how much can Player 1 increase her probability of defection? We need to consider two effects. First, if the difference increases by $\Delta$, then automatically Player 1 defects more by the same effect as Player 2. So this gives Player 1 an extra $\Delta/a$ probability of defection. Moreover, Player 1 can decrease the probability of cooperation on the relevant interval $[\Delta,\Delta+\epsilon]$. This decreases the probability of cooperation by (at most) $K\Delta/\epsilon$.

Taking stock, Player 1 can increase her probability of defecting by $\Delta/a+K\Delta/\epsilon$ at the cost of Player 2 increasing her probability of defecting by $\Delta/a$. This is good for Player 1 if and only if $1+aK/\epsilon>G$.
\end{proof}

\section{Details on our experiments}
\label{appendix:details-on-experiments}

\subsection{Software}

We used \texttt{pytorch} for implementing CCDR and ABR and \texttt{functorch} for implementing LOLA. We used floats with double precision (by running \texttt{torch.set\_default\_dtype(torch.float64)}), because preliminary experiments had shown numerical issues as ABR converged. We used Weights and Biases (\texttt{wandb.ai}) for tracking.

\subsection{Game and meta game}

We here give some details on the game and diff meta game we consider throughout our experiments.

\paragraph{Constructing $f_D,f_C$} In our experiments $f_D,f_C$ have input dimension $10$ and output dimension $3$. (Thus, including one dimension for the similarity value, our policies have input dimension $11$.) First we generate $\mathbf{s}_{C,i}$ and $\mathbf{s}_{D,i}$ for $i=1,2,3$ from $\{ 0,1 \}^{10}$ uniformly at random. Then we define 
\begin{equation*}
    f_{C}(\mathbf{x}) = \left(\sin(\mathbf{s}_{C,i} \cdot \mathbf{x})\right)_{i=1,2,3}.
\end{equation*}
We define $f_D$ analogously based on $\mathbf{s}_D$.

We chose this function because it is very simple to understand and implement and at the same time requires using larger nets. The only other approach we tried in preliminary experiments is to generate $f_C,f_D$ by randomly generating neural nets. The problem is that large randomly generated fully connected neural nets are close to constant functions.

\paragraph{Constructing $\mu$} Recall that for any pair of actions $f_1,f_2$, the payoffs of the HDPD are given by
$u_i\left(f_1,f_2\right) = -\mathbb{E}_{\mathbf{x}\sim \mu}\left[d(f_i(\mathbf x),f_D(\mathbf{x})) +  G d(f_i(\mathbf x),f_C(\mathbf x))\right]$, where $d$ is the Euclidean distance and $\mu$ is some measure of $\mathbb{R}^{10}$. Thus, to construct a specific instance of the HDPD we need to also construct $\mu$. We do this by generating 50 vectors uniformly from $[0,1]^{10}$ and then taking the uniform distribution over these 50 vectors.

\paragraph{Constructing the noisy diff mapping} Recall that $\diff_i(\ag_1,\ag_2)=\mathbb{E}_{(y,\mathbf{x})\sim \nu}\left[ d(\ag_1(y,\mathbf{x}),\ag_2(y,\mathbf{x})) \right] + \diffnoise_i$,
where $\nu$ is some probability distribution over $\mathbb{R}^{n+1}$ and $\diffnoise_i$ is some real-valued noise.
We need to specify $\nu$ and the distribution $\diffnoise_i$. For $\nu$ we first generate 50 reals from $[0,0.1]$ uniformly at random as our test diffs. We increment each of these by a random draw from the underlying noise distribution, i.e., by a number drawn uniformly at random from $[0,0.1]$. We then define $\nu$ to be the uniform distribution over 50 values that result from pairing the support of $\mu$ with these 50 values.

For the noise we generate for each player $50$ values uniformly from $[0,0.1]$ and then use the uniform distribution over these 50 points.

Note that by using the uniform distribution over a finite support, we can compute expected utilities in the meta game exactly.

\subsection{Neural net policies}

Throughout our experiments, our policies $\pi_{\bm{\theta}}$ are represented by neural networks with three fully connected hidden layers of dimensions 100, 50 and 50 with biases and LeakyReLU activation functions. Thus, these networks overall have $11 + 100 + 50 + 50 + 3 + 11 \cdot 100 + 100 \cdot 50 + 50\cdot 50 + 50 \cdot 3=8964$ parameters. %

\subsection{Methods as used in our experiments}

\paragraph{CCDR pretraining} We here describe in more detail the CCDR pretraining step in our results. Recall that CCDR pretraining consists in maximizing $V^d(\ag_{\theta_i},\ag_{\theta_i}) + V^d (\ag_{\theta_i},\ag'_{\theta_{-i}})$ for randomly generated opponents $\ag'_{\theta_{-i}}$. Call $-V^d(\ag_{\theta_i},\ag_{\theta_i}) + V^d (\ag_{\theta_i},\ag'_{\theta_{-i}})$ the CCDR loss.

In our experiments, we maximized this by running Adam for 100 steps. In each step, the CCDR loss is calculated by averaging over 100 randomly generated opponents. The learning rate is 0.02.

\paragraph{Alternating best response training}
\label{appendix:ABR-details}
We ran alternating best response training for $T=1000$ turns.
We need to specify how we updated $\theta_i$ to maximize $V(\theta_i,\theta_{-i})$ (holding $\theta_{-i}$ fixed) in each turn. For this we run gradient descent for $T'=1000$ steps. However, we only take gradient steps that are successful, i.e., that in fact reduce loss. The learning rate $\gamma'$ is sampled uniformly from $[0,\gamma=0.00003]$ in each step. Note that by randomly sampling the learning rate, the algorithms avoids getting stuck when a gradient step is unsuccessful. We summarize this in \Cref{alg:alternating-best-response-learning-with-details}.

\begin{algorithm}[tb]
   \caption{Alternating best response learning}
   \label{alg:alternating-best-response-learning-with-details}
\begin{algorithmic}
   \STATE {\bfseries Input:} Initial model parameters $\theta_1,\theta_2$, $T,T'\in\mathbb{N}$, learning rate $\gamma$
   \STATE {\bfseries Output:} New model parameters $\theta_1',\theta_2'$
   \FOR {$t=1,...,T$}
       \FOR{$i=1,2$}
            \FOR{$t'=1,...,T'$}
                \STATE $\gamma'\sim \mathrm{Uniform}([0,\gamma])$
                \STATE $\theta_i'' \gets \theta_i' + \gamma' \nabla_{\theta_i'} V(\theta_i',\theta_{-i}')$
                \IF{$V(\theta_i'',\theta_{-i}')\geq V(\theta_i',\theta_{-i}')$}
                    \STATE $\theta_i'\gets \theta_i''$
                \ENDIF
            \ENDFOR
       \ENDFOR   
    \ENDFOR
    \RETURN $\theta_1',\theta_2'$
\end{algorithmic}
\end{algorithm}

\subsection{Convergence to spurious stable points?}

In theory, alternating best response learning could converge to a \enquote{very local Nash equilibrium}, i.e., a pair of models that are best responses only within a very small neighborhood of these models. One might also worry about convergence to other stable points \citep{Mazumdar2020}.

However, as far as we can tell, the limit points of our learning procedure are not spurious. To confirm this, we performed the following test (implemented by the function \texttt{best\_response\_test} of our code). For a given pair of models, we pick either model and perturb each of its parameters a little. We then see whether the perturbed model is a better response to the opponent model than the original model. In an (approximate) local Nash equilibrium, this should almost never be the case.

In all but one of our successful runs (the ones converging to partial cooperation), none of 10,000 random perturbations of each of the models after alternating best response training led to a better response to the opponent model. In one run, three perturbations of one of the models decreased loss. When applying the test after CCDR pretraining but before alternating best response training, close to half of random perturbations improve utility.

\subsection{Compute costs}

We here provide details on how computationally costly our experiments are. 
To do so, we ran the experiment for a single random seed on an AMD Ryzen 7 PRO 4750U, a laptop CPU launched in 2020. (Note that we used the CPU not GPU (CUDA).) The CCDR pretraining took about 30 seconds per model. The ABR phase took about 3h. (Note that we ran most of our experiments as reported in the paper via remote computing on different hardware.)

\section{Experiments with LOLA}
\label{appendix:experiments-with-LOLA}

We have tried to learn SBC using \citeauthor{Foerster2018}'s (\citeyear{Foerster2018}) Learning with Opponent-Learning Awareness (LOLA). We here specifically report on an experiment in which we tried a broad range of parameters. The results are in line with prior exploratory experiments.

LOLA sometimes succeeds in learning SBC (without CCDR pretraining). It finds similar policies as CCDR pretraining. Some of the SBC models found by LOLA remain cooperative throughout a subsequent ABR phase.

Unfortunately, none of our LOLA results are nearly as robust as the CCDR results reported in the main text and in \Cref{appendix:details-on-experiments}. Specifically, they are not even robust to changing the random seed. One reason for this is that LOLA is capable of ``unlearning'' LOLA-learned SBC.

\subsection{Experimental setup}

We ran \citeauthor{Foerster2018}'s (\citeyear{Foerster2018}) Learning with Opponent-Learning Awareness (LOLA). We tried every combination setting the learning rate and lookahead parameter in LOLA to values in 0.0003, 0.001, 0.003, 0.01, 0.03, 0.1, 0.3, 1. For each combination we tested both \textit{exact} and Taylor LOLA as described by \citet[Sect.\ 3.2]{willi2022cola}. We tested each of these $8\cdot 8 \cdot 2 = 128$ combinations with 5 different random seeds. In each of these experiments we initialized the nets randomly and then ran LOLA for both agents for 30,000 steps. We then ran ABR training as described in \Cref{appendix:ABR-details} for at most 500 steps, to test whether the resulting policies are in equilibrium. To save compute, we generally aborted ABR when the loss hit the Defect--Defect utility of $5$ or when it was clear that ABR learning had converged.

We then labeled each individual run as a \enquote{weak LOLA success} if the loss after the LOLA phase was smaller for both players than the loss of mutual defection and as a LOLA-ABR success if the loss after the ABR phase was smaller for both players than the loss of mutual defection.

For each parameter configuration in which at least 3 out of 5 runs were a weak LOLA success we ran another 20 runs (with different random seeds) without the ABR phase to determine how robust the success is. Similarly, for each parameter configuration in which at least 3 out of 5 runs were a LOLA-ABR success we ran another 20 runs \textit{with} the ABR phase.

\subsection{Qualitative analysis}

We start with a qualitative discussion of our results, because the experiments with LOLA exhibit a much wider range of complex phenomena than the CCDR-based results.

\subsubsection{A few clear-cut successful runs}

Out of all the runs in our experiment, a few are as clear-cut successes as most of the CCDR runs. \Cref{fig:rare-sweep-92-learning-curve} shows the learning curve of one of these runs across both the LOLA and at the end the ABR phase. It uses a lookahead of $0.001$, learning rate of $0.001$ and exact LOLA. The players learn to cooperate with each other in the LOLA phase. When the switch to ABR is made after 30k steps of LOLA, cooperation deteriorates slightly, but ultimately the models converge in the ABR phase to an outcome with a loss well below the loss of mutual defection. This shows that LOLA is to some extent capable of solving the given problem.

Unfortunately, both in the present experiment and in preliminary experiments we have found that these results are not even robust to changing the random seed. For example, the four other runs with the same parameters as the run from \Cref{fig:rare-sweep-92-learning-curve} (but different random seeds) all failed to produce significant positive results. (One resulted in marginal cooperation at the end of the LOLA phase (losses 4.986 and 4.928) that disappeared immediately when switching to ABR learning. Another partially cooperated for some of the LOLA phase but failed to retain cooperation even until the end of the LOLA phase.)

\begin{figure}
    \centering
    \includegraphics[width=0.5\linewidth]{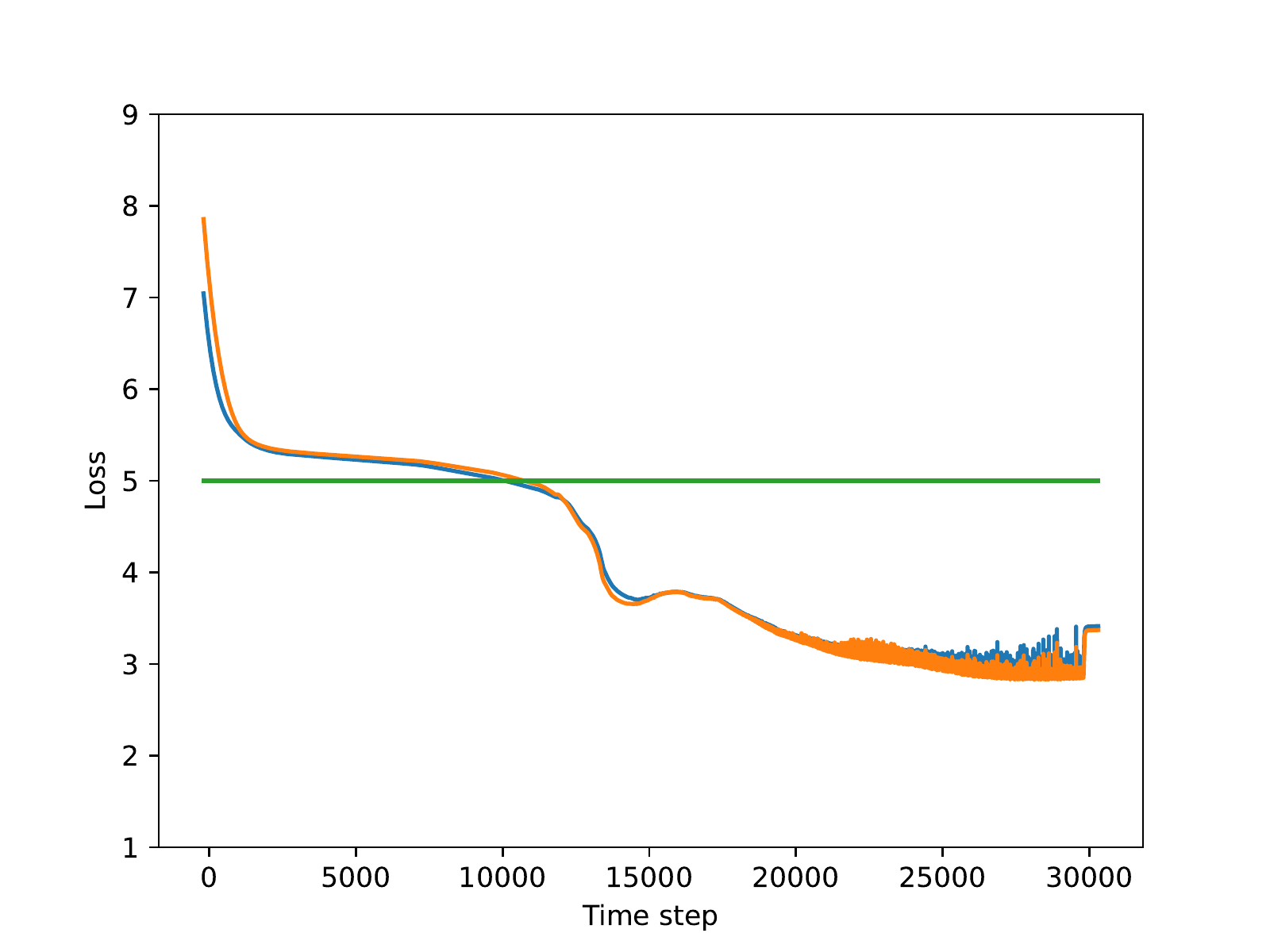}
    \caption{Loss curve of a relatively successful run between two LOLA agents.}
    \label{fig:rare-sweep-92-learning-curve}
\end{figure}

\subsubsection{What models LOLA learns}

Generally, when LOLA succeeds, it learns similar models as CCDR pretraining. The main difference is that for high diff values, LOLA models typically do not exactly defect. (Of course, it also does not cooperate. Instead if performs an action far from both $f_C$ and $f_D$.) This is to be expected, since these agents are not trained on randomly generated opponents. An example model after 30,000 steps of LOLA is shown in \Cref{fig:rare-sweep-92-model-after-LOLA}.

\begin{figure}
    \centering
    \includegraphics[width=0.5\linewidth]%
    {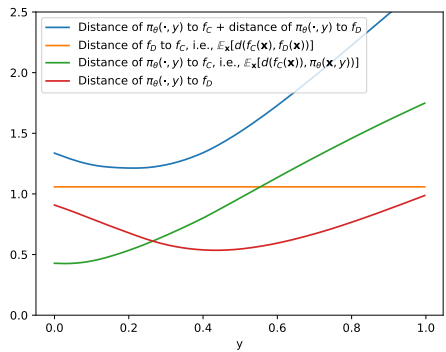}
    \caption{An example of a model found in a successful LOLA run.}
    \label{fig:rare-sweep-92-model-after-LOLA}
\end{figure}

When LOLA does not succeeed, the learned models vary greatly. Some are clearly set up to impose some incentives -- others just constantly defect.

\subsubsection{Instability in the LOLA phase}

The learning in the LOLA phase is often highly unstable. That is, the during the learning phase, the change in losses in a single learning step is often very large.  This occurs even in successful runs. As an example, consider the learning curve in \Cref{fig:dauntless-sweep-198-learning-curve} (lookahead $0.003$, learning rate $0.01$, exact LOLA). While it is unclear whether cooperation would have survived further ABR learning, this run is relatively successful. However, the utilities vary by large amounts during the LOLA phase. In fact for parts of the LOLA learning phase, both players' losses are much higher than the loss of mutual defection. Most successful runs with LOLA look like this. Of course, successes based on such runs cannot be very robust: if we had ceased LOLA learning somewhat earlier, the run would not have succeeded. It is unclear what would happen if the run had run for more than 30k LOLA steps.

\begin{figure}
    \centering
    \includegraphics[width=0.5\linewidth]{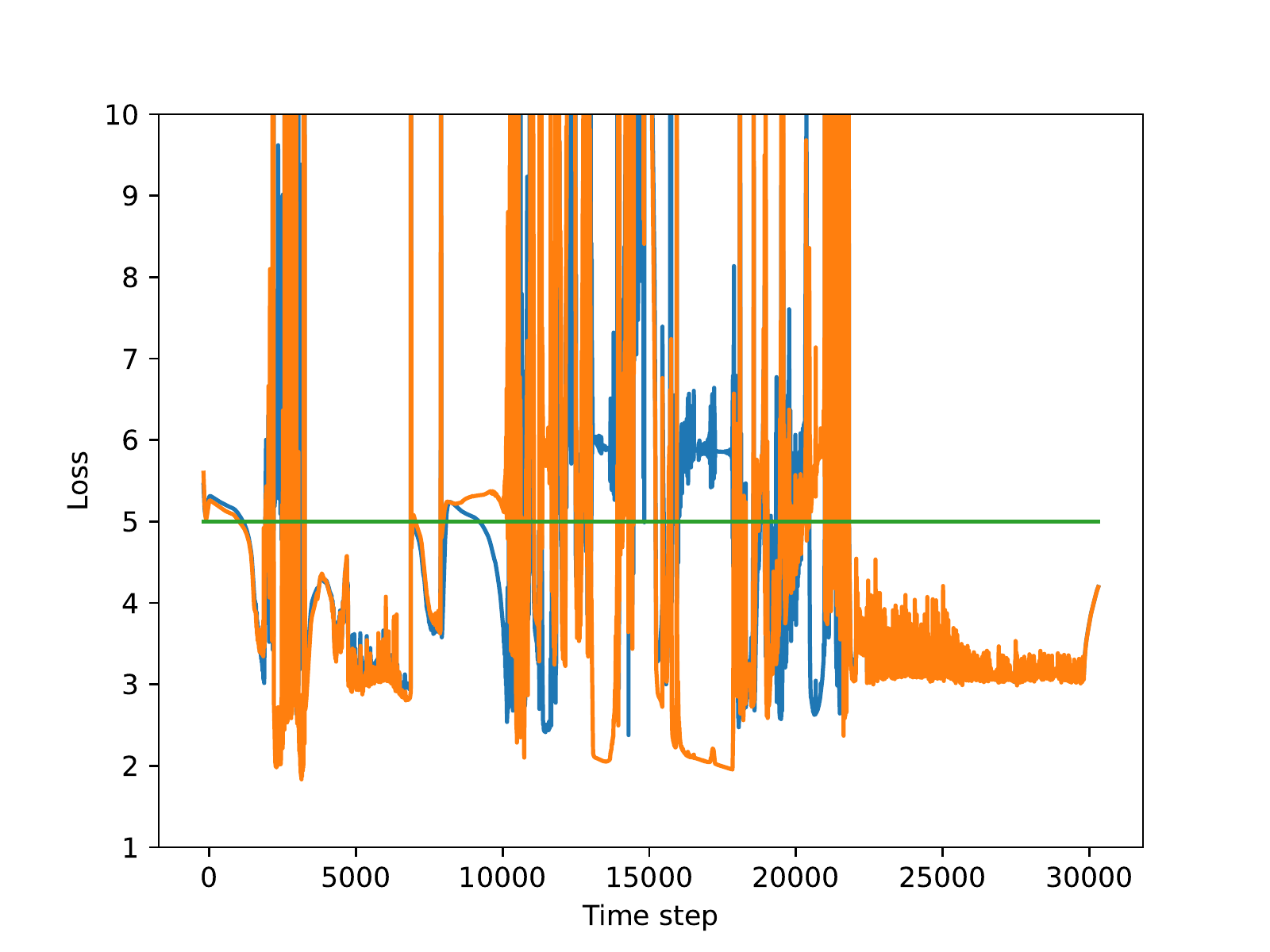}
    \caption{An example of the loss curve from a highly unstable LOLA run.}
    \label{fig:dauntless-sweep-198-learning-curve}
\end{figure}

\subsubsection{Cooperation in the LOLA phase versus cooperation in the ABR phase}

While CCDR is explicitly a \textit{pre}training method, LOLA can also be used as a standalone learning procedure. So for each run, we can ask both whether the players cooperate, say, at the end of the LOLA phase and whether the players cooperate at the end of the ABR phase. It turns out that all four combinations of answers are possible! In particular, there are runs in which LOLA versus LOLA fails to converge to a pair of policies that give both players a higher utility than mutual defection; but in which ABR then converges to an outcome that is marginally better for both players than mutual defection. %

\subsection{Quantitative analysis}

Bearing in mind the different qualitative phenomena, we here give a more quantitative analysis.

\Cref{table:parameters-LOLA-Success} lists the parameters in which at least three out of five runs exhibited weak LOLA success as defined earlier. They also summarize the result of 20 further runs with these parameters. Specifically, we categorized runs as \textit{stable successes} if the loss was below the loss of mutual defection for both players for the final 5,000 LOLA steps. We categorized them as \textit{unstable successes} if the loss was below the loss of mutual defection for both players for the final 50 but not the final 5,000 LOLA steps. The fourth and fifth columns of \Cref{table:parameters-LOLA-Success} show the numbers of these different kinds of successes out of 20 runs. The sixth colum shows the average loss across both players and all successful runs. The final column shows the standard deviation across the losses of the successful runs (also across players).

\begin{table*}
    \centering
    \begin{tabular}{c|c|c|c|c|c|c}
    LOLA  & LOLA  & Taylor  & Stable    & Unstable & average & SD\\
    LA    & LR     & LOLA              & Successes &  Successes & success loss &\\\hline\hline
    0.003 & 0.0003 & false & 5 & 1 & 4.518 & 0.2857\\
    0.001 & 0.01 & false & 1 & 0 & 4.666 & 0.002601 \\
    0.001 & 0.003 & false & 5 &7 & 3.579 & 0.6096 \\
    0.0003 & 0.003 & true & 3 & 7 & 4.299 & 0.1482\\
    0.0003 & 0.003 & false & 4 & 3 & 4.506 & 0.07975\\
    0.0003 & 0.001 & true & 4 & 1 & 4.344 & 0.2466
    \end{tabular}
    \caption{Parameters for which at least three of five runs resulted in partial cooperation at the end of the LOLA phase along with results of further 20 runs per parameter configuration..}
    \label{table:parameters-LOLA-Success}
\end{table*}

\Cref{table:parameters-LOLA-ABR-Success} lists the set of parameter configurations that achieved a LOLA--ABR success in at least three out of five runs. They also summarize the result of 20 further runs with each of these parameter configurations. Specifically, we categorized runs again as stable successes if the loss was below the loss of mutual defection for both players for the final 5,000 LOLA steps \textit{and} at the end of the ABR phase. We labeled them as unstable successes if the loss was below the loss of mutual defection for both players at the end of the ABR phase but \textit{not} for the final 5,000 LOLA steps. Again, the fourth and fifth columns of \Cref{table:parameters-LOLA-ABR-Success} show the numbers of these different kinds of successes out of 20 runs. The sixth column shows the average loss across both players and all successful runs. The final column shows the standard deviation across the losses of the successful runs (also across players).

\begin{table*}
    \centering
    \begin{tabular}{c|c|c|c|c|c|c}
    LOLA  & LOLA  & Taylor  & Stable    & Unstable & average & SD\\
    LA    & LR     & LOLA              & Successes &  Successes & success loss &\\\hline\hline
    0.0003 & 0.003 & true & 1 & 12 & 4.279 & 0.2830\\
    0.0003 & 0.003 & false & 3 & 9 &  4.344 & 0.2566\\
    \end{tabular}
    \caption{Parameters for which at least three of five runs resulted in partial cooperation at the end of the ABR phase along with results of further 20 runs per parameter configuration.}
    \label{table:parameters-LOLA-ABR-Success}
\end{table*}

From these results we conclude that LOLA can succeed under various parameter configurations, but none of the parameter configurations succeed nearly as reliably as CCDR pretraining.

\subsection{Compute costs}

We again provide details on how computationally costly our experiments are. 
To do so, we ran the experiment for a single random seed on an AMD Ryzen 7 PRO 4750U. The LOLA phase took about 20 minutes. The ABR phase took about 1h and 30min. Note again we ran most of our experiments as reported in the paper via remote computing on different hardware. Note also that we ran ABR for half as many steps compared to the experiments for the main text (500 instead of 1000). So, the cost per ABR step is roughly the same between the two experiments.

\section{Distantly related work}
\label{appendix:distantly-related-work}

\subsection{Learning in Newcomb-like decision problems}
\label{sec:rel-work-learning-Newcomb}

There is some existing work on learning in Newcomb-like environments that therefore also applies to the Prisoner's Dilemma against a copy. Whether cooperation against a copy is learned generally depends on the learning scheme. \citet{GranCanaria} show that $Q$-learning with a softmax policy learns to defect. Regret minimization also learns to defect.
Other learning schemes do converge to cooperating against exact copies \citep{Albert2001,Meyer2016,RLDT,Oesterheld2023BRIA}. All schemes in prior work differ from the present setup, however, and to our knowledge none offer a model of cooperation between similar but non-equal agents.

\subsection{Cooperation via iterated play, reputation, etc.}

Perhaps the best-known way to achieve cooperation in the Prisoner's Dilemma is to play the Prisoner's Dilemma repeatedly (e.g., \citealt{Axelrod1984}; \citealt{Osborne2004}, Ch.\ 14, 15). %
Clearly, the underlying mechanism (repeated play) is very different from the mechanism underlying SBC, in which the game is played one-shot. That said, our folk theorem (\Cref{theorem:folk-theorem}) is similar to the well-known folk theorem for repeated games. (As noted in \Cref{sec:rel-work}, the folk theorem for program equilibrium is also similar.) A number of variants of iterated play have been considered to study, for example, reputation effects, effects of allowing players to choose with whom to play based on reputation, and so on (e.g., \citealt{nowak1998evolution}). %

While cooperation via repeated play is very different from similarity-based cooperation as studied in this paper, some variants of the former are easy to confuse with the latter. As a simplistic example, consider a variant of the infinitely repeated Prisoner's Dilemma. Typically, it is imagined that players observe the entire history of past play. But now imagine that instead the players in each round only observe a single bit: $0$ if they have taken the same action in each round so far and $1$ otherwise. Then it is a (subgame-perfect) Nash equilibrium for both players to follow the following strategy: cooperate upon observing $0$ and defect upon observing $1$. This is somewhat similar to similarity-based cooperation, but the underlying mechanism is still the one of repeated play: the reason it is rational for each of the players to cooperate is that if they defect, their opponent will then defect in all subsequent rounds. In fact, the strategies played are equivalent to the grim trigger strategies for the iterated Prisoner's Dilemma.

As another example of cooperation via repeated play that can be confused with SBC, consider assortative matchmaking as proposed by \citet[Sect.\ 2.4]{wang2018evolving}. Imagine that agents in a large population are repeatedly paired up to play a Prisoner's Dilemma. In each round, each agent is matched up with a \textit{new} opponent who is similarly cooperative as they are. Again, similarity plays a role, but the underlying mechanism is very different from the ones studied in our paper. For one, similarity is used by the environment (the matching procedure) not the agent itself. Second, the reason why cooperation is rational is again its impact on rewards in \textit{subsequent} rounds. For instance, imagine that for some reason an agent has, e.g., by accident, defected in the first few rounds and is now matched with uncooperative agents. Then it may be rational for the agent to cooperate in order to be matched with more cooperative agents in the future, even if it knows that it will receive the sucker's payoff in the current round.

\subsection{Tag-based cooperation}
\label{appendix:tag-based-cooperation}

In the literature the evolution of cooperation and in particular population dynamics, researchers have studied so-called \textit{tag-based cooperation} \cite{riolo2001evolution,cruciani2017dynamic,Traulsen2004,Traulsen2008}, which has sometimes also been referred to as similarity-based cooperation. The underlying mechanism is very different from similarity-based cooperation as studied in the present paper, however.

In models of tag-based cooperation, an agent is defined not only by a policy but also by a \textit{tag} that has nothing to do with the policy. When choosing whether to cooperate with each other, agents can see each other's tags. Thus, the policies can choose based on the other agent's tag. For example, one policy could be to cooperate if and only if the two agents have the same tag. Notice that in tag-based cooperation, only similarity of tags is considered. In contrast, our paper feeds the similarity of the policies themselves as input to the policy.

Because the tag is not tied to the policy, tags are effectively cheap talk that have no impact on the Nash equilibria of the game. For example, consider the following meta game on the Prisoner's dilemma. Each player $i$ submits a tag $\tau_i\in\{1,...,100\}$ and a policy $\pi_i\colon \{1,...,100\} \rightsquigarrow \{ C,D\}$ that stochastically maps opponent tags onto actions. Then actions are determined by $a_i=\pi_i(\tau_{-i})$ for $i=1,2$. Each player $i$ receives the utility $u_i(a_1,a_2)$. It is easy to see that in all Nash equilibria of this game, both players submit a policy that always defects. So tag-based cooperation does not help with achieving cooperative equilibrium in a two-player Nash equilibrium. In contrast, similarity-based cooperation as studied in this paper \textit{does} allow for cooperative equilibria in two-player games, as the main text has shown.

So why might one study tag-based cooperation? Perhaps one would expect that any population of evolving agents would converge to defecting in the above tag-based Prisoner's Dilemma. Interestingly, it turns out that this is not the case! As computational experiments conducted in the above works show, evolving populations sometimes maintain some level of cooperation when they can observe each others' tags. Because cooperation is not an equilibrium, cooperation is maintained dynamically, i.e., by a population whose distribution of tags and policies continually changed. Roughly, the reason seems to be the following. By mere chance (from mutation) there will be tags whose individuals are more cooperative toward each other than the individuals associated with other tags. The cooperative tags and their associated cooperative policies will therefore become more prominent in the population. Of course, this cannot be stable: once an agent is discovered that has the cooperative tag but defects, this type of agent takes over within the cooperative population. But in the meantime a new cooperative tag may have emerged. And so forth. This mechanism for achieving cooperation seems to have no parallel at all in our model of similarity-based cooperation.\footnote{See also \citet{nowak1992evolutionary} for another line of work that is even more different from the present paper, but in which unstable cooperation survives dynamically by similar mechanisms.}

\subsection{The green beard effect}

The \textit{green beard effect} (\citealt[][]{Hamilton1964}; \citealt[][Ch.\ 6]{Dawkins1976}; \citealt{Gardner2010}) refers to the idea that there could be a gene $\mathcal{G}$ that causes individuals to be altruistic to people who also have the gene $\mathcal{G}$ \textit{without} relying on kinship to identify other people with the gene $\mathcal{G}$. In particular, imagine a gene $\mathcal{G}$ with the following three properties:
\begin{enumerate}[nolistsep]
    \item Gene $\mathcal G$ causes people with the gene to have some perceptible trait. For concreteness let the trait consist in growing a green beard.
    \item Gene $\mathcal G$ causes people to be altruistic toward people who have a green beard, e.g., to cooperate with them in a one-shot Prisoner's Dilemma.
    \item People without gene $\mathcal G$ cannot develop a green beard.
\end{enumerate}
Then such a gene could spread and be dominant in a population.

Hamilton and Dawkins discuss the green beard effect as a theoretical idea. To what extent the green beard effect is a real biological phenomenon is subject to debate. Some candidates were pointed out by \citet{Keller1998}, \citet{Queller2003}, \citet{Smukalla2008} \citep[cf.][]{Sinervo2006}.

On first sight, similarity-based cooperation as studied in this paper and the green beard effect might seem similar. For example, if you observe a population of individuals, some of whom have the green beard and some of whom do not, you will see cooperation between agents that are similar (in that they both have the green beard). A connection between similarity-based cooperation and the green beard effect has been noted before by \citet{Howard1988}, \citet{Vesely2011}, and \citet{Martens2019}.

On second sight, however, we think that similarity-based cooperation as studied in our model is very different from the the green beard effect as an evolutionary phenomenon. To understand why, we take the gene-centered view of evolution \citep[e.g.][]{Dawkins1976}: we view evolution as a process that produces \textit{genes} that take measures to spread themselves (as opposed to a process that creates organism with particular properties). Now imagine that two individuals sharing the gene $\mathcal{G}$ play a Prisoner's Dilemma where they can either take 1 unit of a resource for themselves or give $G>1$ units of the resource to the other player. Then from the perspective of trying to spread $\mathcal{G}$, this is not a Prisoner's Dilemma at all! (This is assuming that both players can make similarly good use of the resources.) From the gene's perspective, it is a fully cooperative game wherein cooperation simply dominates defecting. This is the essence of how the green-beard effect works. Our model of SBC has no analogous mechanism or perspective.

As a result, the green beard effect allows for a very different set of outcomes than our folk theorem (\Cref{theorem:folk-theorem}). For example, imagine that organisms 1 and 2 both share $\mathcal{G}$, that organism 1 can either take 1 unit of a resource for itself or give $G>1$ units of the resource to organism 2, and that organism faces no choice that concerns organism 1. Then $\mathcal{G}$ would still have the organism give, because doing so increases the spread of the gene. Meanwhile, the unique outcome allowed by our folk theorem has organism 1 \textit{not} give. In other games, our folk theorem allows for many different equilibria of the meta game. In contrast, in a game played by two organisms who share $\mathcal{G}$, there will typically be a unique outcome that best spreads $\mathcal{G}$.

\subsection{Studies of human psychology on similarity-based cooperation}

In this section, we review empirical and theoretical work in psychology related to the phenomenon of similarity-based cooperation. 

Studies have shown that humans often cooperate in the one-shot Prisoner's Dilemma, contrary to what standard game theory recommends. Many explanations have been given for this phenomenon. A few authors have proposed explanations based on Newcomb's problem as discussed in \Cref{sec:rel-work}. That is, they have proposed that people cooperate in the Prisoner's Dilemma because cooperating gives them evidence that their opponent will defect \citep{Krueger2005,Krueger2012}. %
To test this hypothesis, a few studies have tried to measure the correlation in subjects' choice in Newcomb's problem and the Prisoner's Dilemma. The results have been mixed. \citet[Sect.\ 4.3]{Goldberg2005}, \citet{Tversky1992} have found strong correlations, while \citet{Toplak2002} and \citet[Sect.\ 4.2]{Goldberg2005} have failed to find such correlations.
Another prediction of the Newcomb's problem/SBC explanation for human cooperation in the one-shot Prisoner's Dilemma is that individuals are more likely to cooperate with similar than with dissimilar individuals. Indeed, a number of studies have found this to be the case \citep[e.g.][]{Acevedo2005,Fischer2009}. (Note that some other theories have been proposed for this phenomenon as well \citep[e.g.][]{Aksoy2015}.)

Another related idea in social psychology is \textit{homophily}, which refers to the observed tendency of humans to be more likely to engage in interactions with similar individuals. (Homophily cannot always be cleanly differentiated from the tendency to be more altruistic/cooperative toward similar individuals as discussed in the previous paragraph.) Homophily could also be explained by SBC, because SBC allows for better outcomes (e.g., mutual cooperation over mutual defection) when playing against similar individuals. We are not aware of any discussion connecting homophily with Newcomb's problem or the idea that it could be rational to cooperate in a Prisoner's Dilemma against a copy. The theory of the evolution of homophily seems to be that individuals with a similar cultural background can more effectively communicate (and coordinate) \citep{Fu2012}.
This is different from SBC as studied in the present paper.

\end{appendix}

\end{document}